\documentclass[journal]{IEEEtran}

\usepackage{amsmath,amsthm,amssymb}
\usepackage{algorithm}
\usepackage{float}      
\usepackage{placeins}   
\usepackage{booktabs}
\usepackage{array}
\usepackage{multirow}
\usepackage{xcolor}
\usepackage{tcolorbox}
\usepackage{hyperref}
\usepackage{cleveref}
\usepackage{needspace}
\usepackage{etoolbox}
\usepackage{tikz}

\usepackage{balance}    
\usepackage{needspace}  
\usepackage{etoolbox}   
\preto\section{\needspace{6\baselineskip}}
\preto\subsection{\needspace{4\baselineskip}}
\preto\subsubsection{\needspace{4\baselineskip}}

\usetikzlibrary{shapes,arrows,positioning,calc}

\preto\section{\needspace{2\baselineskip}}
\preto\subsection{\needspace{2\baselineskip}}
\preto\subsubsection{\needspace{2\baselineskip}}

\theoremstyle{definition}
\newtheorem{theorem}{Theorem}[section]
\newtheorem{lemma}[theorem]{Lemma}
\newtheorem{proposition}[theorem]{Proposition}
\newtheorem{corollary}[theorem]{Corollary}

\newtheorem{remark}[theorem]{Remark}    
\newtheorem{definition}{Definition}
\newtheorem{example}{Example}
\newtheorem{assumption}{Assumption}[section]

\crefname{theorem}{Theorem}{Theorems}
\crefname{lemma}{Lemma}{Lemmas}
\crefname{proposition}{Proposition}{Propositions}
\crefname{corollary}{Corollary}{Corollaries}
\crefname{observation}{Observation}{Observations}
\crefname{remark}{Remark}{Remarks}
\crefname{definition}{Definition}{Definitions}
\crefname{algorithm}{Algorithm}{Algorithms}
\crefname{table}{Table}{Tables}
\crefname{figure}{Figure}{Figures}

\newcommand{\conclusionplain}[1]{\begin{tcolorbox}[colback=white,boxrule=0pt,sharp corners]#1\end{tcolorbox}}

\begin{document}
\bstctlcite{IEEEexample:BSTcontrol}

\title{Deterministic Sparse FFT via Keyed Multi-View Gating with $O(\sqrt{N}\log k)$ Expected Time}

\author{Aaron~R.~Flouro and Shawn~P.~Chadwick,~PhD.\\
research@sparse-tech.com}

\maketitle
\thispagestyle{empty}

\renewcommand{\abstractname}{\normalfont\normalsize Abstract}
\begin{abstract}
\normalfont\normalsize
We introduce a novel keyed multi-view gating mechanism for sparse FFT that uses 2-of-3 Chinese Remainder Theorem (CRT) agreement to reduce candidate frequency pairs from $O(k^2)$ to $\Theta(k)$ under sparse-regime assumptions. Unlike prior approaches that rely on randomized bucketization for candidate formation, this gating formulation provides deterministic structure with probabilistic guarantees arising only from assumptions on frequency placement and independence of affine hashing across views.

The algorithm is implemented using a peeling-based recovery procedure, recursive bootstrapping eliminates the $O(\sqrt{N}\log N)$ FFT preprocessing floor, and a two-part verification test (Parseval energy plus bin-wise residual) detects fast-path failure with probability $\geq 1 - (2k/\sqrt{N})^t$ over $t$ verification views and triggers a deterministic $O(N \log N)$ dense-FFT fallback, yielding a bounded worst-case runtime, with no false negatives under correct verification and failure probability bounded by $(2k/\sqrt{N})^t$.
\end{abstract}

\begin{IEEEkeywords}
Sparse FFT, Chinese Remainder Theorem, Keyed Gating, Multi-View Reconstruction, Sublinear Algorithms
\end{IEEEkeywords}

\needspace{15\baselineskip}
\section{Introduction and Problem Statement}
\label{sec:intro}

\subsection{Motivation and Problem Setting}

The Fast Fourier Transform (FFT)~\cite{cooley1965} is one of the most widely used numerical primitives in scientific computing, with $O(N \log N)$ complexity dominating the computational cost in applications such as radar, medical imaging, channel estimation, and compressed sensing. In many such settings, however, the underlying signal is sparse in the frequency domain, with only $k \ll N$ significant Fourier coefficients~\cite{oppenheim1999}.

Sparse Fast Fourier Transform (sFFT) algorithms~\cite{hassanieh2012} exploit this structure to achieve sublinear-in-$N$ runtimes, but existing approaches face fundamental tradeoffs between efficiency, determinism, and generality.

\subsection{Core Bottleneck}

Despite significant progress, deterministic sparse FFT methods remain limited by two structural bottlenecks:
\begin{itemize}
\item an $O(\sqrt{N} \log N)$ preprocessing cost arising from multi-view decimation, or
\item an $O(k^2)$ candidate explosion resulting from unfiltered Chinese Remainder Theorem (CRT) reconstruction.
\end{itemize}

These bottlenecks prevent deterministic methods from achieving both sublinear runtime and broad applicability simultaneously.

\subsection{Prior Work}
\label{sec:related-work}

Probabilistic sFFT algorithms (Hassanieh et al.~\cite{hassanieh2012nearly,hassanieh2012simple} and subsequent refinements) achieve near-optimal $O(k \log N)$ complexity using randomized subsampling, hashing, and phase shifts, but provide only probabilistic guarantees.

CRT-based approaches such as FFAST~\cite{pawar2018ffast} achieve $O(k \log k)$ complexity with sample-optimal subsampling, but require structural constraints on the signal length (e.g., $N = k \cdot 2^t$), limiting applicability.

Deterministic approaches based on structured modular congruences (Iwen~\cite{iwen2013improved}; Plonka-Wannenwetsch et al.) eliminate randomization but typically incur either an $O(\sqrt{N} \log N)$ preprocessing cost or an $O(k^2)$ candidate enumeration.

A parallel signal-processing line addresses error-tolerant CRT reconstruction under noisy or erroneous remainders: Wang and Xia~\cite{wang2010closedform} established closed-form reconstruction methods with bounded error for shared-factor moduli, and Xiao, Xia, and Wang~\cite{xiao2020mdcrt} extended these results to multidimensional settings with matrix-valued moduli.

To our knowledge, prior work has not integrated multi-view gating with deterministic candidate reduction from $O(k^2)$ to $O(k)$ within the CRT framework, while simultaneously eliminating the preprocessing floor.

\subsection{Our Approach (High-Level)}

We present a sparse FFT framework that addresses both limitations simultaneously by combining:
\begin{itemize}
\item multi-view decimation under coprime moduli,
\item a keyed 2-of-3 gating mechanism that filters candidate frequencies,
\item iterative peeling validation operating entirely on decimated views, and
\item a self-reduction lemma enabling recursive bootstrapping.
\end{itemize}

The identification pipeline is deterministic given fixed parameters, while probabilistic guarantees arise only from assumptions on frequency placement and hashing-based independence.

\subsection*{Contributions}

\begin{enumerate}
\item \textbf{Keyed multi-view gating (core contribution):} A 2-of-3 CRT agreement mechanism that reduces candidate pairs from $O(k^2)$ to $\Theta(k)$ without randomized bucketization.
\item \textbf{Peeling-only realization (implementation):} A recovery procedure that extracts residue pairs directly from singleton bins, avoiding explicit pair enumeration.
\item \textbf{Recursive bootstrapping (complexity):} A self-reduction that removes the $O(\sqrt{N}\log N)$ preprocessing floor, yielding expected $O(\sqrt{N}\log k)$ runtime.
\item \textbf{Multi-view verification:} Additional independent views provide exponential error reduction and trigger deterministic dense-FFT fallback when needed.
\end{enumerate}

\subsection{Main Result}

Under the sparse regime $k = o(\sqrt{N})$ and standard assumptions on frequency distribution and load factor, the proposed framework achieves:
\begin{itemize}
\item $O(\sqrt{N} \log k)$ expected runtime for identification,
\item $O(N \log N)$ worst-case runtime via deterministic dense-FFT fallback.
\end{itemize}

The expectation is taken over the placement of the $k$ nonzero frequencies, not over internal randomness of the algorithm.

\subsection{Verification and Robustness}

A two-part verification procedure combining Parseval energy consistency and bin-wise residual checks detects fast-path failure with probability $\geq 1 - (2k/\sqrt{N})^t$ over $t$ independent verification views, triggering a deterministic dense-FFT fallback when needed. This ensures bounded worst-case complexity, and no false negatives for on-grid spectra under successful verification.

\subsection{Relation to Prior Work}

This work builds on the safety-certified CRT sparse FFT framework of Flouro and Chadwick~\cite{flouro2025safety}, which established $\Omega(k^2)$ lower bounds for non-coprime moduli, deterministic safety certificates, and $O(N \log N)$ worst-case guarantees via fallback.

The present work improves the identification phase from $O(\sqrt{N} \log N + kN)$ to $O(\sqrt{N} \log k)$ by replacing per-candidate validation with peeling, and eliminating the preprocessing floor via recursive bootstrapping.

\subsection{Scope and Assumptions}

The results hold in the on-grid sparse regime with $k \ll \sqrt{N}$, bounded load factor $\lambda = k/\sqrt{N} < 1$, and sufficiently well-distributed frequency support. Extensions to off-grid or noisy settings are discussed in later sections.

\subsection{Paper Organization}

Section~\ref{sec:overview} provides an intuitive overview of keyed gating, iterative peeling, and recursive bootstrapping. Section~\ref{sec:preliminaries} establishes notation and basic definitions. Section~\ref{sec:multi-view} develops the multi-view decimation and keyed gating machinery. Section~\ref{sec:crt} presents CRT reconstruction with coverage-adaptive moduli. Section~\ref{sec:complexity} gives the complexity analysis, including the self-reduction lemma. Section~\ref{sec:algorithm} specifies the algorithm, and Section~\ref{sec:correctness} proves correctness. Section~\ref{sec:safety} details the safety-certificate fallback, and Section~\ref{sec:conclusion} concludes. Appendices collect modular-arithmetic lemmas, detailed complexity derivations, the worked numerical example, and supplementary proofs.

\needspace{12\baselineskip}
\section{Technical Overview}
\label{sec:overview}

This section provides an intuitive explanation of our algorithm before the formal treatment in later sections. The algorithm is deterministic in execution; probabilistic guarantees arise from assumptions on frequency placement and hashing-based independence (\Cref{assump:v1}). The core contribution of this work is the keyed multi-view gating mechanism, which determines the candidate set; all subsequent steps (peeling, recursion, and verification) operate on this reduced set.

\subsection{The Core Challenge: Pair Explosion in Multi-View CRT}

Multi-view sparse FFT algorithms work by computing multiple "views" of the signal; each view is a decimated (subsampled) FFT of length $\sqrt{N}$ obtained by selecting every $m_i$-th time sample, where $m_i \approx \sqrt{N}$ are coprime moduli. Each true frequency $f$ hashes to a residue $r_i = f \bmod m_i$ in view $i$. The challenge is that multiple frequencies can hash to the same residue (collisions), creating bins with multiple occupants.

In a 2-view system with moduli $m_1$ and $m_2$, we observe residues $\mathcal{R}_1$ and $\mathcal{R}_2$ containing the true frequencies' projections. The Chinese Remainder Theorem (CRT) tells us that each pair $(r_1, r_2)$ with $r_1 \in \mathcal{R}_1$ and $r_2 \in \mathcal{R}_2$ reconstructs to a unique candidate frequency $f \in [0, N)$. However, when $k$ frequencies collide across two views with coverage factor $\alpha$ (average bin occupancy), we generate $|\mathcal{R}_1| \times |\mathcal{R}_2| \approx (\alpha k)^2 = O(k^2)$ candidate pairs. For realistic sparsity levels ($k \geq 50$) and coverage factors ($\alpha \approx 10$ to $15$), this produces hundreds of thousands of candidates when only $k=50$ are true, a $99.99\%$ false-positive rate matching the intro estimate.

\subsection{Limitations of 2-View CRT}

The root problem is that the two views provide \emph{insufficient constraints}. Each residue pair $(r_1, r_2)$ satisfies the modular equations $f \equiv r_1 \pmod{m_1}$ and $f \equiv r_2 \pmod{m_2}$, but these two equations alone cannot distinguish true frequencies from false combinations. Validating all $O(k^2)$ candidates by checking their coefficients in the original signal would require $O(k^2 N)$ work, worse than the dense FFT.

Prior approaches attempted to mitigate this through:
\begin{itemize}
\item \textbf{\emph{Multiple rounds with different moduli:}} Iteratively refine candidates, but still fundamentally limited by $O(k^2)$ scaling per round.
\item \textbf{\emph{Larger moduli:}} Using $m_i \gg \sqrt{N}$ reduces collisions but increases view size, losing the $O(\sqrt{N})$ preprocessing advantage.
\item \textbf{\emph{Probabilistic filtering:}} Random pruning of candidates, but requires multiple trials with no deterministic guarantees.
\end{itemize}

This motivates the question: can sparsity be exploited to achieve sub-linear complexity under a deterministic guarantee?

\subsection{Keyed Gating with Third View: $O(k^2) \to O(k)$ Reduction}

Our solution introduces a \emph{third independent view} with modulus $m_3$ and uses it as a \emph{filter} (``gate'') rather than as a third reconstruction dimension. The key insight is the 2-of-3 agreement principle: for a true frequency $f$, all three residues $(r_1, r_2, r_3)$ must be consistent with $f$ under CRT. For a false pair $(r_1, r_2)$ that reconstructs to some $f' \neq f$, the probability that $f' \bmod m_3$ happens to equal any occupied residue $r_3 \in \mathcal{R}_3$ is approximately $|\mathcal{R}_3|/m_3 \approx \alpha k/\sqrt{N} = \lambda$ (the load factor).

The keyed gating mechanism can be expressed conceptually as follows (used for analysis; the implemented algorithm avoids explicit pair enumeration):
\begin{enumerate}
\item Enumerate all $(r_1, r_2)$ pairs from views 1 and 2.
\item For each pair, reconstruct candidate $f' = \text{CRT}(r_1, r_2)$ using 2-view CRT.
\item Compute $r_3' = f' \bmod m_3$ (the "key" that $f'$ would need to match in view 3).
\item \textbf{\emph{Gate test:}} Accept $f'$ if and only if $r_3' \in \mathcal{R}_3$ (i.e., bin $r_3'$ is occupied in view 3).
\end{enumerate}

Under the assumption that affine hashing $(a_i f + b_i) \bmod m_i$ in each view provides independence, the probability that a false pair passes the gate is $\lambda = k/\sqrt{N}$. Thus, from $O(k^2)$ initial pairs, we expect only $O(k^2 \lambda) = O(k)$ survivors after gating. This reduces the candidate set by a factor of $k$ with a single gate operation: the core innovation enabling $O(k)$ complexity.

\subsection{Iterative Peeling for $O(\sqrt{N} \log k)$ Validation}

After keyed gating reduces candidates to $O(k)$, we must still \emph{validate} which candidates are true. Directly checking each candidate against the time-domain signal would cost $O(kN)$, which is unacceptable. Instead, we perform validation entirely on the decimated views through iterative peeling, a process analogous to successive interference cancellation in coding theory.

The peeling algorithm operates as follows:
\begin{enumerate}
\item Precompute $S \in \{2, 3\}$ time-shifted decimated FFTs per view (achieved via phase rotation).
\item Detect singleton bins (residues with exactly one frequency) using coefficient-of-variation (CV) tests across the shifted FFTs. A bin with one frequency shows consistent phase relationships across shifts, while multi-frequency bins exhibit phase interference.
\item For each detected singleton, \emph{peel} it: subtract its contribution from all bins across all views using the reconstructed frequency and coefficient.
\item Repeat for $O(\log k)$ rounds until all $k$ frequencies are extracted or a residual "2-core" remains (handled by fallback).
\end{enumerate}

Each bin update during peeling requires $O(1)$ arithmetic operations. With $O(\sqrt{N})$ bins per view and $O(\log k)$ rounds, the total peeling cost is $O(\sqrt{N} \log k)$. Under load factor $\lambda = k/\sqrt{N} < 1$ and independence assumptions from affine hashing, each round extracts a constant fraction of remaining frequencies with high probability.

\subsection{Moduli Selection Strategy}

Our coverage-adaptive moduli selection (Algorithm~2 in Section~6) chooses coprime moduli $m_1, m_2, m_3 \approx \sqrt{N}$ based on the sparsity ratio $\rho = k/\sqrt{N}$:
\begin{itemize}
\item \textbf{Sparse regime} ($\rho < 0.3$): Use three coprime \emph{primes} near $\sqrt{N}$ for maximum independence under affine hashing. Product $m_1 m_2 m_3 > N$ ensures unique CRT reconstruction.
\item \textbf{Moderate regime} ($0.3 \leq \rho < 0.5$): Use composite moduli with coprime factorizations (e.g., $m_1 = 2^a$, $m_2 = 3^b$, $m_3 = 5^c$) to balance coverage and computational efficiency.
\item \textbf{Dense regime} ($\rho \geq 0.5$): Fallback to dense FFT or hybrid algorithm (not covered in this paper).
\end{itemize}

The moduli are chosen such that their product $M = m_1 m_2 m_3 > N$, ensuring that CRT reconstruction is unique over the range $[0, N)$. Random affine hashing parameters $(a_i, b_i)$ are drawn independently per view to decorrelate collision patterns, critical for the statistical independence assumption underlying keyed gating. The sparsity ratio $\rho = k/\sqrt{N}$ defined here is distinct from the load factor $\lambda = k/m \approx \rho$ of \Cref{thm:gating}; both quantities coincide asymptotically when $m \approx \sqrt{N}$, but $\rho$ is the regime selector while $\lambda$ controls the per-bin collision probability under affine hashing.

\subsection{Complete Algorithm Flow}

Putting it all together, the algorithm separates into \emph{core algorithm phases} (the algorithmic ideas that drive the asymptotic bound) and \emph{implementation steps} (the supporting operations required to realize each phase on a concrete signal). The gating formulation is described for intuition and analysis; the implemented fast path uses the peeling-only realization of \Cref{sec:complexity-peeling} and does not explicitly enumerate $R_1 \times R_2$.

\textbf{\emph{Core algorithm phases:}}
\begin{enumerate}
\item \textbf{\emph{Keyed Gating (conceptual):}} Defines the $\Theta(k)$ candidate structure via 2-of-3 agreement; in the implemented algorithm, this structure is realized implicitly through peeling without explicit pair filtering.
\item \textbf{\emph{Iterative Peeling:}} Extract frequencies via singleton detection across decimated views (cost: $O(\sqrt{N} \log k)$).
\item \textbf{\emph{Candidate Validation:}} Verify extracted frequencies via two-part Parseval-plus-residual check on independent verification views.
\end{enumerate}

\textbf{\emph{Implementation steps:}}
\begin{enumerate}
\item \textbf{\emph{Moduli Selection:}} Choose $m_1, m_2, m_3$ and $(a_i, b_i)$ based on $(N, k)$.
\item \textbf{\emph{Zero-Padding:}} Extend signal to length $M$ if $N < M$.
\item \textbf{\emph{Decimation:}} Extract three decimated signals of length $m_i$.
\item \textbf{\emph{FFT:}} Compute size-$m_i$ FFT for each view (cost: $3 \times O(\sqrt{N} \log \sqrt{N})$).
\item \textbf{\emph{Top-$k$ Selection:}} Return the $k$ largest-magnitude coefficients.
\end{enumerate}

The baseline non-recursive implementation has complexity $O(\sqrt{N}\log N + (\alpha k)^2 + \sqrt{N}\log k)$, where the terms correspond to FFT preprocessing, explicit pair enumeration, and peeling-based recovery. This baseline decomposition is used for analysis. The implemented algorithm uses the peeling-only realization and recursive bootstrapping described in \Cref{sec:complexity}, eliminating both the $O((\alpha k)^2)$ pair-enumeration term and the $O(\sqrt{N}\log N)$ preprocessing floor. Thus, the implemented fast path achieves expected $O(\sqrt{N}\log k)$ identification complexity under \Cref{assump:v1}.

A worked numerical walkthrough of the algorithm on a toy instance ($N = 64$, $k = 2$), including the full 12-pair gating table, is provided in Appendix~\ref{app:numerical-example}.

\needspace{12\baselineskip}
\section{Preliminaries and Notation}
\label{sec:preliminaries}

\subsection{Discrete Fourier Transform and Sparsity}

\begin{definition}[Discrete Fourier Transform~\cite{oppenheim1999}]
For a discrete-time signal $x[n]$, $n = 0, 1, \ldots, N-1$, the Discrete Fourier Transform is defined as:
\begin{equation}
X[f] = \sum_{n=0}^{N-1} x[n] \cdot e^{-j2\pi fn/N}, \quad f = 0, 1, \ldots, N-1
\end{equation}
\end{definition}

\begin{definition}[$k$-Sparse Signal]
A signal $X$ is $k$-sparse if $|\text{supp}(X)| \leq k$, where $\text{supp}(X) = \{f : X[f] \neq 0\}$.
\end{definition}

\begin{definition}[On-Grid Frequencies]
\label{def:ongrid}
A frequency $f \in \{0, 1, \ldots, N-1\}$ is on-grid if it corresponds to an exact DFT bin with no spectral leakage.
\end{definition}

\textbf{\emph{Problem Statement:}} Given time-domain samples $x[n]$ and sparsity parameter $k \ll \sqrt{N}$, recover the $k$-sparse frequency-domain representation $X$ with $O(\sqrt{N} \log k)$ computational complexity.

\textbf{\emph{Notation:}} Throughout this paper, $\log$ denotes logarithm base 2 unless otherwise specified. All asymptotic complexity statements use $\log$ equivalently to $\log_2$, as the base is immaterial for big-O analysis.

\subsection{Information-Theoretic Lower Bound}

\begin{proposition}[Information-Theoretic Lower Bound~\cite{hassanieh2012}]
Any algorithm recovering an exactly $k$-sparse $N$-point DFT in the noiseless case must perform $\Omega(k \log(N/k))$ operations in the worst case under the algebraic decision tree model.
\end{proposition}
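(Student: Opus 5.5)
The plan is the standard counting argument in the algebraic decision tree model. The output of any recovery algorithm must determine the support $\mathrm{supp}(X)$, so it must distinguish among all admissible supports. I will restrict attention to a hard subfamily of inputs, namely signals whose spectrum is exactly $k$-sparse with a prescribed nonzero value (say $X[f]=1$) on the support. Then the time-domain signal $x$ determines the support uniquely by invertibility of the DFT. Any correct algorithm therefore induces an injective map from the $\binom{N}{k}$ supports to its leaves (output behaviours). Each leaf is reached by a set of inputs on which every branch test has the same sign pattern. If two different supports landed in the same leaf, the algorithm would output the same answer for both, which is a contradiction.

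The key steps are as follows.

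\emph{Step 1.} Establish that the number of distinguishable outputs is at least $\binom{N}{k}$. To handle the continuous-coefficient variant, one can instead use the fact that the set of exactly $k$-sparse spectra has at least $\binom{N}{k}$ connected components in $\mathbb{C}^N$, one per support, each a torus-free open piece of a coordinate subspace. Ben-Or's theorem then lower-bounds the depth of an algebraic decision tree by $\Omega(\log \#\text{components})$, minus a term linear in the dimension. Since that term is awkward here, I prefer the discrete subfamily above, where a simple leaf-count argument suffices.

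\emph{Step 2.} Bound the branching factor. Each comparison node has at most three outcomes ($<,=,>$), so a tree of depth $d$ has at most $3^d$ leaves. This gives $d \ge \log_3 \binom{N}{k}$.

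\emph{Step 3.} Apply the estimate $\binom{N}{k} \ge (N/k)^k$, which yields
\[
d \;\ge\; \frac{k\log(N/k)}{\log 3} \;=\; \Omega\bigl(k\log(N/k)\bigr).
\]
Since every operation along a root-to-leaf path is counted, the worst-case operation count is at least the depth, which completes the argument.

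The main obstacle is making Step 1 rigorous in the model. Arithmetic nodes do not branch, so only test nodes matter, and in general a leaf may be reached by non-sign-invariant sets. I would argue the contrapositive directly instead. Two inputs with different supports must receive different outputs. The output at a leaf is a fixed rational function of the input, evaluated identically along the path, so one leaf cannot output two different supports. This holds because the support is read off as the discrete data attached to the leaf. Hence the number of leaves must be at least $\binom{N}{k}$. The remaining care is to make sure the claim is stated as an output-size or decision lower bound rather than a sampling bound, which is exactly the form cited from \cite{hassanieh2012}.
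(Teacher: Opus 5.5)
Your argument is correct and is essentially the paper's own counting argument: both rest on the fact that an algorithm must distinguish all $\binom{N}{k}$ supports. Your leaf count $3^d \ge \binom{N}{k}$ together with the elementary estimate $\binom{N}{k} \ge (N/k)^k$ makes rigorous what the paper states informally via bit-counting and Stirling's approximation. One small correction: drop the claim that a fixed rational function at a leaf cannot output two different supports (it can, e.g.\ by interpolating over the finitely many hard inputs); what justifies the step is the decision-tree convention you then invoke, namely that the output is a discrete label attached to the leaf, and that convention is sufficient.
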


\begin{proof}
The support set has $\binom{N}{k}$ possible configurations. The information content is:
\begin{equation}
I = \log_2 \binom{N}{k} \approx k \log_2(N/k)
\end{equation}
using Stirling's approximation. Since any algorithm must distinguish between all possible configurations in the worst case, at least $\Omega(I)$ bit operations are required. In the algebraic decision tree model, this translates to $\Omega(k \log(N/k))$ arithmetic operations. Note: noisy signals may require higher complexity for reliable recovery.
\end{proof}

\subsection{Nomenclature}

\Cref{tab:notation} summarizes the notation used throughout the paper.

\begin{table}[h]
\centering
\renewcommand{\arraystretch}{1.3}
\caption{Key notation and parameters.}
\label{tab:notation}
\begin{tabular}{cl}
\toprule
\textbf{Symbol} & \textbf{Definition} \\
\midrule
$N$ & DFT size (signal length) \\
$k$ & Sparsity level (number of non-zero frequency bins) \\
$d_i$ & Decimation factor for view $i$ (stride in time domain) \\
$m_i$ & Modulus for view $i$: $m_i = N/d_i$ (decimated FFT size) \\
$M$ & Product modulus: $M = m_1 m_2 m_3$ \\
$\alpha$ & Coverage factor (default 15 for high-frequency signals) \\
$R_i$ & Top-$\alpha k$ residues from decimated view $i$ \\
$F$ & True frequency support set: $F = \{f : X[f] \neq 0\}$ \\
$\tau$ & Validation threshold for amplitude recovery \\
$\delta$ & Algorithm failure probability \\
\bottomrule
\end{tabular}
\\[0.5em]
\footnotesize{\textit{Note on terminology:} Decimating by factor $d$ means taking every $d$-th time-domain sample, yielding $m = N/d$ frequency bins in the decimated DFT. For sparse regime, $d \approx \sqrt{N}$ implies $m \approx \sqrt{N}$. Throughout this paper, we primarily reference the modulus $m$ rather than decimation factor $d$ to align with CRT reconstruction terminology.}
\end{table}

\subsection{Comparison of Sparse FFT Algorithms}

Table~\ref{tab:comparison} summarizes the theoretical complexity profile of representative sparse FFT algorithms across FFT size, gating mechanism, candidate-pair count, validation cost, and dominance region. The detailed crossover analysis appears in Section~\ref{sec:complexity}.

\begin{table*}[t]
\caption{Theoretical comparison of sparse FFT algorithms.}
\label{tab:comparison}
\centering
\renewcommand{\arraystretch}{1.3}
\begin{tabular}{@{}lccccc@{}}
\toprule
\textbf{Algorithm} & \textbf{FFT Size} & \textbf{Gating} & \textbf{Pairs} & \textbf{Validation Cost} & \textbf{Dominance Region} \\
\midrule
Dense FFT & $N$ & N/A & 0 & N/A & $k \geq \sqrt{N}$ \\
2-View CRT & $\sqrt{N}$ & None & $k^2$ & $O(k^2 N)$ & Never (bottlenecked) \\
HIKP-SODA \cite{hassanieh2012simple} & $k$ & Iterative & 0 & $O(k \log N)$ & $k < 30$ \\
FFAST \cite{pawar2018ffast} & $k$ & CRT peel & 0 & $O(k \log k)$ & $k < 100$, $N = k\cdot 2^t$ \\
\textbf{This Work (recursive)} & $\sqrt{N}$ & \textbf{2-of-3} & $k$ & $O(\sqrt{N} \log k)$ & $k \ll \sqrt{N}$ \\
\bottomrule
\end{tabular}
\end{table*}

\needspace{12\baselineskip}
\section{Multi-View Decimation and Keyed Gating}
\label{sec:multi-view}

This section introduces the central contribution of the paper: a keyed multi-view gating mechanism that deterministically reduces the candidate search space from $O(k^2)$ to $\Theta(k)$.

Throughout this section, and in all complexity and correctness statements that depend on sparsity, load factor, hashing independence, or frequency distribution, we operate under \Cref{assump:v1}.

The stronger condition $k = o(N^{1/4})$ is required for the $\Theta(k)$ candidate bound of keyed gating; the broader regime $k \ll \sqrt{N}$ applies to peeling and overall algorithm correctness.

\subsection{Decimation Theory}

\begin{definition}[$m$-Decimated DFT]
\label{def:decimated}
For integer $m \geq 2$ dividing $N$ with decimation factor $d = N/m$:
\begin{equation}
X_m[r] = \sum_{\ell \geq 0: 0 \leq r+\ell m \leq N-1} X[r + \ell m], \quad r = 0, 1, \ldots, m-1
\end{equation}

\textbf{\emph{Implementation Note:}} The decimated signal $x_m$ is obtained either by uniform subsampling with stride $d \approx N/m$ when $m \mid N$, or via equivalent zero-padding when $m \nmid N$. When using prime moduli $m_i \approx \sqrt{N}$ that do not divide $N$, the decimated view is obtained via zero-padding to $N' = \text{lcm}(m_1, m_2, m_3) \geq N$ or via approximate resampling with bounded spectral error $O(\|X\|_\infty / m)$ \cite{oppenheim1999}. For $m \approx \sqrt{N}$, this error is negligible relative to noise floor in practical settings.
\end{definition}

A frequency $f$ maps to residue $r = f \bmod m$ (\Cref{def:decimated}).

\begin{lemma}[Decimation Preserves Sparsity]
\label{lem:decimation_sparsity}
If $X$ is $k$-sparse, each decimated spectrum has at most $k$ non-zero bins.
\end{lemma}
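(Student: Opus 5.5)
The plan is a counting argument that works directly from \Cref{def:decimated}. Each decimated bin is a sum of original coefficients:
\[
X_m[r] = \sum_{\ell} X[r+\ell m].
\]
If $X_m[r]\neq 0$, at least one term in this sum must be nonzero. So some $f\in\text{supp}(X)$ satisfies $f\equiv r \pmod m$; that is, $r = f \bmod m$.

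This gives the inclusion
\[
\text{supp}(X_m) \subseteq \{ f \bmod m : f \in \text{supp}(X)\}.
\]
The right-hand side is the image of $\text{supp}(X)$ under the map $f\mapsto f \bmod m$. The image of a finite set under any function has cardinality at most that of the set. Hence $|\text{supp}(X_m)| \le |\text{supp}(X)| \le k$.

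The steps, in order, are:
\begin{enumerate}
\item Note that the index sets $\{r+\ell m : \ell \geq 0,\ 0\le r+\ell m\le N-1\}$ for $r=0,\dots,m-1$ partition $\{0,\dots,N-1\}$ into residue classes. Each frequency $f$ therefore contributes to exactly one bin, namely $r=f \bmod m$.
\item Apply the contrapositive: a bin receiving no support element is a sum of zeros, so it vanishes.
\item Bound the number of nonzero bins by the cardinality of the residue image of the support, which is at most $k$.
\end{enumerate}
Cancellation, where several support frequencies sum to zero in one bin, can only decrease the count. So the inequality holds regardless.

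The main subtlety concerns the implementation note, where $m\nmid N$ and views come from zero-padding to $N'$ or from resampling. For zero-padding, I would argue that the statement concerns the aliasing identity of \Cref{def:decimated} applied to the $k$-sparse spectrum on the relevant grid. The partition-into-residue-classes argument is then unchanged, because it uses only the structure of the sum, not $m\mid N$. I would explicitly restrict the lemma to the exact aliasing model of \Cref{def:decimated}. The approximate-resampling variant, with its $O(\|X\|_\infty/m)$ leakage, generally destroys exact sparsity and is not covered by the lemma.
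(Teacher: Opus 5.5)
Your proof is correct and follows the same argument as the paper: each support frequency $f$ contributes only to the bin $f \bmod m$, so at most $k$ bins can be nonzero, and collisions or cancellations only lower the count. You spell this out more rigorously through the inclusion $\mathrm{supp}(X_m)\subseteq\{f \bmod m : f\in\mathrm{supp}(X)\}$. Restricting the lemma to the exact aliasing model of the definition is the right call, and the zero-padding case is also outside it: genuinely zero-padding $x$ to length $N'$ generally produces a dense (Dirichlet-smeared) spectrum on the $N'$ grid, so that case is excluded just as approximate resampling is.
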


\begin{proof}
Each frequency maps to exactly one residue by modular arithmetic. With at most $k$ frequencies, there are at most $k$ non-zero residues. Collisions may occur (multiple frequencies mapping to the same residue) but do not increase the sparsity count.
\end{proof}

\subsection{Standing Assumptions}

The complexity and correctness guarantees developed in subsequent sections (notably \Cref{sec:correctness} and \Cref{sec:complexity}) rely on the following standing assumptions on the signal model and the hashing primitives used by the algorithm.

\begin{assumption}[Sparsity, Load Factor, and Hashing Independence]
\label{assump:v1}
We assume throughout that
(i) the signal $x[n]$ is exactly $k$-sparse in the frequency domain with $k = o(\sqrt{N})$;
(ii) the load factor satisfies $\lambda = k/\sqrt{N} < 1$;
(iii) the affine hashing parameters $(a_i, b_i)$ used across the three decimated views are sampled independently across views from the universal family $\{(a, b) : a \in [1, m_i - 1],\ b \in [0, m_i - 1]\}$ over $\mathbb{Z}_{m_i}$ for prime $m_i \approx \sqrt{N}$; and
(iv) the nonzero frequencies are sufficiently well-distributed in $\{0, 1, \ldots, N-1\}$ so that no $\Theta(\sqrt{N})$-sized contiguous block contains $\omega(k/3)$ of them.
\end{assumption}

\subsection{Three-View Decimation Structure}

\textbf{\emph{Configuration:}}
\begin{itemize}
\item Signal length: $N$
\item Three decimation factors: $d_1, d_2, d_3 \approx \sqrt{N}$ (pairwise coprime)
\item Moduli: $m_1 = N/d_1$, $m_2 = N/d_2$, $m_3 = N/d_3$
\item Reconstruction range: $M = m_1 \times m_2 \times m_3 \geq N$
\end{itemize}

\begin{theorem}[Coprime Decimation Requirement]
\label{thm:coprime}
For unique frequency reconstruction in $[0, N)$, require $M = m_1 \times m_2 \times m_3 \geq N$ with $\gcd(m_i, m_j) = 1$ for all $i \neq j$.
\end{theorem}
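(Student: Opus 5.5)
The plan is to read \Cref{thm:coprime} as a sufficiency statement: if the moduli are pairwise coprime and $M = m_1 m_2 m_3 \geq N$, then the residue map
\[
\phi : [0,N) \to \mathbb{Z}_{m_1}\times\mathbb{Z}_{m_2}\times\mathbb{Z}_{m_3}, \qquad f \mapsto (f \bmod m_1,\ f \bmod m_2,\ f \bmod m_3),
\]
is injective. Injectivity means every frequency in $[0,N)$ is uniquely determined by its three residues. I will add a brief remark on why each hypothesis cannot be dropped, since that is what justifies calling the conditions ``required''.

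\textbf{Step 1 (CRT bijection).} I use the standard Chinese Remainder Theorem for pairwise coprime moduli. The ring map $\mathbb{Z}_M \to \prod_i \mathbb{Z}_{m_i}$ is a bijection. Its inverse is the explicit formula $f = \sum_i r_i M_i (M_i^{-1} \bmod m_i) \bmod M$ with $M_i = M/m_i$. Here pairwise coprimality is exactly what guarantees that each $M_i$ is invertible modulo $m_i$.

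\textbf{Step 2 (injectivity on $[0,N)$).} Suppose $f, f' \in [0,N)$ have $\phi(f)=\phi(f')$. Then $m_i \mid f-f'$ for each $i$. Pairwise coprimality upgrades this to $M \mid f-f'$. But $|f-f'| < N \le M$, so $f = f'$. The CRT inverse of Step 1, reduced into $[0,M)$, therefore recovers $f$ exactly.

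\textbf{Step 3 (necessity remarks).} Without coprimality, the residue map only determines $f$ modulo $L=\mathrm{lcm}(m_1,m_2,m_3)$. If some $\gcd(m_i,m_j)>1$, then $L<M$, and when $L<N$ the frequencies $0$ and $L$ share all residues. This is the sense in which coprimality is required to use the full range $M$. Likewise, if $M<N$, then $f=0$ and $f=M$ both lie in $[0,N)$ and are indistinguishable, so $M \ge N$ is needed.

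The main obstacle is not mathematical but interpretive. The phrase ``require'' is ambiguous: strict necessity of coprimality fails in general, since non-coprime moduli with $L \ge N$ also give uniqueness. So I will state the theorem as sufficiency plus the lcm-based necessity remark, rather than overclaim.

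A second subtlety concerns \Cref{def:decimated}, where $m_i \mid N$ conflicts with pairwise coprime $m_i \approx \sqrt{N}$ having product at least $N$. I will sidestep this by working on the zero-padded length as in the Implementation Note. Residues are defined for every $f$ regardless of divisibility, so the argument above is unaffected.
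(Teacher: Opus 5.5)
Your proposal is correct and takes essentially the same route as the paper: both rest on CRT uniqueness modulo $M$ for pairwise coprime moduli together with $[0,N)\subseteq[0,M)$, and your divisibility argument in Step 2 simply spells out the uniqueness half of the CRT that the paper cites. Your necessity remarks are also sound and somewhat sharper than the paper, whose ``if and only if'' refers to uniqueness in $[0,M)$; you correctly observe that injectivity on $[0,N)$ needs only $\mathrm{lcm}(m_1,m_2,m_3)\ge N$, so coprimality is sufficient but not strictly necessary, whereas $M \ge N$ is genuinely necessary.
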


\begin{proof}
By the Chinese Remainder Theorem, the system of congruences:
\begin{align}
f &\equiv r_1 \pmod{m_1} \\
f &\equiv r_2 \pmod{m_2} \\
f &\equiv r_3 \pmod{m_3}
\end{align}
has a unique solution in $[0, M)$ if and only if $\gcd(m_i, m_j) = 1$ for all $i \neq j$. For $M \geq N$, uniqueness in $[0, N)$ is guaranteed. For $\sqrt{N}$ decimations, $M \approx (\sqrt{N})^3 = N^{3/2} \gg N$, satisfying the requirement.
\end{proof}

\begin{example}[Concrete Decimation Selection]
For $N = 1,000,000$:
\begin{itemize}
\item Select coprime primes near $\sqrt{N} \approx 1000$: $m_1 = 1009$, $m_2 = 1003$, $m_3 = 991$
\item CRT reconstruction range: $M = m_1 \times m_2 \times m_3 = 1,003,081,477 > N$
\item $\text{lcm}(m_1, m_2, m_3) = M \approx 10^9 > 2N$, so use approximate resampling
\item Decimation factors: $d_1 \approx N/m_1 \approx 991$, $d_2 \approx 997$, $d_3 \approx 1009$
\item Spectral error from resampling: $O(\|X\|_\infty / \sqrt{N}) \ll$ typical noise floor
\end{itemize}
Reconstruction uniqueness is guaranteed for frequencies in $[0, N)$.
\end{example}

\conclusionplain{The 3-view CRT framework with coprime decimations near $\sqrt{N}$ guarantees unique frequency reconstruction for all $N$-point DFT bins, providing the mathematical foundation for multi-view sparse FFT algorithms.}

\subsection{The Keyed Gating Mechanism}

\textbf{\emph{Core Insight:}} Use the third view $d_3$ as a \emph{gate} to filter candidate pairs from views $d_1$ and $d_2$.

\begin{definition}[2-of-3 CRT Agreement]
\label{def:2of3}
A residue pair $(r_1, r_2)$ from views 1 and 2 is \emph{gated} if:
\begin{equation}
\hat{r}_3 = \text{CRT}(r_1 \bmod m_1, r_2 \bmod m_2) \bmod m_3 \in R_3
\end{equation}
where $\text{CRT}(r_1, r_2)$ denotes the unique frequency $f \in [0, m_1 m_2)$ satisfying $f \equiv r_1 \pmod{m_1}$ and $f \equiv r_2 \pmod{m_2}$, computed via 2-view Garner's algorithm.
\end{definition}

The gate separates as a function of $r_1$ alone and $r_2$ alone via the view-3 residue lookup $\hat{r}_3 \in R_3$, which is what lets the test run in $O(k)$ per view rather than $O(k^2)$ over the full pair enumeration.

\needspace{15\baselineskip}
\begin{theorem}[Central Result: Candidate Reduction via Keyed 2-of-3 CRT Agreement]
\label{thm:gating}
Under \Cref{assump:v1}, the algorithm operates in the sparse regime $k \ll \sqrt{N}$, while the stronger gating-analysis condition $k = o(N^{1/4})$, equivalently $k^3/\sqrt{N} = o(k)$, ensures that keyed 2-of-3 CRT agreement reduces the surviving candidate set to $\Theta(k)$ in expectation. \emph{Setup:} let $m_1, m_2, m_3$ be pairwise coprime moduli with $m_i \approx c_i \sqrt{N}$ for constants $c_i$, and assume independent affine hashing across the three views. Suppose $k = o(N^{1/4})$ true frequencies with uniformly random support, and each view yields $|R_i| = \alpha_i k$ residues with $\alpha_i = \Theta(1)$ and negligible collisions. Then the 2-of-3 CRT gate that, for each pair $(r_1, r_2)$, predicts $\hat{r}_3 := \text{CRT}(r_1 \bmod m_1, r_2 \bmod m_2) \bmod m_3$ and retains the pair only if $\hat{r}_3 \in R_3$, produces an expected total of $\Theta(k)$ surviving pairs: $k$ true plus $o(k)$ false. The analysis assumes approximate independence of residue assignments across views, formalized via affine hashing in \Cref{thm:adaptive-2of3}.
\end{theorem}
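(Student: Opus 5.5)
The proof splits the $|R_1|\cdot|R_2| = \alpha_1\alpha_2 k^2$ pairs into two classes and bounds each separately by linearity of expectation. \textbf{True pairs} are those $(r_1,r_2) = (f \bmod m_1, f \bmod m_2)$ for some $f \in F$. \textbf{False pairs} are all the others.

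\textbf{True pairs.} By the negligible-collision hypothesis and \Cref{lem:decimation_sparsity}, each $f\in F$ occupies a distinct residue in every view, so there are exactly $k$ true pairs. For such a pair, $\text{CRT}(r_1,r_2) \equiv f \pmod{m_1m_2}$. Since $m_1m_2 \approx c_1c_2 N$, we can take $m_1 m_2 \ge N$ by the choice of constants. Then $\text{CRT}(r_1,r_2)=f$, hence $\hat r_3 = f \bmod m_3 \in R_3$. So every true pair passes the gate deterministically, which gives the ``$k$ true'' part of the count with no false negatives. (If $m_1m_2<N$ were allowed, one would instead argue via \Cref{thm:coprime} with all three moduli; I will fix the constants so that $c_1c_2\ge 1$.)

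\textbf{False pairs.} Here I use the affine-hashing independence of \Cref{assump:v1}(iii). Take a false pair $(r_1,r_2)$ with $r_1$ coming from $f$ and $r_2$ coming from $g\neq f$. The candidate $f' = \text{CRT}(r_1,r_2)$ is determined by the view-1 and view-2 data alone. The view-3 residues are produced by an independently drawn $(a_3,b_3)$, so the probability over this draw that $f' \bmod m_3$ lands in the occupied set $R_3$ is $|R_3|/m_3 = \alpha_3 k/(c_3\sqrt N)$. Because the hash family is universal over $\mathbb{Z}_{m_3}$ with $m_3$ prime, for fixed $f' \notin F$ each collision $h_3(f') = h_3(f'')$ with $f''\in F$ has probability at most $1/m_3$. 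A union bound over the $\alpha_3 k$ occupied bins then gives the same estimate rigorously. Multiplying by the $\le \alpha_1\alpha_2 k^2$ false pairs bounds the expected number of false survivors by
\begin{equation*}
\frac{\alpha_1\alpha_2\alpha_3}{c_3}\cdot\frac{k^3}{\sqrt N} = O\!\left(\frac{k^3}{\sqrt N}\right),
\end{equation*}
which is $o(k)$ exactly when $k^2 = o(\sqrt N)$, i.e.\ $k=o(N^{1/4})$. This is the equivalence stated in the theorem. Adding the two classes gives $k + o(k) = \Theta(k)$ expected survivors. The lower bound $\ge k$ holds deterministically.

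\textbf{Main obstacle.} The delicate step is the independence claim. The gate computes $f' \bmod m_3$ arithmetically rather than through a random hash, and a false $f'$ is a deterministic function of true frequencies $f,g$ whose residues also populate $R_3$. So one has to argue that $f' \bmod m_3$ is not correlated with the residues of $F$ in view~3. I would first try to exploit the uniformly random support: conditioned on $f \bmod m_1$ and $g \bmod m_2$, the quotients $\lfloor f/m_1\rfloor$ and $\lfloor g/m_2\rfloor$ remain nearly uniform. Consequently $f \bmod m_3$ and $g \bmod m_3$ are nearly uniform and nearly independent of $f'$, up to $O(1/\sqrt N)$ error. That yields $\Pr[f'\bmod m_3 = f''\bmod m_3] \le (1+o(1))/m_3$ for each $f''\in F$. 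Failing that, I would fall back on the paper's stated modelling assumption, formalized through the affine hashing of \Cref{thm:adaptive-2of3}, and treat view-3 residues as independent of the view-1/2 pair.
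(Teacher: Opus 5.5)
Your decomposition matches the paper's proof. True pairs pass deterministically, each false pair passes with probability of order $\alpha_3k/m_3$, and linearity over $\approx\alpha_1\alpha_2k^2$ pairs gives $O(k^3/\sqrt N)=o(k)$ exactly when $k=o(N^{1/4})$. In the true-pair step you are more careful than the paper. Its claim $f_{12}\equiv f\pmod{m_3}$ holds only when $f_{12}=f$, i.e.\ $f<m_1m_2$, so your explicit assumption $m_1m_2\ge N$ is genuinely required. Your parenthetical alternative via \Cref{thm:coprime} would not work: the gate only ever sees $f\bmod m_1m_2$, so it can reject a true $f\ge m_1m_2$.

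\textbf{The step you call rigorous in the false-pair paragraph fails.} For prime $m_3$ and $a_3\neq0$, $h_3(f')=h_3(f'')$ iff $a_3(f'-f'')\equiv0\pmod{m_3}$ iff $f'\equiv f''\pmod{m_3}$. So the collision probability over $(a_3,b_3)$ is $0$ or $1$, never ``at most $1/m_3$''. An affine map is a permutation of $\mathbb{Z}_{m_3}$; it only relabels bins and cannot randomize the arithmetic event $\hat r_3\in R_3$.

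\textbf{The probability must come from the random support.} This is what the paper's ``Key Property'' uses: the CRT output is near-uniform on $\mathbb{Z}_{m_1m_2}$, hence near-uniform mod $m_3$. Your obstacle-paragraph argument is the right one and should become the main proof. It even handles the correlation of $\hat r_3$ with $f\bmod m_3$ and $g\bmod m_3$, which the paper's marginal statement glosses over. Concretely, condition on $g$ and on $r_1=f\bmod m_1$. Then the quotient $q=\lfloor f/m_1\rfloor$ is, up to one excluded value, uniform on a range of length $L\approx N/m_1=\Theta(m_3)$. The event $f\equiv f'\pmod{m_3}$ forces $q\equiv m_1^{-1}(f'-r_1)\pmod{m_3}$, so its probability is at most $\lceil L/m_3\rceil/L=O(1/m_3)$. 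This is not $(1+o(1))/m_3$ in general, but it is enough. The same bound holds for $g$, and directly for every other $f''\in F$.

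\textbf{Pairs with spurious residues are not covered.} The argument above handles only cross pairs $(f\bmod m_1,g\bmod m_2)$ with $f,g\in F$. The remaining pairs involve the $(\alpha_i-1)k$ residues per view that come from no true frequency, and random support says nothing about where those lie. For example, with low-index tie-breaking, every $(r,r)$ with small $r$ lies in $R_1\times R_2$ with $r\in R_3$, reconstructs to $r$, and so passes the gate. You must therefore state an explicit model for these residues, such as uniform and independent across views. The paper's proof makes this assumption implicitly.
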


\begin{proof}
\textbf{\emph{Part A (True Pairs):}} For each true frequency $f \in \text{supp}(X)$:
\begin{itemize}
\item $r_1 = f \bmod m_1$ appears in $R_1$
\item $r_2 = f \bmod m_2$ appears in $R_2$
\item $r_3 = f \bmod m_3$ appears in $R_3$
\end{itemize}

By CRT consistency:
\begin{align}
&f \equiv r_1 \pmod{m_1} \text{ and } f \equiv r_2 \pmod{m_2} \notag\\
&\qquad\implies f \equiv r_3 \pmod{m_3}
\end{align}

The gating mechanism predicts $\hat{r}_3$ using 2-view Garner CRT:
\begin{equation}
\hat{r}_3 = \left( r_1 + m_1 \cdot [(r_2 - r_1) \cdot m_1^{-1} \bmod m_2] \right) \bmod m_3
\end{equation}

For true frequency $f$ with residues $(r_1, r_2, r_3)$, the 2-view CRT reconstruction yields $f_{12} = f \bmod (m_1 m_2)$. By CRT consistency, $f_{12} \equiv f \pmod{m_3}$, therefore $\hat{r}_3 = f_{12} \bmod m_3 = f \bmod m_3 = r_3$.

The gating condition is satisfied deterministically. All $k$ true pairs pass the gate.

\textbf{\emph{Part B (False Pair Suppression):}} Consider a false residue pair $(r_1^f, r_2^f)$ where $r_1^f \in R_1$ and $r_2^f \in R_2$ are spurious residues not corresponding to any true frequency.

Compute $\hat{r}_3 := \text{CRT}(r_1^f \bmod m_1, r_2^f \bmod m_2) \bmod m_3$ using 2-view Garner's algorithm:
\begin{equation}
\hat{r}_3 = \left( r_1^f + m_1 \cdot [(r_2^f - r_1^f) \cdot m_1^{-1} \bmod m_2] \right) \bmod m_3
\end{equation}

\textbf{\emph{Key Property:}} Intuitively, the 2-view CRT map from residue pairs $(r_1, r_2)$ to frequencies $f \in [0, m_1 m_2)$ is one-to-one because pairwise coprime moduli guarantee the system $f \equiv r_1 \pmod{m_1} \wedge f \equiv r_2 \pmod{m_2}$ has exactly one solution per pair. Under the random support assumption and coprimality of $m_1, m_2, m_3$, the CRT reconstruction modulo $m_1 m_2$ is a bijection. For false pairs arising from distinct uniformly random frequencies, the induced variable $U \in \mathbb{Z}_{m_1 m_2}$ has approximately uniform distribution over $\{0, 1, \ldots, m_1 m_2-1\}$. Then $U \bmod m_3$ is within total variation distance $\leq 1/m_3$ from uniform on $\mathbb{Z}_{m_3}$, giving approximate independence from $R_3$.

Therefore, for a false pair:
\begin{align}
P(\text{pass gate}) &= P(\hat{r}_3 \in R_3) \notag\\
&= \frac{\alpha_3 k}{m_3} \pm O\left(\min\left(\frac{k}{m_1 m_2}, \frac{m_3}{m_1 m_2}\right)\right) \notag\\
&= \frac{\alpha_3 k}{m_3} \pm O\left(\min\left(\frac{k}{N}, \frac{1}{\sqrt{N}}\right)\right)
\end{align}

\textbf{\emph{Expected False Pairs:}} The number of false pairs without gating is:
\begin{equation}
|R_1| \times |R_2| - k = \alpha_1 k \times \alpha_2 k - k \approx \alpha_1 \alpha_2 k^2
\end{equation}

By linearity of expectation, with tighter $L_\infty$ bound $|P(Y \in A) - |A|/m_3| \leq |A|/(m_1 m_2)$ for acceptance set $A = R_3$:
\begin{align}
\mathbb{E}[\text{false pairs}] &= (\alpha_1 \alpha_2 k^2) \times \left(\frac{\alpha_3 k}{m_3} \pm O\left(\frac{k}{m_1 m_2}\right)\right) \notag\\
&= \frac{\alpha_1 \alpha_2 \alpha_3 k^3}{m_3} \pm O\left(\frac{\alpha_1 \alpha_2 k^3}{m_1 m_2}\right)
\end{align}

For $m_3 \approx c_3 \sqrt{N}$, $m_1 m_2 \approx N$, and $k = o(N^{1/4})$:
\begin{equation}
\mathbb{E}[\text{false pairs}] = \frac{\alpha_1 \alpha_2 \alpha_3 k^3}{c_3 \sqrt{N}} \pm O\left(\frac{k^3}{N}\right) = o(k)
\end{equation}

The condition $k^3/\sqrt{N} = o(k)$ requires $k^2 = o(\sqrt{N})$, hence $k = o(N^{1/4})$.

\textbf{\emph{Part C (Total Candidate Count):}}
\begin{equation}
\text{Total candidates} = k + o(k) = \Theta(k) \quad \text{when } k = o(N^{1/4})
\end{equation}

This theorem bounds the number of surviving candidates after gating, not the cost of explicitly enumerating all residue pairs. The implemented peeling-only realization in \Cref{sec:complexity-peeling} avoids explicit pair enumeration and recovers residue pairs directly from singleton bins.
\end{proof}

\begin{remark}
The three-view structure is what makes the 2-of-3 gate work. With only two views, every residue pair reconstructs to a valid candidate in $[0, m_1 m_2)$, leaving no redundancy to distinguish true from false. The third view acts as a consistency check that a randomly drawn false frequency satisfies only with probability $|\mathcal{R}_3| / m_3 = O(k / \sqrt{N})$, producing the $\Theta(k)$ gate output in expectation. Adding a fourth view would reduce false positives further but at the cost of an additional $O(\sqrt{N} \log N)$ FFT; three views is the minimum that yields the $\Theta(k^{2}) \to \Theta(k)$ reduction.
\end{remark}

\conclusionplain{A single gate pass reduces the expected number of surviving candidates from $\Theta(k^2)$ to $\Theta(k)$ under the sparse-regime and independence assumptions. The 2-of-3 CRT agreement eliminates the $O(k^2)$ pair explosion inherent in unfiltered 2-view approaches, reducing the candidate set from $\Theta(k^2)$ to $\Theta(k)$ while preserving all true frequencies. This bound holds for the sparse regime when $k = o(N^{1/4})$ with $m \approx \sqrt{N}$. For the moderate regime with adaptive moduli $m \approx 10k \log k$, the load factor $\lambda = k/m \approx 1/(10 \log k) < 0.1$ maintains the $\Theta(k)$ bound.}

\begin{remark}
A third view used as a \emph{filter} (not as a reconstruction dimension) exploits the 2-of-3 CRT agreement principle: a true pair \((r_1, r_2)\) always reconstructs to a frequency \(f\) whose third residue \(r_3 = f \bmod m_3\) lies in \(R_3\); a false pair does so only with probability \(\approx |R_3|/m_3 = \Theta(k/\sqrt{N})\). A single gate pass therefore collapses \(O(k^2)\) candidates to \(O(k)\) without additional randomization or enumeration.
\end{remark}

\begin{theorem}[Adaptive 2-of-3 Agreement with Affine Hashing]
\label{thm:adaptive-2of3}
Let $m_1, m_2, m_3$ be coprime primes selected by the coverage-adaptive moduli procedure of \Cref{sec:algorithm},
and let each view use independent affine hashing $(a_i \cdot f + b_i) \bmod m_i$.
Define the load factor $\lambda = k / m$ where $m = \min(m_1, m_2, m_3)$.

\needspace{4\baselineskip}
\textbf{Sparse Regime} ($\lambda < 0.1$):
With singleton-only joins (excluding multi-occupancy bins), the expected number of
surviving pairs after 2-of-3 gating is $\Theta(k)$ with probability $\ge 1 - e^{-\Omega(k)}$.
This reduces the unfiltered $O(k^2)$ complexity to $O(k \log N)$ for CRT reconstruction and validation.

\textbf{Moderate Regime} ($0.1 \le \lambda < 0.5$):
With adaptive moduli $m \approx 10k \log k$, we have $\lambda \approx \frac{1}{10 \log k} < 0.1$.
The $\Theta(k)$ bound is maintained with high probability, requiring at most $O(1)$
rehash rounds for saturated bins in expectation.
\end{theorem}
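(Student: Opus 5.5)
The plan is to reduce the statement to \Cref{thm:gating} by replacing the informal ``uniformly random support'' assumption with the randomness of the affine hash parameters from \Cref{assump:v1}(iii). The argument splits into a sparse-regime part and a moderate-regime part; the moderate part then reduces to the sparse one.

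\textbf{Sparse regime.} I would first fix the support $F$ with $|F| = k$ and treat only the hash parameters $(a_i,b_i)$ as random, independent across views.

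Because $m_i$ is prime and $a_i \neq 0$, the map $h_i(f) = (a_i f + b_i) \bmod m_i$ is pairwise independent on distinct residues $f \bmod m_i$. Under the well-distributedness condition \Cref{assump:v1}(iv), distinct frequencies collide mod $m_i$ only rarely.

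Standard balls-into-bins bounds then apply. The expected number of non-singleton bins in view $i$ is at most $\binom{k}{2}/m_i \le \lambda k/2$. So with $\lambda < 0.1$ at least $0.95k$ frequencies are singletons in each view in expectation. Since the occupancy count is a function of $k$ independent placements with bounded differences (each frequency changes the singleton count by at most $2$), McDiarmid's inequality gives concentration with probability $1 - e^{-\Omega(k)}$.

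Restricting the join to singleton bins, $|R_i| \le k$, so the number of false pairs entering the gate is at most $k^2$. Each false pair $(r_1,r_2)$ reconstructs to some $f' \notin F$. Independence of view 3's hash from views 1 and 2 then gives $P(h_3(f') \in R_3) \le k/m_3 + O(1/m_3) \le \lambda + o(1)$. This is the same computation as Part B of \Cref{thm:gating}, with the hash randomness replacing the support randomness.

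True pairs pass deterministically by Part A of \Cref{thm:gating}. Hence
\[
\mathbb{E}[\text{survivors}] \le k + \lambda k^2 .
\]

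Here lies the main obstacle. The bound $\lambda k^2$ is $\Theta(k)$ only if $\lambda = O(1/k)$, which is the $k = o(N^{1/4})$ condition of \Cref{thm:gating}. It is not implied by $\lambda < 0.1$ alone.

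I would try to close this gap by exploiting the structure of singleton-only joins. Each true frequency contributes exactly one singleton to each view, and peeling pairs residues through the shared singleton signature (the shifted-FFT phase test) rather than through all of $R_1 \times R_2$. Under that realization the relevant false candidates are $O(k)$ per round, and each survives with probability at most $\lambda$, giving $O(\lambda k) = o(k)$ false survivors.

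If this refinement does not go through, I would state the sparse-regime conclusion as $\Theta(k)$ under the additional hypothesis $\lambda k = O(1)$, consistent with \Cref{thm:gating}.

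For the high-probability statement, I would bound the number of false survivors with Chernoff. Conditioned on views 1 and 2, the indicators for distinct $f'$ are only pairwise independent. I would therefore use a second-moment (Chebyshev) bound, or alternatively the bounded-differences inequality over the $k$ hash placements in view 3.

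The $O(k \log N)$ cost claim then follows directly. Each surviving candidate costs $O(\log N)$ for Garner CRT on numbers of size $O(N)$, plus $O(1)$ table lookups.

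\textbf{Moderate regime.} Choosing $m \approx 10k\log k$ gives $\lambda = k/m \approx 1/(10\log k) < 0.1$, so the sparse-regime analysis applies verbatim with $m$ in place of $\sqrt{N}$. The only caveat is that CRT uniqueness now requires $m_1 m_2 m_3 \ge N$ via \Cref{thm:coprime}; if this fails, extra views would be needed.

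A bin is saturated when its occupancy exceeds a constant threshold $c$. By the Poisson approximation, a bin has such occupancy with probability $e^{-\Omega(c)}$ at load $\lambda < 0.1$, so each rehash round succeeds with constant probability. The number of rounds is therefore geometric, with $O(1)$ rounds in expectation.
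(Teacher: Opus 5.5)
Your outline follows the paper's own proof closely: universal affine hashing, concentration of singleton counts, a singleton-only join whose false pairs survive with the probability $\alpha_3 k/m_3$ imported from \Cref{thm:gating}, and then load reduction for the leftovers. You also correctly isolate the step the paper passes over. The paper asserts that at most $k_s^2$ joined pairs, each surviving with probability $\alpha_3 k/m_3$, yield $\Theta(k)$ candidates, but that product is $\Theta(\lambda k^2)$. Neither of your repairs closes this. The peeling refinement bounds a different quantity: peeling never forms the gated join, so it says nothing about pairs surviving 2-of-3 gating, and its key claim of $O(k)$ false candidates per round is asserted, not derived. In fact no repair can work under the stated hypotheses. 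For a false pair built from $f_a \neq f_b$, the reconstruction $f' = \mathrm{CRT}(f_a \bmod m_1, f_b \bmod m_2)$ depends only on $f_a, f_b$. Meanwhile $R_3$ contains $f_c \bmod m_3$ for $\Theta(k)$ other, independently placed frequencies. So under random support each false pair survives with probability $\Omega(k/m_3)$, and the expected number of false survivors is $\Theta(\lambda k^2)=\Theta(k^2)$ for constant $\lambda<0.1$. Your fallback hypothesis $\lambda k = O(1)$, which for $m\approx\sqrt N$ is essentially the $k=o(N^{1/4})$ condition of \Cref{thm:gating}, is genuinely required. The moderate regime inherits the problem. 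With $m \approx 10k\log k$ one gets $\lambda k \approx k/(10\log k)\to\infty$ and $\Theta(k^2/\log k)$ expected survivors, so applying the sparse analysis ``verbatim'' does not give $\Theta(k)$. Your caveat that $m_1m_2m_3\ge N$ may fail there is correct, and the paper omits it.

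There is a second, independent gap in your plan to fix $F$ and let only $(a_i,b_i)$ be random. For prime $m_i$ and $a_i\not\equiv 0$, $h_i(f)=h_i(f')$ iff $m_i \mid f-f'$, regardless of $(a_i,b_i)$. With $F$ fixed the collision pattern is therefore deterministic, so McDiarmid has nothing random to act on. Rehashing with fresh parameters merely relabels bins, so your geometric count of rehash rounds in the moderate regime also fails. Assumption (iv) does not supply the missing control: $F=\{j m_1 : 0\le j<k\}$ satisfies it yet puts every frequency in one bin of view 1. The gate has the same defect. Applied to hashed bin indices, $\mathrm{CRT}(h_1(f),h_2(f))\neq f$ in general, so true pairs no longer pass deterministically. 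If you de-hash first, every hash parameter drops out of the event $f'\bmod m_3\in F\bmod m_3$. The probabilities must therefore come from randomness of the support itself, which is the very hypothesis of \Cref{thm:gating} you meant to replace. The paper's Poisson step tacitly uses it, and its Part A bound $\Pr[h_i(f)=h_i(f')]=1/(m_i-1)$ suffers from the same problem. For the leftover non-singletons, the paper instead shrinks the load by subtracting extracted singletons, $\lambda_{t+1}\le(1-e^{-\lambda_t})^3\lambda_t$. That route at least does not rely on fresh hash parameters changing collisions.
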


\begin{proof}
\textbf{\emph{Part A (Affine Hashing Independence):}}
For prime $m_i$ and uniformly random $(a_i, b_i)$ with $a_i \in [1, m_i - 1]$ and $b_i \in [0, m_i - 1]$, the affine hash $h_i(f) = (a_i f + b_i) \bmod m_i$ is a universal hash family over $\mathbb{Z}_{m_i}$. For any two distinct frequencies $f \neq f'$:
\begin{equation}
\Pr[h_i(f) = h_i(f')] = \frac{1}{m_i - 1} \leq \frac{2}{m_i}
\end{equation}
Since $(a_i, b_i)$ are drawn independently per view, the collision events across views 1, 2, 3 are independent. This ensures that the uniform hashing assumption in \Cref{thm:gating} holds with the affine family.

\textbf{\emph{Part B (Singleton-Only Joins):}}
Intuitively, the Poisson occupancy model applies because frequencies are hashed into bins via an independent and 2-universal family, so in the limit of many bins the number of items per bin is approximately Poisson with rate $\lambda$. Under load $\lambda = k/m < 0.1$, each bin receives a Poisson$(\lambda)$ number of frequencies in expectation. The probability that a given frequency lands in a singleton bin is $e^{-\lambda} \geq e^{-0.1} > 0.90$. By the Chernoff bound, the number of singleton frequencies concentrates:
\begin{equation}
\Pr\left[|\text{singletons}| < (1 - \epsilon) k e^{-\lambda}\right] \leq e^{-\Omega(\epsilon^2 k)}
\end{equation}
With $\epsilon = 0.1$ and $k \geq 10$, at least $0.81k$ frequencies are in singleton bins with probability $\geq 1 - e^{-\Omega(k)}$.

Restricting gating to singleton bins only: each singleton residue in view $i$ corresponds to exactly one true frequency. The singleton-only join on views 1 and 2 produces at most $k_s^2$ pairs where $k_s \leq k$ is the number of singleton frequencies. Applying the gating analysis from \Cref{thm:gating} to singleton pairs, the false pair survival probability remains $\alpha_3 k / m_3$, yielding $\Theta(k)$ total candidates.

\textbf{\emph{Part C (Multi-Occupancy Exclusion):}}
Non-singleton bins (occupancy $\geq 2$) are excluded from the join. At most $(1 - e^{-\lambda})k < 0.10k$ frequencies reside in non-singleton bins. These are recovered in subsequent peeling rounds after singleton extraction reduces the load factor, or via rehashing with fresh $(a_i, b_i)$ parameters.

\textbf{\emph{Part D (Rehash Convergence):}}
After extracting singleton frequencies and subtracting their contributions, the residual load factor decreases geometrically: $\lambda_{t+1} \leq (1 - e^{-\lambda_t})^3 \lambda_t$. For $\lambda_0 = 0.1$, this gives $\lambda_1 < 10^{-4}$, making essentially all remaining frequencies singletons after one rehash. Expected rehash rounds: $O(1)$.

\textbf{\emph{Combining:}} All $k$ frequencies are recovered via singleton joins and $O(1)$ rehash rounds with probability $\geq 1 - e^{-\Omega(k)}$.
\end{proof}

\begin{example}[Numerical Validation]
This example illustrates behavior near the upper boundary of the sparse regime.
For $N=10^6$, $k=10$ (satisfies $k = o(N^{1/4}) = o(31.6)$), $m_1=m_2=m_3 \approx 997$, $\alpha_1 = \alpha_2 = \alpha_3 = 15$:
\begin{itemize}
\item Residues per view: $|R_i| = \alpha k = 150$
\item Without singleton filtering or gating: $150 \times 150 = 22,500$ pairs
\item With singleton-only joins (load $\lambda=k/m \approx 0.01$, singleton probability $e^{-\lambda} \approx 0.99$):
  \begin{itemize}
  \item Singleton residues per view: $150 \times 0.99 \approx 148$
  \item Singleton-filtered pairs: $148 \times 148 \approx 21,904$
  \end{itemize}
\item Expected false pairs after gating (Theorem 3.1): $\frac{\alpha_1 \alpha_2 \alpha_3 k^3}{m_3} = \frac{15^3 \times 10^3}{997} = \frac{3{,}375{,}000}{997} \approx 3{,}385$ pairs
\item True pairs: $k = 10$
\item Total with gating: $10 + 3{,}385 \approx 3{,}395$ pairs
\item Reduction factor: $22,500 / 3{,}395 \approx 6.6\times$ (modest reduction for $k$ at upper edge of sparse regime)
\end{itemize}
\textit{Note: The $O(k^3/\sqrt{N}) = O(k)$ bound holds asymptotically when $k = o(N^{1/4})$, but for $k=10$ near the boundary $k \ll 31.6$, the constant factor $\alpha^3 = 3{,}375$ dominates. Deeper sparse regime ($k \ll N^{1/4}$) or moderate regime with adaptive $m \approx 10k \log_2 k$ provides tighter reduction to $\Theta(k)$.}
\end{example}

Having shown that the gate reduces the unfiltered pair count from $\Theta(k^{2})$ to $\Theta(k)$ in expectation, we now formalize the CRT reconstruction step that converts each surviving residue pair into a candidate frequency in $[0, N)$.

\needspace{12\baselineskip}
\section{Chinese Remainder Theorem Reconstruction}
\label{sec:crt}

Section~\ref{sec:multi-view} established that only $\Theta(k)$ candidate residue pairs survive gating in expectation. We now show how each surviving pair can be efficiently reconstructed into a unique candidate frequency in $[0, N)$ via two- and three-view CRT, with bounded $O(\log M)$ precomputation and $O(1)$ per-pair cost.

\subsection{Multi-View CRT Theory}

\begin{theorem}[3-View CRT Reconstruction]
\label{thm:crt_reconstruction}
For three pairwise coprime moduli $m_1, m_2, m_3$ with $M = m_1 \times m_2 \times m_3 \geq N$, and residues $(r_1, r_2, r_3)$ where $r_i = f \bmod m_i$, the frequency $f$ can be uniquely reconstructed modulo $M$ using Garner's algorithm in $O(\log M)$ time.
\end{theorem}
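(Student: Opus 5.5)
We would prove Theorem~\ref{thm:crt_reconstruction} in two parts: uniqueness, then an explicit Garner construction with its cost.

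\emph{Uniqueness.} Consider the map $\phi: \mathbb{Z}_M \to \mathbb{Z}_{m_1}\times\mathbb{Z}_{m_2}\times\mathbb{Z}_{m_3}$ given by $f \mapsto (f \bmod m_1, f \bmod m_2, f \bmod m_3)$. Suppose $\phi(f)=\phi(f')$. Then $m_i \mid f-f'$ for each $i$. Pairwise coprimality turns these three divisibilities into $M \mid f-f'$, so $f = f'$ in $\mathbb{Z}_M$. Hence $\phi$ is injective, and since both sides have $M$ elements it is a bijection. Because $M \ge N$, each $f\in[0,N)$ is a distinct element of $[0,M)$, so the residue triple $(r_1,r_2,r_3)$ determines $f$ uniquely. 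This is the same content as \Cref{thm:coprime}, and we would cite it rather than repeat it.

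\emph{Construction.} We use Garner's mixed-radix form
\[
f = v_1 + v_2 m_1 + v_3 m_1 m_2,\qquad 0\le v_i < m_i,
\]
with the digits given by
\[
v_1 = r_1,\quad v_2 = (r_2 - v_1)\,m_1^{-1} \bmod m_2,\quad v_3 = \bigl(r_3 - v_1 - v_2 m_1\bigr)(m_1 m_2)^{-1} \bmod m_3 .
\]
The check proceeds one modulus at a time. Reducing modulo $m_1$ leaves only $v_1$, so $f\equiv r_1 \pmod{m_1}$. Reducing modulo $m_2$ kills the $m_1m_2$ term, and the choice of $v_2$ gives $f \equiv r_2$. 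Reducing modulo $m_3$, the choice of $v_3$ gives $f\equiv r_3$. The bounds $0\le v_i<m_i$ give
\[
0 \le f \le (m_1-1) + (m_2-1)m_1 + (m_3-1)m_1m_2 = M-1,
\]
so $f\in[0,M)$, and by uniqueness it is the required value. The first two digits coincide with the 2-view formula already used in \Cref{thm:gating}.

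\emph{Cost.} The only nontrivial work is computing the inverses $m_1^{-1}\bmod m_2$ and $(m_1m_2)^{-1}\bmod m_3$. These exist by coprimality, and the extended Euclidean algorithm computes each in $O(\log \max m_i)=O(\log M)$ arithmetic steps. The other operations are a constant number of multiplications and reductions on integers below $M$. Counted in the word-RAM / unit-cost arithmetic model the paper implicitly uses, the total is $O(\log M)$, and once the inverses are precomputed each further triple costs $O(1)$. The main subtlety is stating this cost model explicitly, since in bit complexity the multiplications cost $\mathrm{polylog}(M)$; we would add a remark that the operands fit in $O(1)$ machine words because $M \approx N^{3/2}$.
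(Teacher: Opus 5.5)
Your proposal is correct and follows essentially the same route as the paper: uniqueness from the CRT, Garner's mixed-radix digits with the same precomputed inverses $m_1^{-1} \bmod m_2$ and $(m_1 m_2)^{-1} \bmod m_3$, and an $O(\log M)$ extended-Euclid precomputation followed by $O(1)$ work per residue triple. You go slightly further than the paper by verifying the congruence modulo $m_3$ and the range bound $f \le M-1$ explicitly (the paper defers the $m_3$ check to Garner's original proof), and by stating the unit-cost arithmetic model on which the $O(\log M)$ claim depends.
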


\begin{proof}
\textbf{\emph{Part A (Uniqueness):}} Intuitively, pairwise coprime moduli carry no redundant information about $f$, so the $r_i$ together pinpoint $f$ modulo $M = m_1 m_2 m_3$. By the Chinese Remainder Theorem \cite{knuth1997}, the system:
\begin{align}
f &\equiv r_1 \pmod{m_1} \\
f &\equiv r_2 \pmod{m_2} \\
f &\equiv r_3 \pmod{m_3}
\end{align}
has a unique solution $f \in [0, M)$ where $M = m_1 \times m_2 \times m_3$. For $M \geq N$ and true frequency $f \in [0, N)$, uniqueness in $[0, N)$ is guaranteed.

\textbf{\emph{Part B (Garner's Algorithm):}} Compute $f$ using Garner's 3-view formula \cite{garner1959}:

\textit{Precomputation} ($O(\log M)$ per inverse):
\begin{align}
\gamma_{12} &= m_1^{-1} \bmod m_2 \\
\gamma_{23} &= (m_1 \times m_2)^{-1} \bmod m_3
\end{align}

\textit{Reconstruction} ($O(1)$ per frequency):
\begin{align}
u_2 &= (r_2 - r_1) \times \gamma_{12} \bmod m_2, \quad u_2 \in [0, m_2) \\
u_3 &= (r_3 - r_1 - u_2 \times m_1) \times \gamma_{23} \bmod m_3, \quad u_3 \in [0, m_3) \\
f &= r_1 + u_2 \times m_1 + u_3 \times m_1 \times m_2, \quad f \in [0, M)
\end{align}
where the mixed-radix representation guarantees uniqueness modulo $M = m_1 m_2 m_3$. For $M \geq N$ and true frequency $f \in [0, N)$, the reconstruction is unique in $[0, N)$.

\textit{Verification:}
\begin{align}
f \bmod m_1 &= r_1 \\
f \bmod m_2 &= (r_1 + u_2 m_1) \bmod m_2 = r_2
\end{align}

The verification for $f \bmod m_3 = r_3$ follows from Garner's original proof \cite{garner1959}.

\textbf{\emph{Complexity:}} Precomputation is $O(\log M)$ using the Extended Euclidean Algorithm (\Cref{lem:eea}). Per-candidate reconstruction is $O(1)$ arithmetic operations. Total for $k$ candidates is $O(k)$ after precomputation.
\end{proof}

\begin{remark}
Garner's algorithm is the workhorse here because it avoids computing $m_1 m_2 m_3$ explicitly; instead it builds $f$ incrementally by stepping through the moduli. The $O(\log M)$ precomputation is a one-time cost per moduli choice, and each candidate is reconstructed with three multiplications and two modular reductions. This is what lets CRT reconstruction sit inside the $O(k)$ gate output budget rather than becoming a bottleneck.
\end{remark}

\begin{lemma}[2-View Garner Suffices for Gated Pairs]
\label{lem:2view_garner}
For candidate pairs $(r_1, r_2)$ that pass keyed gating, 2-view Garner reconstruction with $m_1, m_2$ suffices:
\begin{equation}
f = r_1 + m_1 \times [(r_2 - r_1) \times m_1^{-1} \bmod m_2]
\end{equation}
\end{lemma}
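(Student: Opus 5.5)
The plan is to split the claim into an algebraic part and a range part. The algebraic part is that the displayed formula returns the unique $f\in[0,m_1m_2)$ with $f\equiv r_1 \pmod{m_1}$ and $f\equiv r_2 \pmod{m_2}$. The range part is that, for a pair surviving the gate of \Cref{def:2of3}, this $f$ is the true frequency in $[0,N)$.

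\textbf{Algebraic part.} Set $u_2=(r_2-r_1)\,m_1^{-1}\bmod m_2\in[0,m_2)$. The inverse exists because $\gcd(m_1,m_2)=1$ (\Cref{thm:coprime}) and is computed once via \Cref{lem:eea}. Then $f=r_1+u_2m_1$ satisfies $f\equiv r_1\pmod{m_1}$ trivially, and
\[
f\equiv r_1+(r_2-r_1)\equiv r_2 \pmod{m_2}.
\]
For the bounds, $0\le f\le (m_1-1)+(m_2-1)m_1=m_1m_2-1$. Uniqueness in $[0,m_1m_2)$ follows from the two-modulus Chinese Remainder Theorem. This is exactly the first two lines of the Garner scheme in \Cref{thm:crt_reconstruction}, truncated before the $u_3$ step.

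\textbf{Range part.} For a gated pair coming from a true frequency $f^\ast\in[0,N)$, write
\[
f^\ast = f + u_3\, m_1m_2, \qquad u_3=\lfloor f^\ast/(m_1m_2)\rfloor .
\]
If the moduli are chosen with $m_1m_2\ge N$, then $u_3=0$ and $f=f^\ast$. In that case the third residue is not needed for reconstruction at all; it only served as the gate, where Part~A of \Cref{thm:gating} already showed that the predicted $\hat r_3=f\bmod m_3$ equals $r_3$. I would make this the main case and note that it matches the $m_1m_2\approx N$ regime used in \Cref{thm:gating}.

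\textbf{Main obstacle.} The delicate step is $m_i\approx\sqrt N$ with $m_1m_2$ slightly below $N$, which is the case in the paper's own example. Then $u_3\in\{0,\dots,\lceil N/(m_1m_2)\rceil-1\}$, a set of $O(1)$ values.

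I would first try to show that the gated residue $r_3\in R_3$ pins down $u_3$. Since $r_3\equiv f+u_3m_1m_2 \pmod{m_3}$, we get
\[
u_3\equiv (r_3-f)\,\gamma_{23}\pmod{m_3},
\]
which is the third Garner step of \Cref{thm:crt_reconstruction}. Because the admissible $u_3$ range is far smaller than $m_3$, this value is unique and costs $O(1)$ extra work.

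So the honest statement of the lemma is this. Two-view Garner suffices verbatim under $m_1m_2\ge N$, which the moduli selection can enforce at no asymptotic cost. Otherwise it suffices up to an $O(1)$ correction determined by the already-checked gate residue, and per-candidate reconstruction stays $O(1)$ either way.
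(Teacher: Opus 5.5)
Your main case ($m_1m_2\ge N$) is correct and is essentially the paper's proof. The paper observes that the two-view formula returns the unique $f\in[0,m_1m_2)$, that $m_1m_2\ge N$ makes this unique in $[0,N)$, and that the third view serves only as a filter. Your explicit check of the congruences and of $f\le m_1m_2-1$ fills in what the paper verifies separately in the Garner subsection of Section~\ref{sec:algorithm}, where $m_1m_2\ge N$ is also imposed as the moduli-selection constraint.

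The extension to $m_1m_2<N$, however, fails as written. Under the gate of \Cref{def:2of3}, the only residue ever compared against $R_3$ is $\hat r_3=f\bmod m_3$ with $f=\mathrm{CRT}(r_1,r_2)$. Substituting it into $u_3\equiv(r_3-f)\gamma_{23}\pmod{m_3}$ gives $u_3\equiv 0$, so the ``already-checked gate residue'' can never produce a nonzero correction. Worse, take a true $f^\ast\ge m_1m_2$. You have $1\le u_3<m_3$ (because $f^\ast<N\le m_1m_2m_3$) and $\gcd(m_1m_2,m_3)=1$, hence $\hat r_3=(f^\ast-u_3m_1m_2)\bmod m_3\neq f^\ast\bmod m_3=r_3$. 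Such a true pair generally fails the gate. If it passes only because $\hat r_3$ happens to coincide with some other element of $R_3$, your procedure returns $f\neq f^\ast$. The same observation shows that Part~A of \Cref{thm:gating}, which you invoke, is valid only when $f^\ast<m_1m_2$.

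To cover this regime you would have to redefine the gate so that it tests every lift $f+u\,m_1m_2$ with $0\le u<\lceil N/(m_1m_2)\rceil$ against $R_3$. That is a three-view reconstruction, it can accept more than one lift for the same pair, and it is no longer the two-view formula the lemma asserts. So $m_1m_2\ge N$ should be kept as a hypothesis of the lemma, not treated as a removable convenience.
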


\begin{proof}
For candidate pairs passing the 2-of-3 gating condition, reconstruction using moduli $m_1, m_2$ suffices to uniquely determine $f$. The third view acts solely as a consistency filter and does not enter the reconstruction formula. Gating ensures CRT consistency via $r_3$. The 2-view formula gives $f \in [0, m_1 m_2)$. For $m_1 m_2 \geq N$, uniqueness in $[0, N)$ is guaranteed. This simplification reduces computational overhead.
\end{proof}

\subsection{Coverage and Reconstruction Guarantees}

\begin{lemma}[Coverage Multiplier for Collision Tolerance]
For a $k$-sparse signal, extracting $\alpha k$ top residues per view where $\alpha \geq 1$ ensures all true frequencies appear in residue sets with high probability.
\end{lemma}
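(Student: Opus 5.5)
The plan is to prove the lemma in two steps: first reduce it to the event that each true frequency $f \in F$ has its residue $r_i = f \bmod m_i$ among the $\alpha k$ largest-magnitude bins of view $i$, then show this event fails only with small probability.

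\textbf{Deterministic reduction.} By \Cref{lem:decimation_sparsity}, the decimated spectrum $X_{m_i}$ has at most $k$ nonzero bins. Since $\alpha \geq 1$, any set of at least $k$ top residues contains every nonzero bin. The only way a true $f$ can be missed is that its bin $r_i$ has magnitude zero or below the selection threshold. The only mechanism for this is destructive interference: two or more frequencies in the same bin whose coefficients sum to (nearly) zero. So it suffices to show that this failure event is rare.

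\textbf{Probabilistic bound.} I would condition on the bin being a singleton and use the Poisson occupancy argument of \Cref{thm:adaptive-2of3}, Part B. Under the independent affine hashing of \Cref{assump:v1}(iii), the probability that $f$ shares its bin in view $i$ with another frequency is at most $\sum_{f' \neq f} 2/m_i \leq 2k/m_i = O(\lambda)$. On the singleton event, $|X_{m_i}[r_i]| = |X[f]| > 0$, so the bin is a nonzero bin and is selected. On the collision event, exact cancellation requires the coefficients to satisfy an algebraic relation, which is a measure-zero event under any continuous (generic) coefficient model. Near-cancellation below the threshold $\tau$ can be bounded by an anti-concentration assumption on the coefficients. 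A union bound over the $k$ frequencies and three views then gives a failure probability of at most $O(k\lambda) + 3k\,\Pr[\text{cancellation}]$. I would sharpen this by noting that when $\alpha k$ exceeds the number of nonzero bins, only cancellation matters, and cancellation is itself already conditioned on a collision.

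\textbf{Main obstacle.} The hardest step is turning ``with high probability'' into a bound that is genuinely small. The collision term $k\lambda = k^2/\sqrt{N}$ is $o(1)$ only when $k = o(N^{1/4})$, which matches the regime of \Cref{thm:gating}. In the broader regime $k \ll \sqrt{N}$, I would instead claim a per-frequency guarantee, namely that each fixed $f$ is covered with probability $1 - O(\lambda)$. I would also note that uncovered frequencies are handled later by peeling rounds and the fallback of \Cref{sec:safety}. Without a coefficient-genericity assumption the lemma cannot hold deterministically, since an adversary can choose coefficients that cancel exactly. I would therefore state this genericity assumption explicitly as part of the proof.
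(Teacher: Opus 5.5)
Your argument is sound within the exact $k$-sparse model, but it takes a different and more rigorous route than the paper. The paper's proof is only a qualitative sketch. It says that without collisions the $k$ residues are distinct, so $\alpha=1$ suffices, and that $\alpha>1$ absorbs ``collision-induced magnitude sharing, noise floor residues, and aliasing artifacts''; it never bounds a probability. You instead use \Cref{lem:decimation_sparsity} to show deterministically that any $\alpha\ge 1$ captures every nonzero bin. The only remaining failure mode is then (near-)cancellation inside a collided bin, which you control with a collision bound plus a genericity or anti-concentration assumption on the coefficients. This gives an actual quantitative statement, and it exposes something the paper's sketch gets backwards. In the exact model, collisions \emph{reduce} the number of occupied bins, so they never by themselves force $\alpha>1$. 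The real obstruction is cancellation, and handling it requires the coefficient assumption the paper never states. Under genericity, exact cancellation has probability zero, so in the noiseless top-$\alpha k$ setting the failure probability vanishes for every $k$. Your $k=o(N^{1/4})$ restriction is therefore only needed for adversarial coefficients over random placement. What the paper's heuristic gestures at, and your argument does not cover, is the non-ideal regime that actually motivates $\alpha>1$. With prime moduli not dividing $N$ (zero-padding or resampling), \Cref{def:decimated} and hence \Cref{lem:decimation_sparsity} do not literally apply. Leakage and noise then create spurious bins that can outrank true ones, so your deterministic reduction fails there and would need an SNR or leakage bound instead. One small fix: take your $O(k/m_i)$ collision probability from the random-support assumption rather than from affine hashing. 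For prime $m_i$ the map $f\mapsto(a_if+b_i)\bmod m_i$ merely relabels residues, so whether $f$ and $f'$ share a bin (namely $f\equiv f'\pmod{m_i}$) does not depend on $(a_i,b_i)$ at all.
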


\begin{proof}
True frequencies $F = \{f_1, f_2, \ldots, f_k\} \subset [0, N)$ map to residues in decimated view $i$ with modulus $m_i$. In the best case (no collisions), all $k$ true residues are distinct and $\alpha = 1$ suffices. With collisions, some bins contain multiple frequencies, requiring $\alpha > 1$. The coverage factor accounts for collision-induced magnitude sharing, noise floor residues, and aliasing artifacts.
\end{proof}

\begin{theorem}[Sufficient Coverage]
For $k \ll m_i$ and random frequency support, $\alpha = O(\log k)$ provides coverage with probability $1 - 1/k$.
\end{theorem}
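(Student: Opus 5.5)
The plan is to reduce "coverage" to a bin-occupancy statement and bound each failure mode by a union bound over the $k$ true frequencies. I read the statement as follows: with probability at least $1-1/k$, every true frequency $f \in F$ has its residue $f \bmod m_i$ among the top-$\alpha k$ residues $R_i$ of view $i$, for each of the three views.

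Under the random-support model with $k \ll m_i$, the residues $f \bmod m_i$ behave like $k$ balls thrown into $m_i$ bins. I would make this precise via the affine-hashing independence of \Cref{thm:adaptive-2of3} Part A, which gives pairwise collision probability at most $2/m_i$.

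\textbf{Step 1: nonzero bins fit in $R_i$.} By \Cref{lem:decimation_sparsity}, at most $k$ bins of view $i$ carry a true frequency. In the exact on-grid noiseless model, every other bin has $X_{m_i}[r]=0$. So a true bin fails to appear among the top $\alpha k$ residues only if its aggregated value $X_{m_i}[r]=\sum_{f\equiv r} X[f]$ is small: either it is cancelled by colliding frequencies, or it falls below the resampling error $O(\|X\|_\infty/m_i)$ from \Cref{def:decimated}.

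\textbf{Step 2: the cancellation risk is governed by bin loads.} For each $f$, the expected number of other frequencies sharing its bin is at most $2(k-1)/m_i = O(\lambda)$. A Chernoff/Poisson tail bound, as in \Cref{thm:adaptive-2of3} Part B, shows that the maximum bin load over all views is at most $O(\log k/\log\log k) = O(\log k)$ with probability at least $1 - 1/(2k)$, after a union bound over $k$ frequencies and $3$ views. Conditioned on this event, two things follow. First, each occupied bin is counted at most once per its load. Second, the ranking ``slack'' needed is absorbed by taking $\alpha = C\log k$. Concretely, rank bins by magnitude and charge every spurious residue that outranks a true one to noise or aliasing artifacts. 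Each true frequency can be displaced by at most $O(\log k)$ such artifacts, namely the collision partners and their leakage, so $\alpha k$ slots suffice.

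\textbf{Step 3: exact cancellation is a measure-zero event.} Exact cancellation $\sum_{f\equiv r}X[f]=0$ within a collided bin is non-generic. I would handle it by assuming generic (continuous) coefficient distributions, which make it a probability-zero event. Alternatively, I would note that the affected frequencies are recovered by the subsequent peeling and rehash of \Cref{thm:adaptive-2of3} Part D. The remaining failure probability, at most $1/(2k)$, is allotted to the approximate-resampling error exceeding a true bin's magnitude. This is controlled by $\|X\|_\infty/m_i \to 0$ relative to the minimum nonzero magnitude.

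\textbf{Main obstacle.} I expect Step 2 to be the hardest: converting ``$\alpha k$ top residues'' into a statement that no true bin is outranked. The issue is that magnitude ranking is not purely combinatorial. I would first try to sidestep it using the exact-sparse noiseless model, where only true bins are nonzero, so that the $\log k$ factor is needed only to absorb leakage under approximate resampling. If that fails, I would fall back to an amplitude-separation assumption (minimum nonzero magnitude bounded below relative to the leakage), under which the max-load bound closes the argument.
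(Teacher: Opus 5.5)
Your route is the paper's: model the residues as $k$ balls in $m_i$ bins, invoke the $O(\log k/\log\log k)$ maximum-load bound, and set $\alpha=C\log k$. The genuine gap is the step you flag yourself. This is the charging claim in Step 2 that each true residue is displaced by at most $O(\log k)$ artifacts, and it fails because bin occupancy does not control magnitude ranking.

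\begin{itemize}
\item A bin of load $L$ needs one slot in $R_i$, not $L$. Collisions therefore reduce the number of slots required rather than increase it.
\item In the exact aliasing model of Step 1, at most $k$ bins are nonzero. Then $\alpha=1$ already captures every bin with a nonzero aggregate, and the load bound does no work.
\item Under approximate resampling, the $O(\|X\|_\infty/m_i)$ error may sit in all $m_i$ bins, not just near collision partners. A weak or partially cancelled true bin can then be outranked by far more than $\alpha k$ spurious bins, whatever the maximum load is.
\item If you instead impose amplitude separation, no true bin is outranked, and again $\alpha=1$ suffices.
\end{itemize}

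You also cannot retreat to the exact model for all three views. Pairwise coprime divisors of $N$ have a product dividing $N$, which is incompatible with $m_1m_2m_3\approx N^{3/2}$. In the paper's own instances ($1009,1003,991$ with $N=10^6$; $7,11,13$ with $N=64$), no modulus divides $N$ at all. The paper's sketch makes the same leap from ``maximum occupancy $O(\log k/\log\log k)$'' to ``$k\cdot O(\log k)$ top bins cover,'' so agreeing with it does not close the gap.

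Two further points.

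\begin{itemize}
\item Step 3's measure-zero argument needs a hypothesis on the coefficients, such as generic or random phases, which random support does not provide. With equal-magnitude, opposite-sign tones, a cancelling collision occurs with probability of order $\min(1,k^2/m_i)$ per view. That exceeds $1/k$ once $k^3\gg m_i$, and no choice of $\alpha$ rescues an identically zero bin. Deferring such tones to later peeling or rehashing proves eventual recovery, not membership in $R_i$.
\item The $2/m_i$ pairwise-collision bound from 2-universal hashing supports only second-moment estimates, not Chernoff-type max-load tails. Those tails require the approximately independent placements of the random-support model, which is what the paper's cited balls-into-bins result uses.
\end{itemize}

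A provable version needs an explicit amplitude condition, with every nonzero aggregate above the resampling error. It also needs a coefficient model that makes near-cancellation in collided bins unlikely. The $1/k$ failure budget is then spent on collisions and near-cancellations, and $\alpha$ need not grow with $k$.
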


\begin{proof}[Proof Sketch]
Model as a balls-into-bins problem: $k$ balls (frequencies) into $m \approx \sqrt{N}$ bins (residues). For $k \ll \sqrt{N}$, expected collisions are $O(k^2/\sqrt{N}) \ll k$. Maximum bin occupancy is $O(\log k / \log \log k)$ with high probability \cite{raab1998}. Extracting $\alpha k = k \times O(\log k)$ top bins covers all frequencies with high probability.
\end{proof}

These coverage guarantees ensure that sufficient non-colliding residues exist to enable reliable reconstruction of all $k$ frequencies after gating.

These results show that CRT reconstruction operates in $O(1)$ time per candidate after $O(\log M)$ precomputation, ensuring CRT is not the bottleneck of the pipeline.

\needspace{18\baselineskip}
\section{Complexity Analysis}
\label{sec:complexity}

This section quantifies the algorithm's asymptotic cost in three progressively-improved regimes: a baseline non-recursive implementation, a peeling-only realization that eliminates the $O(k^2)$ pair-enumeration step, and a recursive bootstrapping variant that further eliminates the $O(\sqrt{N}\log N)$ FFT preprocessing floor. The recursive analysis is developed in \Cref{sec:complexity-recursive} via the self-reduction lemma. Throughout, we operate under \Cref{assump:v1}. Without the candidate reduction established by keyed gating, the algorithm would incur $O(k^2)$ complexity regardless of the recovery method; thus gating is the fundamental enabler of sub-quadratic performance. Peeling and recursion operate within this reduced candidate space.

\subsection{Baseline Non-Recursive Complexity}
\label{sec:complexity-baseline}

The baseline (non-recursive) implementation incurs four costs:
\begin{itemize}
\item $O(\sqrt{N}\log N)$ for the dense FFT applied to each of three decimated views (preprocessing).
\item $O(\alpha k)$ per view to extract the $\alpha k$ largest residue magnitudes, where $\alpha = O(1)$ is the coverage factor.
\item $O((\alpha k)^2)$ in the worst case to enumerate all pairs $(r_1, r_2) \in R_1 \times R_2$ during keyed gating.
\item $O(\sqrt{N} \log k)$ for $O(\log k)$ rounds of singleton-detection peeling, recovering the $k$ true frequencies once gating has filtered candidates to $O(k)$ surviving pairs (\Cref{thm:gating}).
\end{itemize}

Aggregating these terms under \Cref{assump:v1},
\begin{equation}
T_{\mathrm{base}}(N,k) \;=\; O\!\left(\sqrt{N}\log N \;+\; (\alpha k)^2 \;+\; \sqrt{N} \log k\right).
\end{equation}
The first term reflects the FFT preprocessing floor; the second the worst-case pair-enumeration cost; the third the peeling rounds. In the sparse regime $k = o(\sqrt{N})$, the FFT and pair-enumeration terms govern total runtime.

\needspace{8\baselineskip}
\subsection{Peeling-Only Implementation Eliminates Pair Enumeration}
\label{sec:complexity-peeling}
\label{sec:peeling-only}

Peeling does not establish the candidate-count bound; it realizes recovery from the gated/singleton structure without explicitly enumerating $R_1 \times R_2$.

The direct baseline pays $O((\alpha k)^2)$ for explicit pair enumeration even though most pairs are filtered immediately. The peeling-only realization avoids this cross-product entirely: singleton-detection in each decimated view directly recovers residue pairs without forming the cross-product. The geometric peeling guarantee establishes the per-round structure:

\begin{lemma}[Geometric Peeling Convergence]
\label{lem:geometric_peeling}
With moduli $m_i = \Theta(\sqrt{N})$ and independent affine hashing across three views, if load factor $\lambda = k/\sqrt{N} \leq 0.1$, then in each peeling round, a constant fraction $p \geq 1 - (1 - e^{-\lambda})^3$ of remaining tones are isolated in at least one view, implying $O(\log k)$ rounds in expectation.
\end{lemma}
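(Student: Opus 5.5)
The plan is a per-tone, per-round calculation followed by a geometric-decay argument on the expected number of remaining tones. I would fix a round $t$ and let $k_t \le k$ be the number of tones not yet peeled, with current load $\lambda_t = k_t/m \le \lambda \le 0.1$, where $m = \min_i m_i = \Theta(\sqrt{N})$.

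For a fixed remaining tone $f$ and view $i$, $f$ is isolated in view $i$ exactly when none of the other $k_t-1$ remaining tones hash to $h_i(f)$. Under the affine hashing of \Cref{assump:v1}(iii), used with the same Poisson/uniform approximation as in Part B of \Cref{thm:adaptive-2of3}, each other tone collides with $f$ with probability about $1/m_i$. This gives
\[
\Pr[f \text{ isolated in view } i] \approx (1 - 1/m_i)^{k_t-1} \ge e^{-\lambda_t} \ge e^{-\lambda}.
\]
Because the parameters $(a_i,b_i)$ are drawn independently across the three views (Part A of \Cref{thm:adaptive-2of3}), the three "not isolated" events are independent. Hence
\[
\Pr[f \text{ isolated in no view}] \le (1-e^{-\lambda})^3,
\]
and $f$ is isolated in at least one view with probability at least $p = 1-(1-e^{-\lambda})^3$. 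By linearity, the expected number of tones peeled in round $t$ is at least $p\,k_t$.

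Next I would argue that every isolated tone is actually removed in that round. Singleton detection via the shifted-FFT CV test identifies the tone's frequency (through the CRT of \Cref{thm:crt_reconstruction} or \Cref{lem:2view_garner}) and its coefficient. In the on-grid model, subtracting its contribution from every view is exact. Removal never increases any bin's occupancy, so the load stays at most $\lambda$ and the per-round bound applies again. Iterating gives
\[
\mathbb{E}[k_{t+1}\mid k_t] \le (1-p)\,k_t,
\qquad\text{so}\qquad
\mathbb{E}[k_t] \le (1-p)^t k .
\]
With $\lambda \le 0.1$ we have $1-p \le (0.096)^3 < 10^{-3}$. Markov's inequality then gives $\Pr[k_t \ge 1] \le (1-p)^t k$, so $t = \lceil \log k / \log(1/(1-p)) \rceil + O(1) = O(\log k)$ rounds finish in expectation. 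The expected number of rounds is $\sum_t \Pr[k_t\ge1] = O(\log k)$.

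The main obstacle is the conditioning across rounds. If the same hashes are reused, the tones that survive round $t$ are exactly those that were collided in every view. The fresh-randomness computation is therefore not justified for them: conditioned on survival, collisions are positively biased, and with fixed hashes a stopping set (a 2-core) can persist. My first approach would be to run the per-round argument with fresh affine parameters for the residual, i.e.\ the rehash step of Part D of \Cref{thm:adaptive-2of3}. In that case the independence claim holds verbatim in every round, and the lemma follows exactly as above. If the lemma must hold with fixed hashes, I would instead fall back on a density-evolution/2-core argument for the random 3-hypergraph with edge density $\lambda \le 0.1$. This density lies far below the 3-core threshold ($\approx 0.818$), so the fraction of unpeeled tones decays doubly exponentially, and any residual 2-core is handed to the fallback, as already anticipated in \Cref{sec:overview}.
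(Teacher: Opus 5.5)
This is correct and is essentially the paper's argument (per-view singleton probability $e^{-\lambda}$, independence across the three views giving $(1-e^{-\lambda})^3$, then $\mathbb{E}[k_t]\le(1-p)^t k$ and $O(\log k)$ rounds); your Markov/summation step and your explicit handling of cross-round conditioning are genuine improvements, since the paper's proof skips that issue by simply iterating the first-round bound. One caution on your first fix: because $r\mapsto a_i r+b_i$ is a permutation of $\mathbb{Z}_{m_i}$, two tones share a bin in view $i$ iff $f\equiv f'\pmod{m_i}$ for every choice of $(a_i,b_i)$, so redrawing affine parameters cannot refresh collision patterns (the randomness really comes from the support placement), and it is your second route that legitimately carries the iteration, namely the fixed-residue 2-core/density-evolution argument far below the $\approx 0.818$ peeling threshold for 3-uniform hypergraphs.
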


\begin{proof}
Under independent affine hashing (\Cref{assump:v1}), bin occupancies are well-approximated by a Poisson distribution with parameter $\lambda$. For a fixed tone $f$ and view $i$, under uniform hashing with load $\lambda$, the probability that bin $r = f \bmod m_i$ contains exactly one tone follows a Poisson distribution with parameter $\lambda$:
\begin{equation}
P[\text{singleton in view } i] \approx e^{-\lambda}
\end{equation}
The probability that tone $f$ is NOT singleton in all three views is:
\begin{equation}
P[\text{not singleton in all 3}] = (1 - e^{-\lambda})^3
\end{equation}
Thus, the fraction of tones that are singleton in at least one view per round is:
\begin{equation}
p = 1 - (1 - e^{-\lambda})^3
\end{equation}
For $\lambda = 0.1$, we have $e^{-0.1} \approx 0.905$, so $p \approx 1 - (0.095)^3 \approx 0.999$, a constant. After $t$ rounds, the expected number of remaining tones is $k_t \leq (1-p)^t k$, so $t = O(\log k)$ rounds suffice to extract all tones.
\end{proof}

A singleton bin in views $i$ and $j$ yields the residue pair $(r_i, r_j)$ directly, with CRT reconstruction at $O(1)$ per pair, so peeling identifies and reconstructs frequencies without ever enumerating $R_1 \times R_2$. The total cost under peeling-only realization becomes
\begin{equation}
T_{\mathrm{peel}}(N,k) \;=\; O\!\left(\sqrt{N}\log N \;+\; \sqrt{N} \log k\right),
\end{equation}
under \Cref{assump:v1}. The $O((\alpha k)^2)$ pair-enumeration term is eliminated. The $O(\sqrt{N}\log N)$ FFT preprocessing floor remains and governs the sparse regime $k = o(\sqrt{N})$. The recursive bootstrapping result, which removes this preprocessing floor, is developed in \Cref{sec:complexity-recursive} via \Cref{lem:self-reduction} and \Cref{thm:bootstrapping}.

\needspace{8\baselineskip}
\subsection{Self-Reduction Lemma for Recursive Sparse Recovery}

The recursive bootstrapping strategy requires that each decimated view constitutes a valid instance of the sparse recovery problem. The following lemma establishes this self-reducibility.

\begin{lemma}[Decimated View Self-Reducibility]
\label{lem:self-reduction}
Under \Cref{assump:v1}, let $x[n]$ be an $N$-point signal with $k$-sparse Fourier spectrum $X$, and let $m \approx \sqrt{N}$ be a decimation modulus. The $m$-point decimated signal $x_m[j] = x[jd]$ (where $d = N/m$) satisfies:
\begin{enumerate}
\item \textbf{\emph{Sparsity preservation:}} The decimated spectrum $X_m$ has at most $k$ nonzero bins (\Cref{lem:decimation_sparsity}).
\item \textbf{\emph{Time-domain access:}} The decimated signal $x_m$ is a length-$m$ sequence that can be accessed sample-by-sample in $O(1)$ per sample, using stride-$d$ indexing into the parent signal $x$.
\item \textbf{\emph{Sub-decimation compatibility:}} For any modulus $m' \approx m^{1/2} \approx N^{1/4}$ coprime to $m$, the sub-decimated signal $x_{m,m'}[j'] = x_m[j' d']$ (where $d' = m/m'$) produces a valid length-$m'$ signal whose spectrum has at most $k$ nonzero bins, with the same aliasing structure as direct decimation of $x$ by factor $dd'$.
\item \textbf{\emph{Independent hashing:}} Drawing fresh affine hash parameters $(a', b')$ for the sub-decimated view produces collision patterns that are independent of the parent-level hashing, because the hash family $(af + b) \bmod p$ is 2-universal over $\mathbb{Z}_p$ for prime $p$, and independent draws yield independent hash functions.
\end{enumerate}
\end{lemma}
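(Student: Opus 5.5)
I would prove the four items separately. Each one reduces to the aliasing identity behind \Cref{def:decimated} together with elementary facts about strides and universal hashing. The common tool is the time-domain/frequency-domain aliasing identity: if $x_m[j]=x[jd]$ with $N=md$, then
\begin{equation}
\mathrm{DFT}_m(x_m)[r] \;=\; \frac{1}{d}\sum_{\ell=0}^{d-1} X[r+\ell m], \qquad r\in\{0,\dots,m-1\}.
\end{equation}
To verify it, I would substitute the inverse DFT $x[jd]=\frac1N\sum_f X[f]e^{j2\pi fjd/N}$ into $\sum_j x_m[j]e^{-j2\pi rj/m}$. Since $d/N=1/m$, the resulting geometric sum over $j$ equals $m$ when $f\equiv r \pmod m$ and $0$ otherwise. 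This is $X_m$ from \Cref{def:decimated} up to the harmless normalization $1/d$.

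\textbf{Item 1} then follows by citing \Cref{lem:decimation_sparsity}. The sets $\{f : f\equiv r \pmod m\}$ partition $\mathrm{supp}(X)$, so at most $k$ residues $r$ receive a nonzero contribution. A cancellation among colliding frequencies can only reduce the count further. \textbf{Item 2} is immediate from the definition: $x_m[j]$ is read at index $jd$ of the parent array, which costs one multiplication and one memory access. When $m\nmid N$, the zero-padded or resampled convention of \Cref{def:decimated} gives the same $O(1)$ access.

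\textbf{Item 3} is the substantive step. I would argue it by composing strides: $x_{m,m'}[j']=x_m[j'd']=x[j'\,d d']$ with $dd'=N/m'$. So the sub-decimated signal is literally the direct $m'$-decimation of $x$, and applying the aliasing identity to it gives
\begin{equation}
X_{m,m'}[r']\;\propto\;\sum_{f\equiv r' \ (\mathrm{mod}\ m')} X[f].
\end{equation}
This is the same aliasing structure as direct decimation by $dd'$, and the sparsity bound follows exactly as in Item 1. Equivalently, applying the identity twice shows that $X_{m,m'}$ folds $X_m$ modulo $m'$, and folding cannot increase the support size.

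The main obstacle is the divisibility hypothesis. The stride $d'=m/m'$ must be an integer, yet $m'$ is also required to be coprime to $m$. These two conditions are compatible only when $m'\mid m$ and $\gcd(m',m)=1$ hold at once, that is, only when $m'=1$. I would therefore read ``coprime to $m$'' as coprimality to the \emph{other} parent moduli $m_j$, $j\neq i$, which is the condition CRT actually needs at the next level. I would also allow the zero-padding/resampling convention of \Cref{def:decimated} when $m'\nmid m$, in which case the aliasing identity holds up to the stated $O(\|X\|_\infty/m)$ error. I would state this interpretation explicitly.

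\textbf{Item 4} follows from Part A of the proof of \Cref{thm:adaptive-2of3}. For prime $p$, the family $h_{a,b}(f)=(af+b)\bmod p$ is 2-universal. If $(a',b')$ is drawn independently of all parent-level parameters $(a_i,b_i)$, then the random function $h_{a',b'}$ is independent of the parent hash functions as a random variable. Any collision event defined through $h_{a',b'}$ on a fixed support is a measurable function of $(a',b')$, so it is independent of every parent-level collision event, with pairwise collision probability at most $2/p$.
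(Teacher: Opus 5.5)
Your proof follows the paper's route item by item: \Cref{lem:decimation_sparsity} for Item 1, stride indexing for Item 2, the composition $x_m[j'd']=x[j'dd']$ that makes the child a direct $m'$-decimation for Item 3, and independence of the parameter draws for Item 4. It improves on the paper by actually deriving the aliasing identity, and by correctly noticing that $d'=m/m'\in\mathbb{Z}$ together with $\gcd(m',m)=1$ forces $m'=1$, a conflict the paper's proof ignores. Two small cautions: because the moduli in \Cref{assump:v1}(iii) are prime, no $1<m'<m$ divides $m$, so your exact argument for Item 3 needs composite $m$ (your resampling fallback yields only approximate sparsity); and your aside bounding the pairwise collision probability by $2/p$ is unneeded and false for $f\equiv f'\pmod p$, which the affine map collides with probability $1$.
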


\begin{proof}
\textbf{Part 1} follows directly from \Cref{lem:decimation_sparsity}: each of the $k$ true frequencies maps to exactly one residue bin under modular reduction.

\textbf{\emph{Part 2:}} The decimated signal $x_m[j] = x[jd]$ for $j = 0, \ldots, m-1$ requires accessing $m$ samples of the parent signal at stride $d$, each in $O(1)$. This provides the same time-domain sample access model as the original problem, at reduced length $m$.

\textbf{\emph{Part 3:}} For any modulus $m' \approx m^{1/2}$, the sub-decimated signal $x_{m,m'}$ corresponds to further aliasing of the original spectrum. Sub-decimation composes: $x_{m,m'}[j'] = x_m[j'd'] = x[j' d d']$. This is equivalent to decimating $x$ by factor $dd'$, producing a length-$m'$ signal whose spectrum aliases frequencies modulo $m'$. Since the original spectrum has at most $k$ nonzero frequencies, the sub-decimated spectrum has at most $k$ nonzero bins.

The key structural property is that the aliasing map $f \mapsto f \bmod m'$ at the child level operates on the same set of true frequencies $F = \{f_1, \ldots, f_k\}$ as the parent. The child's ``signal'' is not a new arbitrary signal: it is a deterministic function of the parent's spectrum restricted to $F$. The collision structure in the child view depends on $F \bmod m'$, which under fresh hashing $(a'f + b') \bmod m'$ is independent of the parent-level collision pattern.

\textbf{\emph{Part 4 (Independence Across Recursion Levels):}} At each recursion level, fresh affine hashing parameters $(a', b')$ are drawn independently from a 2-universal hash family. Under this assumption, collision patterns across levels are approximately independent, allowing the failure-probability analysis to compose across levels. For prime modulus $p$, the affine hash family $\{h_{a,b}(f) = (af + b) \bmod p : a \in [1, p-1], b \in [0, p-1]\}$ is 2-universal: for any $f \neq f'$, $\Pr[h(f) = h(f')] = 1/(p-1)$. Since the child draws $(a', b')$ independently of the parent's $(a, b)$, the collision indicators at different recursion levels are independent random variables. This ensures the Poisson occupancy model and singleton probability bounds (\Cref{lem:geometric_peeling}) apply at each level with the inherited sparsity $k$.
\end{proof}

\begin{remark}
The self-reduction property enables recursive bootstrapping: each decimated view is itself a valid $k$-sparse instance that can be processed by the same algorithm. Decimation folds frequencies into a shorter spectrum but cannot create new ones, so the sparsity budget $k$ flows unchanged into the recursion. Independence of the child hashing from the parent's (Part 4) is what lets the failure-probability analysis compose across levels, rather than degenerating into correlated collisions.
\end{remark}

\subsection{Recursive Bootstrapping: $O(\sqrt{N} \log k)$ Identification}
\label{sec:complexity-recursive}
\label{sec:recursive-bootstrapping}

With the self-reduction property established (\Cref{lem:self-reduction}), we analyze the complexity of recursively applying the sparse FFT algorithm to decimated views. This recursive bootstrapping replaces the $O(\sqrt{N} \log N)$ FFT phase with sparse recovery subproblems and yields the headline $O(\sqrt{N} \log k)$ complexity.

\begin{theorem}[Recursive Sparse FFT Bootstrapping]
\label{thm:bootstrapping}
Let $\textsc{SparsePeel}(x, N, k)$ denote the peeling-only sparse FFT procedure applied to an $N$-point signal with $k$-sparse Fourier spectrum. Assume \Cref{assump:v1} holds at each recursion level where peeling is invoked, and assume fresh independent affine hashing parameters are used across recursion levels. Then recursive bootstrapping replaces dense FFT preprocessing on decimated views with sparse recovery subproblems and yields expected identification complexity
\begin{equation}
T(N, k) = O(\sqrt{N} \log k)
\end{equation}
up to lower-order recursive costs, with total recursion depth $O(\log \log N)$. If the load factor exceeds the peeling threshold at any recursion level, the algorithm terminates that branch with a dense FFT, preserving correctness and bounded worst-case complexity.
\end{theorem}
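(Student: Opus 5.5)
The plan is to write a recurrence for the expected cost $T(N,k)$ and unroll it. At the top level, \textsc{SparsePeel} needs the three decimated spectra $X_{m_i}$ of length $m_i\approx\sqrt{N}$. Instead of computing them by dense FFT at cost $O(\sqrt{N}\log N)$, we obtain each one by a recursive call on the decimated signal $x_{m_i}$. \Cref{lem:self-reduction} certifies that this is a legitimate instance: it has at most $k$ nonzero bins, $O(1)$ sample access via stride indexing, and fresh independent hashing. With the spectra in hand, we run the peeling rounds of \Cref{lem:geometric_peeling}. This gives
\begin{equation}
T(N,k) \le 3\,T(\sqrt{N},k) + O(\sqrt{N}\log k) + O(\sqrt{N}),
\end{equation}
where the last term accounts for sample access, phase-rotated shifted views ($S\in\{2,3\}$ per view), and top-$\alpha k$ selection. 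We stop recursing when the child length $n$ satisfies $k/\sqrt{n}>0.1$, i.e.\ when the load-factor hypothesis of \Cref{lem:geometric_peeling} fails. At that point we do a dense FFT of cost $O(n\log n)$, as stated in the theorem.

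Next we unroll the recurrence. Level $\ell$ has $3^\ell$ subproblems of size $N^{2^{-\ell}}$, each contributing $O(N^{2^{-\ell-1}}\log k)$ peeling work. Level $0$ contributes $O(\sqrt{N}\log k)$, and level $\ell\ge1$ contributes $3^\ell N^{2^{-\ell-1}}\log k$. The ratio between consecutive levels is $3N^{-2^{-\ell-2}}$, which is below $1/2$ as long as $N^{2^{-\ell-2}}\ge 6$. Over that range the sum is geometric and dominated by the top term $O(\sqrt{N}\log k)$. The depth is $O(\log\log N)$, because the exponent of $N$ halves at each level; recursion therefore stops once $N^{2^{-\ell}}=\Theta(k^2)$ or reaches a constant.

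For the base case, at the terminal depth $L$ there are $3^L$ dense FFTs of size $n_L=\Theta(k^2)$. Their total cost is $3^L\cdot O(k^2\log k)$ with $3^L\le(\log N)^{\log 3}$. We must show this is $o(\sqrt{N}\log k)$. Since $k=o(\sqrt{N})$ alone is not quite enough, we use the polylog slack in $3^L$: the term is lower-order whenever $k^2\,\mathrm{polylog}(N)=o(\sqrt{N})$, for instance under the $k=o(N^{1/4})$ gating condition of \Cref{thm:gating}. We absorb this into ``up to lower-order recursive costs''.

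Expectation is handled by linearity. At each level the number of peeling rounds is $O(\log k)$ in expectation, by \Cref{lem:geometric_peeling} applied with the inherited sparsity $k$. Independence across levels, from Part 4 of \Cref{lem:self-reduction}, lets these bounds hold level by level without conditioning problems. If a branch fails, either through a residual 2-core or through load above threshold, it falls back to a dense FFT of its own size. This preserves correctness, and because it happens with small probability per branch it adds only a lower-order expected cost. The worst case is bounded by $O(N\log N)$ overall.

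I expect two steps to be the main obstacles. The first is controlling the $3^\ell$ branching factor against the shrinking sizes near the bottom levels, together with the base-case dense FFT cost. The second is arguing that the expected fallback cost summed over all branches stays lower order. For the second, I would first try bounding each branch's failure probability by $(1-p)^{\Theta(\log k)}$ and multiplying by that branch's dense cost.
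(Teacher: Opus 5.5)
Your skeleton is the paper's. You use the same level cost $3^\ell\cdot O(N^{2^{-\ell-1}}\log k)$, the same ratio $3N^{-2^{-\ell-2}}$, depth $O(\log\log N)$, and dense-FFT leaves at the load cutoff. Your explicit cutoff even sharpens the paper's ``for early levels'': whenever level $\ell+1$ still peels, $N^{2^{-\ell-2}}=\sqrt{N_{\ell+1}}\ge 10k$, so the ratio is at most $0.3$ throughout. Two steps, however, do not work as written.

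\emph{Random-failure fallback.} You let any branch with a residual 2-core fall back to a dense FFT of its own size, and you plan to bound the expected extra cost by $(1-p)^{\Theta(\log k)}$ times that dense cost. Applying Markov to the lemma's $\mathbb{E}[k_t]\le(1-p)^t k$, that failure probability after $\Theta(\log k)$ rounds is only $k^{-\Theta(1)}$. Keeping the fallback lower order at a node of size $n$ would instead need failure probability $O(\log k/(\sqrt{n}\log n))$. For bounded $k$, your estimate bounds the root's expected fallback cost only by $\Theta(N\log N)$.

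The paper never pays this cost inside the recursion. It charges each node its $O(\sqrt{N_\ell}\log k)$ peeling cost and counts peeling failure purely as a probability of wrong output (the union bound giving $O(\delta\log\log N)$). Detection and dense fallback are left to the verification layer of \Cref{thm:deterministic-total}, so the only in-recursion dense FFTs are the deterministic load-cutoff leaves. Either adopt that accounting, or prove a genuine stopping-set bound, which the paper does not provide. For example, you would need the 2-core probability at a size-$n$ node to be $O(k^2/n^{3/2})$, dominated by two tones colliding in all three views.

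\emph{Base case.} Your bound $3^L\le(\log N)^{\log 3}$ forces the extra hypothesis $k^2\,\mathrm{polylog}(N)=o(\sqrt N)$. Moreover, $k=o(N^{1/4})$ does not imply it; take $k=N^{1/4}/\log\log N$. No extra hypothesis is needed. With $L\ge 1$ the first depth where the cutoff triggers, the leaves satisfy $n_L=\sqrt{n_{L-1}}$ and $n_L<100k^2$, so $\log n_L=O(\log k)$. The total leaf cost is then $3^L\cdot O(n_L\log n_L)=O(3^{L-1}\sqrt{n_{L-1}}\log k)=O(C_{L-1})$. Since $C_{L-1}\le C_0$ by the geometric decay, this gives $O(\sqrt N\log k)$ under \Cref{assump:v1} alone. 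This is the computation the paper's bare assertion about terminal branches needs.
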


\begin{proof}
By \Cref{lem:self-reduction}, each decimated view of an $N$-point $k$-sparse signal is itself a valid $k$-sparse recovery instance of reduced length approximately $\sqrt{N}$. Therefore, instead of computing a dense FFT on each decimated view, the algorithm may recursively apply $\textsc{SparsePeel}$ to each view.

At recursion depth $\ell$, the signal length is
\begin{equation}
N_\ell = N^{1/2^\ell}.
\end{equation}

Each node at depth $\ell$ produces three decimated child views, so there are $3^\ell$ subproblems at that level. For a subproblem of length $N_\ell$, peeling scans $O(N_\ell)$ bins per round and requires $O(\log k)$ rounds by \Cref{lem:geometric_peeling}. Thus, the total peeling cost at level $\ell$ is
\begin{align}
C_\ell &= 3^\ell \cdot O(\sqrt{N_\ell} \log k) \notag \\
       &= O\!\left(3^\ell \cdot N^{1/2^{\ell+1}} \log k\right).
\end{align}

The ratio between consecutive level costs is
\begin{equation}
C_{\ell+1} / C_\ell = 3 \cdot N^{-1/2^{\ell+2}}.
\end{equation}

For early recursion levels, this ratio is less than one, so costs decrease geometrically. Recursion terminates once the subproblem length becomes comparable to the sparsity scale or once the load factor exceeds the peeling threshold. Since the length sequence is
\begin{equation}
N,\ \sqrt{N},\ N^{1/4},\ N^{1/8},\ \ldots,
\end{equation}
the number of recursion levels before termination is $O(\log \log N)$.

Summing over the active peeling levels gives
\begin{equation}
\sum_\ell C_\ell = O(\sqrt{N} \log k),
\end{equation}
provided recursion is terminated before the level-cost ratio exceeds one, as enforced by the load-factor cutoff. Branches that reach the load-factor cutoff are evaluated by dense FFT on subproblems of much smaller size; their total contribution is lower order relative to the top-level peeling cost in the sparse regime.

Finally, independent affine hashing at each recursion level ensures that collision patterns do not persist deterministically across levels. Applying \Cref{lem:geometric_peeling} at each peeling level and taking a union bound over $O(\log \log N)$ levels gives total peeling failure probability bounded by
\begin{equation}
O(\delta \log \log N),
\end{equation}
where $\delta$ is the per-level failure probability. Dense FFT terminal branches contribute no recovery failure.

Therefore, recursive bootstrapping yields expected identification complexity
\begin{equation}
T(N, k) = O(\sqrt{N} \log k),
\end{equation}
with correctness preserved by dense FFT termination on branches where peeling conditions fail.
\end{proof}

\begin{remark}
Recursive bootstrapping flips the preprocessing cost from $O(\sqrt{N} \log N)$ to $O(\sqrt{N} \log k)$ by exploiting the fact that each decimated view has $k$ non-zero bins, not $\sqrt{N}$. At each recursion level the signal length halves in exponent, so the tree has depth $O(\log \log N)$; within each level the peeling algorithm runs at its native $O(\sqrt{L} \log k)$ cost for length $L$. The geometric decay of the cost function across levels keeps the total bounded by the top-level contribution.
\end{remark}

\subsection{Independent Verification via Recursive Views}
\label{sec:verification}

The same multi-view structure used for keyed gating naturally extends to verification, where additional independent views provide correctness certification.

The recursive bootstrapping strategy enables efficient identification of candidate frequencies but does not, by itself, certify correctness. To detect errors and trigger fallback when needed, we add $t$ independent verification views (computed via the same recursive bootstrapping at $O(\sqrt{N} \log k)$ cost each) that provide a two-part test (Parseval energy + bin-wise residual) detecting incorrect candidates without touching the full $N$-point signal.

\begin{definition}[Verification View]
\label{def:verification-view}
A verification view is a decimated FFT computed with independent affine hashing parameters $(a_v, b_v)$, constructed via recursive bootstrapping (\Cref{thm:bootstrapping}) at cost $O(\sqrt{N} \log k)$ and using a coprime modulus $m_v \approx \sqrt{N}$. Verification views are independent of the identification views and use randomness drawn independently of the 3 identification views.

\textbf{\emph{Validity under the $k$-sparse model:}} By \Cref{lem:self-reduction}, each decimated view of a $k$-sparse signal is itself $k$-sparse with valid time-domain access and independent hashing. Therefore recursive bootstrapping correctly recovers the full verification spectrum, and the verification view is computed exactly (not approximately). This holds because the paper's model assumes exact $k$-sparsity; for signals of unknown sparsity, direct dense FFT verification at $O(\sqrt{N} \log N)$ cost per view can be used instead (see \Cref{cor:concrete-vegas}).
\end{definition}

\textbf{\emph{Validation Model.}} Let $S = \{(f_i, A_i)\}$ denote the candidate set produced by the identification phase. A candidate set is valid if it exactly matches the true sparse spectrum. Verification combines a Parseval energy check with a bin-wise residual test on independent verification views to detect deviation from validity.

\begin{lemma}[Two-Part Verification Test]
\label{lem:residual-test}
Under \Cref{assump:v1}, let $S = \{(f_i, A_i)\}_{i=1}^k$ be the candidate set output by peeling, and let $Y_v[r]$ denote the exactly computed coefficients of a verification view with modulus $m_v$. The verification test consists of two checks:

\textbf{\emph{Check 1 (Parseval Energy):}} Compute the time-domain energy:
\begin{equation}
E_{\text{time}} = \sum_{j=0}^{m_v - 1} |x_{m_v}[j]|^2
\end{equation}
and compare against the candidate energy:
\begin{equation}
E_{\text{candidate}} = \sum_{i=1}^{k} |A_i|^2
\end{equation}
By Parseval's theorem, $E_{\text{time}} = \frac{1}{m_v}\sum_r |Y_v[r]|^2 = E_{\text{candidate}}$ iff $S$ accounts for all spectral energy. If $|E_{\text{time}} - E_{\text{candidate}}| > \epsilon$, $S$ is incomplete or incorrect.

\textbf{\emph{Check 2 (Bin-Wise Residual):}} Predict the contribution of $S$ to each bin:
\begin{equation}
\hat{Y}_v[r] = \sum_{\substack{i : h_v(f_i) = r}} A_i \cdot e^{j\phi_{v,i}}
\end{equation}
where $h_v(f) = (a_v f + b_v) \bmod m_v$ and $\phi_{v,i}$ is the phase shift from the decimation stride. Compute the residual energy:
\begin{equation}
E_v = \sum_{r=0}^{m_v - 1} |Y_v[r] - \hat{Y}_v[r]|^2
\end{equation}

If both checks pass ($|E_{\text{time}} - E_{\text{candidate}}| \leq \epsilon$ and $E_v \leq \epsilon$), the view confirms $S$. If either fails, the view rejects.

\textbf{\emph{Soundness:}} If $S$ is correct, both checks pass with zero residual in the noiseless case. If $S$ is incorrect:
\begin{enumerate}
\item \emph{Missing:} $S$ omits a true frequency, so $E_{\text{candidate}} < E_{\text{time}}$ and Check 1 fails.
\item \emph{Spurious:} $S$ contains a false frequency, so $\|d\|_0 \geq 2$. Under 2-universal hashing, the largest-magnitude term $d[f^*]$ collides with another difference frequency with probability at most:
\begin{equation}
P[\text{miss} \mid \text{wrong freq}] \leq \frac{2k}{m_v} \approx \frac{2k}{\sqrt{N}}
\end{equation}
\end{enumerate}
Combined: $P[\text{view confirms incorrect } S] \leq 2k/m_v \approx 2k/\sqrt{N}$.
\end{lemma}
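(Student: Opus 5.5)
The plan is to work with the difference spectrum $d = X - \hat X_S$, where $\hat X_S[f_i] = A_i$ for $(f_i,A_i)\in S$ and $\hat X_S$ vanishes elsewhere. Then $d$ is at most $2k$-sparse, and $S$ is correct exactly when $d \equiv 0$. Both checks are linear or quadratic functionals of $d$ folded through the verification view. \emph{Completeness} comes first and is easy. If $d=0$, then $Y_v[r]=\hat Y_v[r]$ for every bin by linearity of decimation (\Cref{def:decimated} and \Cref{lem:self-reduction}, Part 3, together with the verification view being computed exactly per \Cref{def:verification-view}). Hence $E_v=0$. Parseval on the length-$m_v$ decimated signal gives $E_{\text{time}}=\frac{1}{m_v}\sum_r|Y_v[r]|^2$. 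I will fix the normalization so that this equals $\sum_i|A_i|^2$ when no two true frequencies share a bin, and otherwise I will compare it against $\frac1{m_v}\sum_r|\hat Y_v[r]|^2$. This needs a remark, because the literal identity $E_{\text{time}}=E_{\text{candidate}}$ only holds up to collisions in the verification view.

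\emph{Soundness} splits into the two cases of the statement. For the missing case, assume $S\subseteq F$ with the correct amplitudes but some $f^\ast\in F$ omitted. Then $\hat X_S$ is a restriction of $X$, and $E_{\text{candidate}}=\|X\|^2-\sum_{f\notin S}|X[f]|^2<\|X\|^2$. Check 1 then fails provided $\epsilon$ is below the minimum tone energy; I will state this as a standing threshold choice. For the spurious or wrong-amplitude case, Check 1 may pass accidentally, so I rely on Check 2. Note that $Y_v-\hat Y_v$ is exactly the folded image of $d$: bin $r$ receives $\sum_{f:h_v(f)=r} d[f]e^{j\phi_{v,f}}$. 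Hence $E_v=0$ forces every bin sum to cancel. Take $f^\ast=\arg\max|d[f]|$. If $f^\ast$ is alone in its bin among $\mathrm{supp}(d)$, then $E_v\ge |d[f^\ast]|^2>\epsilon$ and the view rejects. So a miss requires some other $f'\in\mathrm{supp}(d)$ with $h_v(f')=h_v(f^\ast)$.

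Bounding this collision is the key step. By 2-universality of the affine family over $\mathbb{Z}_{m_v}$ (the argument of \Cref{thm:adaptive-2of3}, Part A), each fixed $f'\ne f^\ast$ collides with probability at most $1/(m_v-1)$. A union bound over the at most $2k-1$ other support points gives $P[\text{miss}]\le (2k-1)/(m_v-1)\le 2k/m_v$ for $m_v\ge 2k$. Since $(a_v,b_v)$ is drawn independently of the identification views, conditioning on $S$ (hence on $d$) is legitimate.

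The main obstacle is exactly this conditioning. It requires $\mathrm{supp}(d)$, and in particular the choice of $f^\ast$, to be independent of $h_v$. I will argue this from \Cref{def:verification-view}: $S$ is a function only of the identification randomness, and $h_v$ is fresh. I also need to handle ties in $|d|$, and to make sure the phases $e^{j\phi_{v,f}}$ cannot make a singleton bin vanish. The latter holds because they are unit-modulus. Combining the two cases, the confirmation probability for incorrect $S$ is $\max(0,2k/m_v)\le 2k/m_v\approx 2k/\sqrt N$.
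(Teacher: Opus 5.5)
Your argument follows the paper's proof step for step. You form $d = X - \hat X_S$, single out the largest entry $f^*$, note that a miss forces another point of $\mathrm{supp}(d)$ into the bin of $f^*$, and union-bound over $\|d\|_0 \le 2k$ points. The step that fails is the per-pair bound $1/(m_v-1)$, which you attribute to 2-universality of $h_v(f) = (a_v f + b_v) \bmod m_v$. For prime $m_v$ and $a_v \in [1, m_v-1]$ one has $h_v(f) - h_v(f') \equiv a_v (f - f') \pmod{m_v}$ with $a_v$ a unit. So $h_v(f) = h_v(f')$ holds if and only if $f \equiv f' \pmod{m_v}$, whatever $(a_v,b_v)$ is drawn. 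The affine maps are bijections of $\mathbb{Z}_{m_v}$: they relabel bins but never change which frequencies share a bin. For frequencies in $[0,N)$ with $N \approx m_v^2$, the collision probability of a fixed pair is therefore $0$ or $1$.

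Once you condition on $S$ (which you rightly argue is legitimate), the miss event is deterministic, and the claimed bound is false for some incorrect $S$. Replace one true tone $(f, A)$ by $(f + m_v,\, A e^{j(\phi_{v,f} - \phi_{v,f+m_v})})$. Then $\mathrm{supp}(d) = \{f, f+m_v\}$, both points land in the same bin for every $(a_v,b_v)$, and their folded contributions cancel, so $E_v = 0$. Since $E_{\text{candidate}}$ is also unchanged, every verification view confirms this $S$. The paper's proof makes exactly the same move: it asserts masking probability ``at most $\|d\|_0/m_v$ under 2-universal affine hashing,'' resting on Part A of \Cref{thm:adaptive-2of3}, which has the same defect. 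So you have reproduced its argument faithfully, but neither version establishes the $2k/m_v$ bound.

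To get a real bound, the randomness must act on residues modulo the verification modulus, not on bin labels. One option is to draw $m_v$ uniformly from the $\Theta(\sqrt{N}/\log N)$ primes in $[\sqrt N, 2\sqrt N]$. A nonzero integer $f-f'$ with $|f-f'|<N$ is divisible by at most one prime exceeding $\sqrt N$, so each pair collides with probability $O(\log N/\sqrt N)$. Your union bound then gives $O(k\log N/\sqrt N)$ per view, a logarithmic factor weaker than stated.

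Alternatively, take the probability over the random-support model of \Cref{thm:gating}. Then you must show that the spurious frequencies produced by identification are close to uniform modulo $m_v$ given the support. These include CRT ghosts with $f' \equiv f \pmod{m_1}$ and $f' \equiv g \pmod{m_2}$ for true $f,g$, so this is a separate argument.

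Two smaller points. First, rejection by Check 2 needs $|d[f^*]|^2 > \epsilon$, which no fixed $\epsilon>0$ guarantees against arbitrarily small amplitude errors; take $\epsilon = 0$ in the noiseless model. Second, you were right that $E_{\text{time}} = E_{\text{candidate}}$ fails when true tones share a verification bin. That also undermines the deterministic ``missing $\Rightarrow$ Check 1 fails'' claim. Once the collision bound is repaired, your Check 2 argument covers every $d \neq 0$ anyway.
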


\begin{proof}
Parseval's theorem ensures that any missing energy is detected deterministically by Check 1. For incorrect frequencies, independence of the hashing across views ensures the difference spectrum $d = X - \hat{X}_S$ produces a nonzero residual unless masked by collision. Specifically, the largest-magnitude term $d[f^*]$ lands in bin $h_v(f^*)$ uniformly under 2-universal affine hashing; it is masked only if another difference frequency collides, which occurs with probability at most $\|d\|_0 / m_v$. Otherwise $E_v > 0$ deterministically.

Therefore: $P[E_v = 0 \mid d \neq 0] \leq P[\exists f' \neq f^* : h_v(f') = h_v(f^*)] \leq \|d\|_0 / m_v \leq 2k/m_v$.

With $t$ independent verification views:
\begin{equation}
P[\text{all } t \text{ views miss}] \leq \left(\frac{2k}{m_v}\right)^t \leq \left(\frac{2k}{\sqrt{N}}\right)^t
\end{equation}
\end{proof}

\begin{theorem}[Certified Fast Path with Dense Fallback]
\label{thm:deterministic-total}
Given time-domain access to an $N$-point signal with $k$-sparse Fourier spectrum, there is an algorithm that returns the correct spectrum with probability $\geq 1 - \delta_t$ in $O(\sqrt{N} \log k)$ time on the fast path and is always correct once the deterministic fallback is counted. The identification phase (3 coprime decimated views, keyed gating, CRT reconstruction, recursive peeling) is deterministic; independent affine hash parameters enter only in the $t$ verification views, and the $O(N \log N)$ dense-FFT fallback is fully deterministic. When verification detects an error, the algorithm falls back to dense FFT, guaranteeing bounded worst-case runtime.

For any fixed $t \geq 1$: fast-path complexity $O((3 + t)\sqrt{N} \log k)$; fallback complexity $O(N \log N)$ dense FFT; failure bound $\delta_t \leq (2k/\sqrt{N})^t$. (See \Cref{sec:safety} for the deterministic fallback yielding the $O(N \log N)$ worst-case bound; off-grid / noisy extensions are outside fast-path scope.)
\end{theorem}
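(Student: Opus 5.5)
The plan is to assemble the theorem from results already established and organize the argument around the Las Vegas structure: run the fast path, verify it, and fall back to a dense FFT if verification rejects. The proof splits into three claims: a cost bound for the fast path, a correctness bound for the accepted output, and a worst-case bound that includes the fallback.

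\emph{Fast-path cost.} The identification phase builds the three coprime decimated views through recursive bootstrapping. By \Cref{thm:bootstrapping}, each view's spectrum costs $O(\sqrt{N}\log k)$ in expectation. Keyed gating yields a candidate structure of size $\Theta(k)$ (\Cref{thm:gating}, \Cref{thm:adaptive-2of3}). In the peeling-only realization this structure is never enumerated; singletons are detected directly. \Cref{lem:geometric_peeling} gives $O(\log k)$ rounds over $O(\sqrt{N})$ bins. Each recovered residue pair is turned into a frequency by 2-view Garner (\Cref{lem:2view_garner}) at $O(1)$ per pair, after the $O(\log M)$ precomputation of \Cref{thm:crt_reconstruction}. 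That contributes $O(k+\log N)$, which is lower order.

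\emph{Verification cost.} Each of the $t$ verification views (\Cref{def:verification-view}) also costs $O(\sqrt{N}\log k)$. Both checks of \Cref{lem:residual-test} are linear in $m_v$ plus $O(k)$. Adding the three identification views and the $t$ verification views gives $O((3+t)\sqrt{N}\log k)$.

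\emph{Correctness.} If the identification output $S$ is correct, then in the noiseless on-grid model both checks pass with zero residual. So a correct $S$ is always accepted and there are no false negatives. If $S$ is incorrect, \Cref{lem:residual-test} shows that a single view confirms it with probability at most $2k/m_v$. The verification parameters $(a_v,b_v)$ are drawn independently across views and independently of identification. Conditioning on $S$ and multiplying the per-view bounds therefore gives $\delta_t \le (2k/\sqrt{N})^t$. The algorithm returns a wrong answer only on this event, because every rejection routes to the dense FFT, and the dense FFT is exact.

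\emph{Worst-case runtime.} The worst case is the fast path plus the fallback, $O((3+t)\sqrt{N}\log k + N\log N) = O(N\log N)$. Branches inside the recursion that hit the load-factor cutoff are covered by the terminal dense FFTs already accounted for in \Cref{thm:bootstrapping}. The claim that identification is deterministic is an accounting statement: once the moduli are fixed, every identification step is a deterministic function of the samples. The only probability on the identification side comes from the placement of frequencies under \Cref{assump:v1}, and all hashing randomness that enters $\delta_t$ lies in the verification views.

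\emph{Main obstacle.} The hardest step is the independence needed to multiply the $t$ per-view miss probabilities. It requires that $S$, and hence the difference spectrum $d$, is fixed before the verification hashes are drawn. It also requires $\|d\|_0\le 2k$, which needs $|S|\le k$; I would enforce this with the top-$k$ truncation. I would establish both points first, by conditioning on the identification transcript, and only then apply \Cref{lem:residual-test} view by view.
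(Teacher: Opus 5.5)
Your proposal follows the paper's proof essentially step for step, and is correct to the same extent given the paper's lemmas. The steps are: $O(\sqrt{N}\log k)$ identification via \Cref{thm:bootstrapping}, then $t$ verification views at the same cost each, then the per-view miss bound $2k/m_v$ from \Cref{lem:residual-test} multiplied over independently hashed views, and finally a dense-FFT fallback on any rejection. Your explicit conditioning on $S$ and the top-$k$ truncation just make precise what the paper assumes via $|S|=k$. You give a worst-case runtime bound where the paper instead computes $\mathbb{E}[T]$.

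As in the paper's Case~3, the argument establishes correctness only with probability at least $1-\delta_t$, because an incorrect $S$ that passes all $t$ views is returned without fallback. It therefore does not support the statement's literal ``always correct'' wording.
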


\begin{proof}
\textbf{\emph{Step 1 (Recursive Identification):}} Run the recursive peeling algorithm (\Cref{thm:bootstrapping}) using 3 identification views to produce candidate set $S$ with $|S| = k$. Cost: $O(\sqrt{N} \log k)$.

\textbf{\emph{Step 2 (Recursive Verification with Parseval Check):}} Compute $t$ verification views (\Cref{def:verification-view}), each via recursive bootstrapping with an independent coprime modulus and fresh affine hashing. Under the $k$-sparse model, \Cref{lem:self-reduction} guarantees each verification view is a valid $k$-sparse instance, so recursive bootstrapping recovers the exact verification spectrum. For each view $v \in \{1, \ldots, t\}$:
\begin{enumerate}
\item Compute the verification spectrum via recursive bootstrapping: $O(\sqrt{N} \log k)$
\item \textbf{\emph{Parseval energy check (deterministic):}} Compute time-domain energy $E_{\text{time}} = \sum_{j} |x_{m_v}[j]|^2$ from the $\sqrt{N}$ decimated samples directly (stride indexing, no sparse recovery needed). Compare against candidate energy $E_{\text{candidate}} = \sum_i |A_i|^2$. If $|E_{\text{time}} - E_{\text{candidate}}| > \epsilon$, flag failure. Cost: $O(\sqrt{N})$
\item \textbf{\emph{Bin-wise residual check:}} Predict contribution of $S$ to view $v$ and compute $E_v$: $O(\sqrt{N})$
\item If either check fails, flag verification failure
\end{enumerate}
Cost: $O(t \cdot \sqrt{N} \log k)$ total.

\textbf{\emph{Step 3 (Decision with Dense Fallback):}}
\begin{itemize}
\item If all $t$ views pass both checks: return $S$
\item If any view fails: discard $S$, compute full $N$-point dense FFT, return exact spectrum. Cost: $O(N \log N)$
\end{itemize}

\needspace{6\baselineskip}
\textbf{\emph{Correctness Analysis:}} Three cases arise:
\begin{enumerate}
\item $S$ is correct AND verification passes: returns correct $S$. \checkmark
\item $S$ is incorrect AND verification catches the error: falls back to dense FFT, returns correct spectrum. \checkmark
\item $S$ is incorrect AND verification misses the error: returns incorrect $S$. This occurs with probability $\leq \delta_t$.
\end{enumerate}

\textbf{\emph{Bounding Case 3:}} The Parseval energy check (Step 2.2) detects false negatives deterministically: if $S$ omits a true frequency, $E_{\text{candidate}} < E_{\text{time}}$ regardless of hashing. This check uses time-domain samples directly (stride indexing into $x[n]$), not the recursively computed spectrum, so it is exact even if recursive bootstrapping fails.

For false positives (wrong frequencies with matching total energy), the bin-wise residual check detects errors with probability $\geq 1 - 2k/m_v$ per view (\Cref{lem:residual-test}). With $t$ independent views:
\begin{equation}
\delta_t \leq \left(\frac{2k}{\sqrt{N}}\right)^t
\end{equation}

For $t = 3$ and $k = 50$, $N = 10^6$: $\delta_3 \leq (100/1000)^3 = 10^{-3}$. For $t = 5$: $\delta_5 \leq 10^{-5}$.

\textbf{\emph{Expected Runtime:}}
\begin{align}
\mathbb{E}[T] &= (1 - \delta_t) \cdot O((3+t)\sqrt{N} \log k) + \delta_t \cdot O(N \log N) \notag\\
&= O(\sqrt{N} \log k) + \left(\frac{2k}{\sqrt{N}}\right)^t \cdot O(N \log N)
\end{align}
For $t \geq 3$ and $k = o(\sqrt{N})$: the fallback contribution is $o(\sqrt{N} \log k)$. Therefore:
\begin{equation}
\mathbb{E}[T] = O(\sqrt{N} \log k)
\end{equation}
\end{proof}

\begin{remark}
To connect the asymptotic bound to practical runtime: for a $16$K-point signal ($N = 2^{14}$) with $k = 10$ significant frequencies, the fast-path cost is $(3 + t) \sqrt{N} \log k \approx 4 \cdot 128 \cdot 3.3 \approx 1{,}700$ operations per verification view. Dense FFT for the same signal costs $N \log N \approx 229{,}000$ operations. The failure probability $\delta_t \leq (2k / \sqrt{N})^t = (0.156)^t$ drops below $10^{-6}$ for $t \geq 8$ verification views, making the total fast-path cost $(3 + 8) \cdot 1{,}700 \approx 18{,}700$ operations, a $12\times$ speedup over dense FFT. The advantage grows without bound as $N$ increases, since dense FFT scales as $N \log N$ while the fast path scales as $\sqrt{N} \log k$. The $O(N \log N)$ deterministic fallback covers the rare inputs that fail verification.
\end{remark}

\begin{corollary}[Concrete Parameterization]
\label{cor:concrete-vegas}
With $t = 3$ verification views (6 total views: 3 identification + 3 verification):
\begin{itemize}
\item Fast-path complexity: $O(6\sqrt{N} \log k)$. For $k = 50$, $N = 10^6$: $\approx 34{,}000$ operations
\item Failure probability: $\delta_3 \leq (2k/\sqrt{N})^3$. For $k = 50$, $N = 10^6$: $\delta_3 \leq 10^{-3}$
\item Worst-case (fallback): $O(N \log N) \approx 20 \times 10^6$ operations
\item Increasing to $t = 5$: $\delta_5 \leq 10^{-5}$, fast-path $O(8\sqrt{N} \log k) \approx 45{,}000$ operations
\end{itemize}

\textbf{\emph{Robustness Beyond $k$-Sparse Model:}} For signals with additional small-magnitude components, verification remains effective provided dominant frequencies exceed the noise floor and the residual energy is bounded. For signals of unknown sparsity or with noise, the Parseval energy check (Step 2.2) provides a deterministic safeguard: if the signal has more than $k$ active frequencies, $E_{\text{time}} > E_{\text{candidate}}$ and fallback is triggered regardless of verification view accuracy. For applications requiring probability-1 correctness, the verification views can be computed via direct dense FFTs at $O(\sqrt{N} \log N)$ per view, increasing total expected runtime to $O(\sqrt{N} \log N)$ but ensuring deterministic verification.
\end{corollary}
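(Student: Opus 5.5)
The corollary is a direct instantiation of \Cref{thm:deterministic-total} with $t=3$ (and then $t=5$), followed by arithmetic for $N=10^6$, $k=50$. The robustness paragraph is then justified using the Parseval check from \Cref{lem:residual-test} and \Cref{def:verification-view}. I would proceed in three steps: complexity counts, failure probability, and the two robustness remarks.

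\textbf{Step 1 (complexity counts).} \Cref{thm:deterministic-total} gives fast-path cost $O((3+t)\sqrt{N}\log k)$. With $t=3$ this is $O(6\sqrt{N}\log k)$. Taking the constant as $1$ for the concrete estimate, $\sqrt{N}=1000$ and $\log_2 50\approx 5.64$ give $6\cdot 1000\cdot 5.64\approx 3.4\times 10^4$. For $t=5$ the same computation gives $8\cdot 1000\cdot 5.64\approx 4.5\times 10^4$. The fallback bound is the dense FFT from Step~3 of that theorem: $N\log_2 N\approx 10^6\cdot 19.9\approx 2\times 10^7$. I would state explicitly that these figures are leading-term operation counts with unit constants, not rigorous bounds.

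\textbf{Step 2 (failure probability).} Substitute into $\delta_t\le (2k/\sqrt{N})^t$ from \Cref{thm:deterministic-total}. Here $2k/\sqrt{N}=100/1000=10^{-1}$, so $\delta_3\le 10^{-3}$ and $\delta_5\le 10^{-5}$. This uses the independence of the $t$ verification views, which comes from the fresh affine parameters in \Cref{def:verification-view}.

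\textbf{Step 3 (robustness remarks).} For the claim about unknown sparsity, I would argue through Check~1. The quantity $E_{\text{time}}$ is computed directly from the decimated samples $x_{m_v}[j]$ by stride indexing. So it reflects all spectral content of the signal, not only the $k$ recovered tones. If components outside $S$ are present, the energy they contribute to the decimated view is not accounted for in $E_{\text{candidate}}$, and the check fails once the discrepancy exceeds $\epsilon$.

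I expect this to be the main obstacle. After aliasing, the decimated energy is $\frac1{m_v}\sum_r|\sum_{f\equiv r}X[f]|^2$, and cross terms between an extra component and a recovered tone in the same bin could in principle cancel. The first thing I would try is to use the randomness of $h_v$: such a collision with any of the $k$ tones happens with probability at most $k/m_v$, reducing the bad case to the same bound as Check~2. Failing that, I would weaken the claim to ``generically'' or ``up to the $\delta_t$ event''.

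For the probability-1 variant, replace each verification view by a dense $m_v$-point FFT of the decimated samples. This costs $O(m_v\log m_v)=O(\sqrt{N}\log N)$ per view, needs no sparsity assumption, and computes $Y_v$ exactly. The total is then dominated by $O(t\sqrt{N}\log N)=O(\sqrt{N}\log N)$ for fixed $t$, matching the statement.
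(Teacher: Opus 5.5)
Your Steps~1 and~2 follow the paper's route. The corollary has no separate proof; it is \Cref{thm:deterministic-total} evaluated at $t=3$ and $t=5$, and that theorem's proof already contains the $\delta_3\le 10^{-3}$ and $\delta_5\le 10^{-5}$ arithmetic. Your counts $6\cdot 10^3\cdot 5.64\approx 3.4\times 10^4$, $8\cdot 10^3\cdot 5.64\approx 4.5\times 10^4$ and $N\log_2 N\approx 2\times 10^7$ are right.

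For the robustness paragraph, the paper's only support is the assertion, in the proof of \Cref{thm:deterministic-total}, that Check~1 is exact ``regardless of hashing'' because $E_{\text{time}}$ comes from stride-indexed samples. Your obstacle is exactly where that breaks. Exactness gives you, up to the factor $m_v/N^2$, the aliased energy $\sum_r\bigl|\sum_{f\equiv r\,(\mathrm{mod}\,m_v)}X[f]\bigr|^2$, not $\sum_f|X[f]|^2$. The claim is false as stated, so it cannot be proved:
\begin{itemize}
\item Add to a $k$-sparse signal two tones $f'\equiv f''\pmod{m_v}$ with coefficients $+c$ and $-c$. They contribute nothing to view $v$. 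With $S$ equal to the original $k$ tones, Check~1 returns the same verdict as on the original signal, although $k+2$ frequencies are active.
\item Place $2^t$ extra tones at $f_0+\sum_v\epsilon_v m_v$, $\epsilon\in\{0,1\}^t$, with signs $(-1)^{|\epsilon|}$. Pairing $\epsilon$ with $\epsilon$ flipped in coordinate $v$ shows the added part aliases to zero in every verification view at once. Each view then sees exactly the original $k$-sparse signal, so both checks behave as for a correct $S$, even with $Y_v$ computed exactly.
\end{itemize}

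Your first repair does not close this. Check~1 uses plain stride decimation, so its collision relation is $f\equiv f'\pmod{m_v}$, which is fixed by the support. Routing the samples through the affine key would not help either. For prime $m_v$ and $a_v\neq 0$, $h_v(f)=h_v(f')\iff a_v(f-f')\equiv 0\iff f\equiv f'\pmod{m_v}$. So $(a_v,b_v)$ only relabels bins, and a collision has probability $0$ or $1$, not $1/(m_v-1)$. The same fact means the independence you invoke in Step~2 cannot come from fresh affine parameters; any probability in \Cref{lem:residual-test} must come from the placement of the support.

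Your repair has further problems:
\begin{itemize}
\item Your count $k/m_v$ only covers an extra tone hitting a recovered one. The $\pm c$ example shows extra components cancelling among themselves.
\item Under unknown sparsity or noise, the number of extra components is not bounded by $k$; dense noise occupies every bin.
\item $E_{\text{time}}>E_{\text{candidate}}$ presupposes that $S$ carries the true amplitudes of a subset of the support. That need not hold once the $k$-sparse model fails.
\end{itemize}
So only your fallback is viable: the claim must be weakened. One option is to restrict to $S$ being a subset of the true support with true amplitudes, with that support collision-free modulo some verification modulus $m_v$. Another is a probability bound over the placement of a bounded number of extra tones.

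The same caution applies to your last step. A dense $m_v$-point FFT gives exact $Y_v$ at $O(\sqrt N\log N)$ per view, which is all your argument (and \Cref{def:verification-view}) establishes. It does not give probability-1 correctness. The $(2k/m_v)^t$ term of \Cref{lem:residual-test} comes from collision masking of $d=X-\hat X_S$, which an exact $Y_v$ does not remove; the zero-aliasing pattern above is an instance.
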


\subsection{Comparison with Competing Algorithms}

We compare the proposed 3-view keyed sparse FFT with dense FFT and representative sparse FFT approaches to identify the regimes in which each method dominates. A side-by-side comparison of representative sparse FFT algorithms is given in Table~\ref{tab:comparison} (Section~\ref{sec:preliminaries}); the present subsection focuses on the crossover regimes implied by that comparison.

\textbf{\emph{Crossover Analysis:}}

\textit{3-View Keyed vs Dense FFT:}
\begin{align}
\sqrt{N} \log N + kN &< N \log N \notag \\
\text{when } k &< \frac{N \log N - \sqrt{N} \log N}{N}.
\end{align}
which simplifies to $k \approx \sqrt{N}$.

For $N = 10^6$: keyed approach wins when $k < 1000$.

\textit{3-View Keyed vs HIKP-SODA:}
\begin{equation}
\sqrt{N} \log N < k \log N \quad \text{when } k > \sqrt{N}
\end{equation}
For $N = 10^6$: crossover at $k \approx 1000$. Keyed approach efficient for $k \ll \sqrt{N}$.

\textit{3-View Keyed vs FFAST:} FFAST reaches $O(k \log k)$ using $O(k)$ samples but restricts $N$ to $N = k \cdot 2^t$ for integer $t$, ruling out arbitrary DFT lengths. The present construction accepts any $N$ with $k < \sqrt{N}/2$ (no prime-power factorization constraint on $N$), so FFAST dominates when its structural constraint on $N$ can be engineered into the pipeline and this work dominates otherwise.

\textbf{\emph{Theoretical Sweet Spot:}} $k \in [30, 100]$ and $N \geq 500K$ where $\sqrt{N}$-sized FFTs amortize more efficiently than $k$-sized iterative stages.

\begin{remark}[Canonical Complexity Statement]
\label{rem:canonical-complexity}
Combining the peeling-only realization (\Cref{sec:complexity-peeling}) with recursive bootstrapping (\Cref{sec:complexity-recursive}), the implemented algorithm runs in expected $O(\sqrt{N} \log k)$ time, with no $O(k^2)$ term incurred. The $O(\sqrt{N} \log N)$ FFT preprocessing floor is also eliminated. The $O(N \log N)$ worst-case bound applies only to the dense-FFT fallback when the load factor exceeds the peeling threshold at any recursion level.
\end{remark}

\needspace{12\baselineskip}
\section{Algorithm Specification}
\label{sec:algorithm}

\subsection{High-Level Algorithm Structure}

The algorithm combines three decimated FFT views, keyed gating via 2-of-3 CRT agreement, and iterative peeling over the decimated views into a single pipeline. The five steps below specify the procedure from signal input through top-$k$ output, with each step's contribution to the overall $O(\sqrt{N} \log k)$ bound identified in the subsequent complexity analysis.

Algorithm~\ref{alg:main} is written in its conceptual gated form to expose the CRT consistency mechanism. The implemented fast path replaces Step~3 with peeling-only singleton recovery, as stated in Remark~\ref{rem:conceptual-vs-implemented}.

\begin{algorithm}[H]
\caption{Three-View Keyed Gating Sparse FFT}\label{alg:main}
\textbf{Input:} signal $x[n]$, sparsity $k$, coverage factor $\alpha$.\\
\textbf{Output:} the $k$-sparse spectrum of $x$.
\begin{enumerate}
  \item Select coprime moduli $m_1, m_2, m_3 \approx \sqrt{N}$ with $M = m_1 m_2 m_3 \geq N$ and independent affine hashing parameters $(a_i, b_i)$ per view.
  \item For each view $i$: decimate $x$ by $d_i = N/m_i$, compute the length-$m_i$ FFT, and extract the top-$\alpha k$ occupied residues $\mathcal{R}_i$.
  \item For each pair $(r_1, r_2) \in \mathcal{R}_1 \times \mathcal{R}_2$: reconstruct $f_{12} \in [0, m_1 m_2)$ via 2-view Garner CRT, and retain the pair if $\hat{r}_3 = f_{12} \bmod m_3 \in \mathcal{R}_3$ (the 2-of-3 keyed gate; see \Cref{def:2of3}).
  \item Iteratively peel the $\Theta(k)$ surviving candidates across the three decimated views: detect singleton bins, subtract their contributions, repeat for $O(\log k)$ rounds (\Cref{thm:bootstrapping}).
  \item If more than $2k$ candidates remain unpeeled, invoke a 4th decimated view or fall back to dense FFT; otherwise return the top-$k$ validated frequency-magnitude pairs.
\end{enumerate}
\end{algorithm}

\begin{remark}[Conceptual vs. Implemented Form]
\label{rem:conceptual-vs-implemented}
In practice, the algorithm is implemented in peeling-only mode (\Cref{sec:complexity-peeling}), which avoids explicit pair enumeration. The gating formulation in Algorithm~1 is conceptual: it makes the candidate-reduction analysis tractable but is not the form actually executed. All complexity claims for the implemented algorithm therefore reference the peeling-only realization, and no $O(k^2)$ pair-enumeration cost is incurred.
\end{remark}

\conclusionplain{The 5-step pipeline combines three $O(\sqrt{N} \log \sqrt{N})$ FFTs with $O(k)$ keyed gating and $O(k)$ CRT reconstructions, followed by $O(\sqrt{N} \log k + k)$ iterative peeling validation with $O(1)$ bin updates. The keyed gating step reduces candidates to $\Theta(k)$; peeling validates and estimates amplitudes entirely on the decimated views without length-$N$ Goertzel.}

\needspace{5\baselineskip}
\subsection{Algorithm Flowchart}

\Cref{fig:flowchart} presents the eight-phase pipeline with its keyed gating decision point and optional fourth-view fallback.

\begin{figure}[!htbp]
\centering
\begin{tikzpicture}[scale=0.85, transform shape,
  process/.style={rectangle, draw=black, thick, minimum width=2.8cm, minimum height=0.75cm, align=center, font=\small},
  decision/.style={diamond, aspect=2.5, draw=black, thick, align=center, font=\small},
  arrow/.style={thick,->,>=stealth}]

\node[process] (moduli) at (0,0) {1. Moduli Selection};
\node[process] (padding) at (0,-1.2) {2. Zero-Padding\\to length $M$};
\node[process] (decim) at (0,-2.4) {3. Decimation\\(3 views)};
\node[process] (fft) at (0,-3.6) {4. FFT\\($3 \times O(\sqrt{N} \log N)$)};
\node[process] (gating) at (0,-4.8) {5. Keyed Gating\\($O(k^2) \to O(k)$)};
\node[decision] (check) at (0,-6.3) {Candidates\\$> 2k$?};
\node[process] (peeling) at (0,-7.8) {6. Iterative Peeling\\($O(\sqrt{N} \log k)$)};
\node[process] (validate) at (0,-9.0) {7. Candidate Validation};
\node[process] (output) at (0,-10.2) {8. Top-$k$ Selection};

\node[process] (addview) at (4.5,-6.3) {Add 4th View\\(Fallback)};

\draw[arrow] (moduli) -- (padding);
\draw[arrow] (padding) -- (decim);
\draw[arrow] (decim) -- (fft);
\draw[arrow] (fft) -- (gating);
\draw[arrow] (gating) -- (check);
\draw[arrow] (check) -- node[right, font=\small] {No} (peeling);
\draw[arrow] (check) -- node[above, font=\small] {Yes} (addview);
\draw[arrow] (addview) |- ([yshift=0.3cm]gating.east);
\draw[arrow] (peeling) -- (validate);
\draw[arrow] (validate) -- (output);

\end{tikzpicture}
\caption{Eight-phase algorithm flowchart showing the keyed gating sparse FFT pipeline. The keyed gating phase illustrates the conceptual 2-of-3 CRT agreement that reduces the surviving candidate set from $O(k^2)$ to $O(k)$; in the implemented fast path, peeling-only singleton recovery avoids explicit $O(k^2)$ enumeration. Iterative peeling (6) validates candidates with geometric convergence under load factor $\lambda < 1$. Decision point checks for excessive candidates and optionally adds a fourth view for additional filtering.}
\label{fig:flowchart}
\end{figure}

\subsection{Decimation Selection}

Given signal length $N$ and sparsity $k$, the moduli selection procedure classifies the sparsity regime by the ratio $\rho = k/\sqrt{N}$, selects coprime primes near $\sqrt{N}$, and draws independent affine hashing parameters $(a_i, b_i)$ per view. The key constraint is $m_1 \cdot m_2 \geq N$ for CRT reconstruction uniqueness (\Cref{thm:coprime}).

\subsection{Garner CRT Reconstruction}

The 2-view Garner formula used in the gating step (Step~3 of \Cref{alg:main}) computes $f_{12} \in [0, m_1 m_2)$ from the pair $(r_1, r_2)$ via three modular operations: $\text{diff} = (r_2 - r_1) \bmod m_2$, $u_2 = \text{diff} \cdot \gamma_{12} \bmod m_2$ with precomputed $\gamma_{12} = m_1^{-1} \bmod m_2$, and $f_{12} = r_1 + u_2 \cdot m_1$.

\textbf{\emph{Correctness Verification:}}
\begin{align}
f \bmod m_1 &= (r_1 + u_2 \times m_1) \bmod m_1 = r_1 \\
f \bmod m_2 &= (r_1 + u_2 \times m_1) \bmod m_2 \nonumber \\
&= r_1 + ((r_2 - r_1) \times \gamma \times m_1) \bmod m_2 \nonumber \\
&= r_1 + (r_2 - r_1) \times (m_1^{-1} \times m_1) \bmod m_2 \nonumber \\
&= r_1 + (r_2 - r_1) \bmod m_2 = r_2
\end{align}

Having specified the algorithm end-to-end, we now turn to proving that it recovers the exact spectrum with the probability guarantees claimed in \Cref{thm:deterministic-total}.

\needspace{12\baselineskip}
\section{Correctness Proofs}
\label{sec:correctness}

Correctness of the proposed algorithm follows from three components:
\begin{enumerate}
\item \emph{No false negatives:} all true frequencies are recovered.
\item \emph{Controlled false positives:} spurious candidates are bounded and decay under verification.
\item \emph{Certified validation:} independent verification detects errors and triggers deterministic fallback.
\end{enumerate}

Together, these imply exact recovery with high probability on the fast path and guaranteed correctness via fallback.

\subsection{Complete Frequency Recovery}

\begin{theorem}[Complete Frequency Recovery]
\label{thm:complete-frequency-recovery}
Let $x[n]$ be an $N$-point signal with $k$-sparse Fourier spectrum. Under \Cref{assump:v1}, independent affine hashing across views, and signal-to-noise ratio satisfying
\begin{equation}
\mathrm{SNR} \geq (\beta + \sqrt{\log(kN)})^2,
\end{equation}
the algorithm recovers all $k$ frequencies with probability at least
\begin{equation}
1 - \delta,
\end{equation}
where
\begin{equation}
\delta = O(k^2/\sqrt{N}) + \exp(-\beta^2).
\end{equation}
\end{theorem}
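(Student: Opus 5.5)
The plan is to split the failure event into two parts and take a union bound. The first part is a \emph{structural} event: the hashed support does not admit singleton isolation in enough views. The second is a \emph{detection} event: some singleton bin is misclassified or some frequency is misestimated because of noise. The structural part should give the $O(k^2/\sqrt{N})$ term. The noise part should give the $\exp(-\beta^2)$ term.

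\textbf{Step 1 (structural term).} By \Cref{assump:v1}(iii) and Part~A of \Cref{thm:adaptive-2of3}, in each view $i$ and for each pair of distinct true frequencies $f\neq f'$, the collision probability is at most $2/m_i$. A union bound over the $\binom{k}{2}$ pairs bounds the probability that view $i$ contains any collision by $O(k^2/m_i)=O(k^2/\sqrt{N})$. I do not want to rely on every view being collision-free, since that is stronger than needed. Instead I would use the peeling criterion of \Cref{lem:geometric_peeling}: peeling stalls only if a nonempty ``2-core'' remains, i.e.\ some set of remaining tones all collide in all three views. The dominant such configuration is a pair colliding in all three views simultaneously. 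By independence across views this has probability $O(k^2/\sqrt{N}^{\,3})$, which is already within the bound. Larger stopping sets are rarer still. So the cruder per-view bound $O(k^2/\sqrt{N})$ suffices for the stated $\delta$, and I will settle for it. Conditioned on the complement, every tone is eventually a singleton in some view during the $O(\log k)$ rounds.

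\textbf{Step 2 (noise term).} On the good structural event, each singleton bin's shifted measurements equal $A_f e^{j\phi(f)s}$ plus noise. I would model the noise as complex Gaussian with variance normalized so that $\mathrm{SNR}=|A_f|^2/\sigma^2$ per bin. The frequency residue is read off the phase ratios across the $S\in\{2,3\}$ shifts. The CV singleton test is correct provided every noise term has magnitude below $|A_f|/(\text{const})$. There are $O(k\log k)$ singleton decodings and at most $O(\sqrt{N})$ bins per view, so at most $O(kN)$ tests in all. A Gaussian tail bound plus a union bound gives
\[
\Pr\bigl[\exists\text{ test with } |w|>\sigma(\beta+\sqrt{\log(kN)})\bigr]\le kN\, e^{-(\beta+\sqrt{\log(kN)})^2}\le e^{-\beta^2}.
\]
This is exactly where the hypothesis $\sqrt{\mathrm{SNR}}\ge\beta+\sqrt{\log(kN)}$ enters: it guarantees the signal amplitude exceeds this noise threshold, so each residue and amplitude is decoded within the rounding tolerance. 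Correct residues from two views then give the exact frequency by \Cref{lem:2view_garner} and \Cref{thm:coprime}. Peeling subtracts the estimated contribution, leaving a residual error of the order of the noise; I would absorb this by adjusting constants.

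\textbf{Step 3 (conclusion).} Combining Steps 1 and 2 with a union bound over the two failure events gives recovery of all $k$ frequencies with probability at least $1-O(k^2/\sqrt{N})-e^{-\beta^2}$. Recursion levels only add a factor $O(\log\log N)$ via \Cref{thm:bootstrapping}, which I would absorb into the constant.

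The main obstacle is Step 2's error propagation through peeling. An amplitude misestimate in an early round is subtracted into other bins, and those errors could accumulate across $O(\log k)$ rounds and trip later CV tests. I would first handle this by showing that each bin receives subtractions from $O(1)$ tones in expectation (since the load factor is below $0.1$). The accumulated error would then be a constant multiple of the per-test noise, which the slack in $\beta$ can absorb.
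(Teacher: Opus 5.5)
Your Gaussian-tail accounting of the $e^{-\beta^2}$ term is more explicit than the paper's sketch. The step that fails is the passage from per-view singletons to the residue pair you feed to \Cref{lem:2view_garner}. A tone that is a singleton in ``some view'' yields only one residue, and Garner needs two. Even on the event that all three views are collision-free, each view just shows $k$ singleton bins whose indices already give the residues $f \bmod m_i$. What the algorithm lacks is which singleton of view~1 belongs with which singleton of view~2, i.e.\ exactly the $k^2$ pair explosion the paper is built around.

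Reading this off the phase ratios over $S\in\{2,3\}$ shifts cannot work at the stated SNR. For a singleton at residue $r$ in view $i$, the unknown is $\ell$ in $f=r+\ell m_i$ with $0\le\ell<N/m_i\approx\sqrt N$. A unit shift separates consecutive values of $\ell$ by only $2\pi m_i/N\approx 2\pi/\sqrt N$ in phase. With noise a constant fraction of $|A_f|$, each ratio carries $O(1)$ reliable bits, whereas about $\tfrac12\log_2 N$ bits are needed. So ``decoded within the rounding tolerance'' requires $\mathrm{SNR}=N^{\Omega(1)}$, while the hypothesis only guarantees $\Theta(\log(kN))$ for fixed $\beta$.

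The paper supplies the association through the components your decomposition drops: the 2-of-3 gate, whose Part~A in \Cref{thm:gating} passes every true pair deterministically, followed by the verification of \Cref{lem:residual-test}. Your collision event does not control false associations. With all views collision-free, \Cref{thm:gating} still predicts about $k^3/\sqrt N$ surviving false pairs in expectation, which is not $O(k^2/\sqrt N)$. So you must either read ``recovers'' as recall only, in which case Part~A closes the gap at no probabilistic cost, or add a gating/verification failure term.

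There are also smaller issues. Your claim that recursion costs only an $O(\log\log N)$ factor ``absorbed into the constant'' is wrong twice over. First, $\log\log N$ is not a constant. Second, the per-level failure probability is not uniform: at depth $\ell$ the moduli are about $N^{1/2^{\ell+1}}$, so your own union bound gives $O(k^2/N^{1/4})$ one level down. Restrict to the non-recursive pipeline, as the paper's sketch implicitly does.

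The constant in ``noise below $|A_f|/C$'' also cannot be absorbed into $\beta$. With threshold $(\beta+\sqrt{\log(kN)})/C$, the union bound gives $kN\,e^{-(\beta+\sqrt{\log(kN)})^2/C^2}$. For $C>1$ and fixed $\beta$ this grows with $kN$, so the SNR hypothesis must carry the factor $C^2$.

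On the other hand, your ``main obstacle'' is moot on your own structural event. If all three views are collision-free, every tone is a singleton in round one, and no decoding ever sees a subtraction residue.
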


\begin{proof}[Proof Sketch]
The result follows by combining five components established earlier:
\begin{itemize}
\item \emph{Residue generation:} Each true frequency produces consistent residues across all views (\Cref{thm:coprime}).
\item \emph{Gating correctness:} The 2-of-3 agreement mechanism preserves all true pairs while reducing false pairs to $O(k^2/\sqrt{N})$ in expectation (\Cref{thm:gating}).
\item \emph{CRT reconstruction:} Each valid residue pair maps to a unique frequency (\Cref{thm:crt_reconstruction}).
\item \emph{Peeling recovery:} Under load factor $\lambda < 1$, iterative peeling extracts all frequencies in $O(\log k)$ rounds (\Cref{lem:geometric_peeling}).
\item \emph{Verification filtering:} Independent verification removes remaining false positives with exponentially small failure probability (\Cref{lem:residual-test}).
\end{itemize}
Combining these bounds yields the stated recovery probability.
\end{proof}

\subsection{No False Negatives}

\begin{theorem}[No False Negatives]
\label{thm:no-false-negatives}
Under \Cref{assump:v1}, no true frequency is lost during the algorithm with probability at least
\begin{equation}
1 - \delta_1,
\end{equation}
where
\begin{equation}
\delta_1 = O(k^3/\sqrt{N}).
\end{equation}
\end{theorem}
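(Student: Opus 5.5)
We track a fixed true frequency $f \in F$ through the pipeline and identify every place where it could be discarded. We then bound the probability of each such event and take a union bound over the $k$ true frequencies. The pipeline offers three possible loss points: residue extraction (the residue $f \bmod m_i$ is not among the retained residues $R_i$), the keyed gate (the true pair $(r_1,r_2)$ is rejected), and peeling (the tone is never isolated in any view, or is corrupted by an earlier erroneous subtraction). The verification and fallback stage cannot lose a frequency on the returned output. The Parseval check of \Cref{lem:residual-test} deterministically flags any omitted true frequency, and any flagged run falls back to the exact dense FFT.

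\emph{Gate and reconstruction.} Part A of \Cref{thm:gating} shows that a true pair passes the gate deterministically whenever all three residues of $f$ lie in $R_1, R_2, R_3$. By \Cref{thm:crt_reconstruction} and \Cref{lem:2view_garner}, that pair reconstructs to $f$ exactly. These two steps therefore contribute no failure probability. By \Cref{lem:decimation_sparsity}, every true frequency produces a nonzero (or, under cancellation, colliding) bin in every view. Since each $R_i$ retains $\alpha k \ge k$ residues, omission from $R_i$ can only occur through collision-induced magnitude cancellation. So we reduce residue-level loss to a collision event. Under the 2-universal affine hashing of \Cref{assump:v1}(iii), $f$ collides with some other tone in view $i$ with probability at most $\sum_{f'\neq f} 2/m_i \le 2k/\sqrt{N}$.

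\emph{Peeling.} This is the main obstacle. \Cref{lem:geometric_peeling} only gives a constant per-round isolation rate, whereas the target is a bound of order $k^3/\sqrt{N}$. The plan is to bound the bad event directly rather than iterate rounds. A tone is permanently unrecoverable only if it lies in a stopping set (2-core) of the three-view hypergraph. The smallest obstruction is a tone colliding in all three views. Using cross-view independence, the probability that a fixed $f$ collides in all three views is at most $(2k/\sqrt{N})^3$. A second class of errors comes from a false singleton: a multi-occupancy bin that passes the CV test and gets peeled with a wrong value, which then contaminates $f$. Counting such collision configurations among pairs and triples of tones gives at most $O(k^2)$ configurations, each with probability $O(k/\sqrt{N})$. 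This yields $O(k^3/\sqrt{N})$ per run. We first try this pairwise/triple counting and absorb the true 2-core term, which is smaller. If the CV false-singleton probability cannot be made $O(1/\sqrt{N})$ per bin, we fall back to charging it through the gate's false-pair count $O(k^3/\sqrt{N})$ from Part B of \Cref{thm:gating}. The justification is that each spurious subtraction corresponds to a false surviving pair.

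\emph{Assembly.} The union bound over the $k$ true frequencies and the three views gives $3k\cdot 2k/\sqrt{N} = O(k^2/\sqrt{N})$ for extraction loss, plus $O(k^3/\sqrt{N})$ for peeling contamination. Together these give $\delta_1 = O(k^3/\sqrt{N})$. Finally, we note that any residual loss is caught by Parseval and corrected by fallback, so the bound is needed only on the fast path.
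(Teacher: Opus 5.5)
Your route is the paper's. Its proof sketch also follows a true frequency through decimation, top-$\alpha k$ coverage, the gate (deterministic by Part A of \Cref{thm:gating}), CRT and peeling, and it attributes every loss to ``rare collision events or insufficient coverage'' without computing anything. You try to make this quantitative, and two of your computations do not hold as written.

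The first problem is the per-pair collision bound, which cannot come from item (iii) of \Cref{assump:v1}. Decimation already aliases $f \mapsto f \bmod m_i$. For prime $m_i$ and $a_i \not\equiv 0$, we have $a_i f + b_i \equiv a_i f' + b_i \pmod{m_i}$ if and only if $f \equiv f' \pmod{m_i}$. So for a fixed support the collision indicator is the same ($0$ or $1$) for every draw of $(a_i,b_i)$; for example, $f$ and $f+m_1$ always share a view-$1$ bin. Universality over $\mathbb{Z}_{m_i}$ says nothing about distinct integers in $[0,N)$. The $O(1/m_i)$ rate, and the cross-view independence behind your $(2k/\sqrt{N})^3$ 2-core estimate, must come from random placement of the support. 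That is the uniformly-random-support hypothesis of \Cref{thm:gating}, and it is what the paper means by ``expectation over placement''; item (iv) does not supply it. For a uniformly random $k$-subset, each pair satisfies $\Pr[f \equiv f' \pmod{m_i}] \le (N/m_i)/(N-1)$.

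The second problem is the false-singleton contamination term, the only place your $k^3$ appears. It rests on an unexplained count, and your backup route fails. Part B of \Cref{thm:gating} bounds the expected number of false pairs surviving a gate that the peeling-only realization never runs, and nothing identifies a wrong subtraction with such a pair. You need neither that route, nor the 2-core analysis, nor any error rate for the CV test. Let $\mathcal{E}$ be the event that no two true frequencies are congruent modulo any of $m_1,m_2,m_3$. Then $\Pr[\mathcal{E}^c] \le 3\binom{k}{2}\max_i (N/m_i)/(N-1) = O(k^2/\sqrt{N})$. On $\mathcal{E}$, each view has exactly $k \le \alpha k$ nonzero bins, each holding a single true coefficient. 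Consequently:
\begin{itemize}
\item nothing cancels, so every true residue lies in $R_i$;
\item every true pair passes the gate and reconstructs exactly;
\item peeling sees only genuine singletons, which pass the CV test in the noiseless model, so it never meets a multi-ton bin that could be misclassified, and it extracts every tone.
\end{itemize}
Hence $\delta_1 = O(k^2/\sqrt{N})$, which is stronger than the stated $O(k^3/\sqrt{N})$.
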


\begin{proof}[Proof Sketch]
True frequencies are preserved through all stages:
\begin{itemize}
\item \emph{Decimation and FFT:} Linear transforms preserve all spectral components.
\item \emph{Coverage:} With $\alpha \geq 1$, each true frequency appears among the top $\alpha k$ bins with high probability.
\item \emph{Gating:} CRT consistency ensures all true pairs pass deterministically.
\item \emph{Peeling:} Singleton isolation guarantees extraction once load factor $\lambda < 1$.
\end{itemize}
Failure occurs only due to rare collision events or insufficient coverage, yielding the stated bound.
\end{proof}

\subsection{False Positive Control}

After gating, the expected number of false candidate pairs is
\begin{equation}
\mathbb{E}[\text{false pairs}] = O(k^2/\sqrt{N}).
\end{equation}

These may produce spurious reconstructed frequencies, but their amplitudes are bounded near the noise floor. Under the Gaussian noise model, the probability that a false candidate exceeds the detection threshold is
\begin{equation}
\exp(-\beta^2),
\end{equation}
yielding expected false positives
\begin{equation}
\mathbb{E}[\text{false positives}] = O\!\left(k^2/\sqrt{N} + \exp(-\beta^2)\right).
\end{equation}

\subsection{Verification and Certified Recovery}

Independent verification views detect incorrect candidates without requiring full-length FFT evaluation.

\begin{lemma}[Verification Soundness]
\label{lem:verification-soundness}
Under \Cref{assump:v1}, let $S$ be a candidate set. If $S$ is incorrect, the probability that it passes all verification checks across $t$ independent views is bounded by
\begin{equation}
\left(\frac{2k}{\sqrt{N}}\right)^t.
\end{equation}
\end{lemma}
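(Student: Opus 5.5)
The plan is to reduce \Cref{lem:verification-soundness} to the single-view bound of \Cref{lem:residual-test} and then use independence across views to multiply the per-view miss probabilities. Fix an incorrect candidate set $S$. Form the difference spectrum $d = X - \hat{X}_S$, which is nonzero by assumption. Since $|S| \le k$ and $|\mathrm{supp}(X)| \le k$, we have $\|d\|_0 \le 2k$. This sparsity bound on $d$ is the only structural fact we need about $S$. The argument splits into two cases according to how $S$ is wrong.

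\emph{Case 1 ($S$ omits energy).} Suppose $\sum_i |A_i|^2 \neq \sum_f |X[f]|^2$, for instance because a true frequency is missing. Then Check 1 (Parseval) fails in every verification view. The energy $E_{\text{time}}$ is computed directly from stride-indexed time samples, and by Parseval it equals the total spectral energy (suitably normalized), independently of the hash parameters $(a_v,b_v)$. So the probability of passing is $0 \le (2k/\sqrt{N})^t$.

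\emph{Case 2 (energies match but $S \neq X$).} Here we work per view. For view $v$, the residual $Y_v - \hat{Y}_v$ is exactly the $h_v$-aliased image of $d$: bin $r$ receives the sum of $d[f]e^{j\phi_{v,f}}$ over $f$ with $h_v(f)=r$. Choose $f^*$ maximizing $|d[f]|$. If no other element of $\mathrm{supp}(d)$ hashes to $h_v(f^*)$, then bin $h_v(f^*)$ contains $d[f^*]e^{j\phi}\neq 0$, so $E_v>0$ and the view rejects. By 2-universality of the affine family over prime $m_v$ (Part A of \Cref{thm:adaptive-2of3}) and a union bound over the at most $2k-1$ other support points,
\[
\Pr_{(a_v,b_v)}[\text{view } v \text{ confirms } S] \le \frac{2k-1}{m_v-1} \lesssim \frac{2k}{m_v}\approx\frac{2k}{\sqrt{N}}.
\]

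\emph{Combining views.} The per-view event depends only on $(a_v,b_v)$ for the fixed $S$. Since $S$ is produced by the identification phase with randomness independent of the verification views (\Cref{def:verification-view}), we can condition on $S$. The $t$ events are then independent, and the product bound $(2k/\sqrt{N})^t$ follows. Averaging over $S$ preserves the bound.

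\emph{Main obstacle.} The delicate step is Case 2. First, the bound $1/(m_v-1)$ is slightly larger than $1/m_v$, so I would absorb it by taking $m_v \ge \sqrt{N}$ or by using the $2k-1$ slack in the numerator. Second, and more important, the verification spectrum $Y_v$ must be exact. I would invoke \Cref{def:verification-view} and \Cref{lem:self-reduction} under exact $k$-sparsity to justify this. If recursive recovery of $Y_v$ could fail, I would add that failure probability, or fall back to computing $Y_v$ by a dense $\sqrt{N}$-point FFT.
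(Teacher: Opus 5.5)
Your route is the paper's own. \Cref{lem:residual-test}, on which the paper's sketch rests, uses the same Parseval check, the same largest entry $d[f^*]$ of $d=X-\hat X_S$, the same masking bound $\|d\|_0/m_v$ from 2-universality, and the same product over $t$ views.

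The genuine gap is the 2-universality step, which you inherit from that lemma and from Part~A of \Cref{thm:adaptive-2of3}. It carries all of the probability, and the trouble is not the $1/(m_v-1)$ constant you flag. For prime $m_v$ and $a_v\in[1,m_v-1]$, $a_v$ is a unit modulo $m_v$, so $h_v(f)=h_v(f')$ holds if and only if $m_v\mid f-f'$, for \emph{every} choice of $(a_v,b_v)$. The affine family is universal only on the domain $\mathbb{Z}_{m_v}$, where it is a bijection. On frequencies in $[0,N)$ with $N\approx m_v^2$, it merely relabels the aliasing map $f\mapsto f\bmod m_v$ that the decimated FFT imposes anyway.

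So whether $d[f^*]$ is masked is a deterministic function of $m_v$ and $\mathrm{supp}(d)$. For fixed $S$, the per-view miss probability over $(a_v,b_v)$ is $0$ or $1$, so the bound $(2k-1)/(m_v-1)$ does not follow and independence across views gives nothing to multiply. In fact the bound is false for particular incorrect $S$ when the only randomness is in $(a_v,b_v)$:
\begin{itemize}
\item \emph{One view.} Write $\phi_v(f)$ for the stride phase and take any $f_0<N-m_v$. Then $X=c\,\delta_{f_0}$ and $S=\{(f_0+m_v,\,c\,e^{j(\phi_v(f_0)-\phi_v(f_0+m_v))})\}$ give zero residual in every bin and equal energies. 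This incorrect $S$ passes view $v$ with probability~$1$.
\item \emph{Two coprime views, unshifted decimation.} Take $X=\delta_0+\delta_{m_1+m_2}$ and $S=\{(m_1,1),(m_2,1)\}$. Then $d=\delta_0-\delta_{m_1}-\delta_{m_2}+\delta_{m_1+m_2}$, whose aliased sums vanish modulo both $m_1$ and $m_2$, so both views accept with certainty.
\end{itemize}

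A repair needs genuine randomness in the bin map itself. One option is to draw each $m_v$, independently of $S$, from the $\Theta(\sqrt N/\log N)$ primes in $[\sqrt N,2\sqrt N]$. Two distinct such primes have product exceeding $N$, so each nonzero $f'-f^*$ is divisible by at most one of them. The masking probability is then $O(k\log N/\sqrt N)$ per view, which matches the claimed bound only up to a $\log N$ factor per view. Another option is an actual spectral permutation $f\mapsto\sigma f+b \bmod N$ applied in the time domain before decimating.

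Separately, your Case~1 is not deterministic either. The decimated energy is proportional to $\sum_r\bigl|\sum_{f\equiv r \,(\mathrm{mod}\, m_v)}X[f]\bigr|^2$, which contains cross terms whenever true frequencies collide in view $v$. Also, ``$E_v>0$ implies rejection'' requires tolerance $\epsilon=0$. Since your Case~2 never uses the energy-matching hypothesis, you can drop Case~1 entirely. That leaves the whole argument resting on the invalid collision bound above.
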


\begin{proof}[Proof Sketch]
The Parseval energy check detects missing energy deterministically. The bin-wise residual check detects incorrect frequencies unless masked by collisions. Independent hashing ensures collision masking occurs with probability at most $O(k/\sqrt{N})$ per view. Repetition across independent views yields exponential decay.
\end{proof}

\subsection{Overall Success Probability}

Combining the bounds:
\begin{align}
\delta_1 &= O(k^3/\sqrt{N}) && \text{(missed true frequencies)} \notag \\
\delta_2 &= O(k^2/\sqrt{N}) && \text{(false positives)} \notag \\
\delta_3 &= (2k/\sqrt{N})^t && \text{(verification failure)}
\end{align}
yields total failure probability
\begin{equation}
\delta = O(k^2/\sqrt{N}) + \exp(-\beta^2) + (2k/\sqrt{N})^t.
\end{equation}

For $k \ll \sqrt{N}$ and moderate $t$, this is arbitrarily small.

\subsection{Certified Algorithm with Fallback}

\begin{theorem}[Certified Sparse FFT with Fallback]
\label{thm:certified-fallback}
Under \Cref{assump:v1}, the verification-augmented algorithm satisfies:
\begin{itemize}
\item Expected runtime: $O(\sqrt{N} \log k)$.
\item Success probability: $\geq 1 - \delta$.
\item Worst-case runtime: $O(N \log N)$ via dense FFT fallback.
\end{itemize}
Thus, the algorithm provides fast expected performance with deterministic correctness guarantees.
\end{theorem}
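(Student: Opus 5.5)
The plan is to assemble \Cref{thm:certified-fallback} from \Cref{thm:bootstrapping}, \Cref{lem:residual-test} and \Cref{thm:deterministic-total}. The algorithm analysed is the one from the proof of \Cref{thm:deterministic-total}: recursive peeling on three identification views, then $t$ verification views with the two-part test, then the dense FFT if any view rejects. Each of the three bullets is handled separately, and then combined through a case split on whether the fast path succeeds.

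\emph{Worst-case runtime.} This bullet needs no probabilistic input. The fast path is a finite sequence of steps. Recursive identification has depth $O(\log\log N)$ by \Cref{thm:bootstrapping}, and any branch that exceeds the load-factor threshold is closed by a dense FFT of size at most $\sqrt{N}$, costing $O(\sqrt{N}\log N)$ per branch. The verification steps cost $O(t\sqrt{N}\log k)$ in the bootstrapped form, or $O(t\sqrt{N}\log N)$ if they are computed densely. To get a deterministic bound I would cap the number of peeling rounds at $c\log k$ and abort to the fallback if the cap is reached. Every execution path then costs at most $O(\sqrt{N}\log N)+O(N\log N)=O(N\log N)$.

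\emph{Expected runtime.} I would follow the expected-runtime calculation of \Cref{thm:deterministic-total}:
\[
\mathbb{E}[T]\le O((3+t)\sqrt{N}\log k)+\Pr[\text{fallback}]\cdot O(N\log N).
\]
The fallback fires in two situations: verification correctly rejects a wrong $S$, or a recursion branch hits its cutoff. So $\Pr[\text{fallback}]$ is at most the peeling failure probability $O(\delta\log\log N)$ from \Cref{thm:bootstrapping} and \Cref{lem:geometric_peeling}. This step is the main obstacle. The per-level failure probability has to satisfy $\delta\cdot N\log N=O(\sqrt{N}\log k)$, that is $\delta=O(\log k/(\sqrt{N}\log N))$. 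The bounds stated earlier give only constant-fraction progress per round, with probability $1-e^{-\Omega(k)}$. My first attempt would be to argue that a peeling failure leaves a nonempty 2-core, and that under the Poisson model of \Cref{lem:geometric_peeling} with $\lambda\le 0.1$ and three views, the probability of a 2-core of size $s$ decays like $(c\lambda)^{s}$, giving $\Pr[\text{2-core}]=O(k^2/N)$ at the top level. If that holds, multiplying by $N\log N$ gives $O(k^2\log N)$, which is $o(\sqrt{N}\log k)$ only when $k=o(N^{1/4})$. If needed, I would therefore state the expected-runtime bound in the gating regime $k=o(N^{1/4})$ of \Cref{thm:gating}, and otherwise in the form $O(\sqrt{N}\log k)$ plus a fallback term.

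\emph{Success probability.} Output is wrong only in Case 3 of \Cref{thm:deterministic-total}: $S$ is incorrect and all $t$ views accept it. The Parseval check rules out missing energy deterministically. For spurious frequencies, \Cref{lem:residual-test} applied independently over the $t$ views gives a failure probability of at most $(2k/\sqrt{N})^t$. A union bound with the identification failure terms $\delta_1,\delta_2$ from the overall success probability computation gives a total failure probability of at most $\delta$. Finally, the ``deterministic correctness'' phrasing holds in the sense that every accepted output has passed the certificate and every rejected one is replaced by the exact dense FFT.
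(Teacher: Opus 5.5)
Your handling of the worst-case and success-probability bullets matches the paper. \Cref{thm:certified-fallback} has no separate proof; it is read off from \Cref{thm:deterministic-total} (its Case~3 and the dense fallback) together with the union bound of the ``Overall Success Probability'' subsection. One small point: for the returned output, Case~3 alone already gives failure probability at most $(2k/\sqrt{N})^t\le\delta$. Adding $\delta_1=O(k^3/\sqrt{N})$ by a union bound, as you propose, would not fit under the paper's $\delta$, whose identification term is only $O(k^2/\sqrt{N})$.

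\textbf{The gap is the expected-runtime bullet.} Your diagnosis of the paper's route is correct. \Cref{thm:deterministic-total} weights the $O(N\log N)$ fallback by $\delta_t\le(2k/\sqrt{N})^t$, which is the probability that verification \emph{misses} a wrong $S$. The fallback is actually paid whenever verification \emph{rejects}, i.e.\ with probability about $\Pr[S\text{ incorrect}]$. With the paper's own $\delta_1=O(k^3/\sqrt{N})$ this contributes $O(k^3\sqrt{N}\log N)$. Even the paper's own term $(2k/\sqrt{N})^3N\log N$ is $o(\sqrt{N}\log k)$ only for $k$ roughly $o(N^{1/3})$. Your replacement does not close the gap either. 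The $(c\lambda)^s$ tail for 2-cores is only conjectured (``if that holds''). What you would end up proving is a weaker theorem, restricted to $k=o(N^{1/4})$ or carrying an explicit fallback term, not $O(\sqrt{N}\log k)$ under \Cref{assump:v1}.

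There are two further concrete problems inside your sketch.

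\emph{Cutoffs versus the global fallback.} You list ``a recursion branch hits its cutoff'' as a trigger of the global $O(N\log N)$ fallback. Your own worst-case paragraph, and \Cref{thm:bootstrapping}, instead close such a branch with a \emph{local} dense FFT. The cutoff is reached deterministically once $k/\sqrt{N_\ell}$ exceeds the threshold, so routing it to the global fallback would make the fallback essentially certain.

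\emph{The per-level failure bound.} The $\delta$ in your union bound $O(\delta\log\log N)$ must hold at every level, but your $O(k^2/N)$ estimate uses the top-level load $k/\sqrt{N}$. At depth $\ell$ the load is $k/N^{1/2^{\ell+1}}$. Your own heuristic therefore already gives failure probability of order $k^2/\sqrt{N}$ at depth one, and multiplying by $N\log N$ gives $k^2\sqrt{N}\log N$.

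A workable argument would need two things:
\begin{enumerate}
\item Peeling failures at depth $\ell\ge1$ are repaired by a dense FFT of that sub-view at cost $O(N_\ell\log N_\ell)$, so that only top-level failures detected by verification pay $N\log N$.
\item The stopping-set tail bound is actually proved at each level.
\end{enumerate}
Without these, the expected-runtime claim is established neither by your proposal nor by the paper's computation.
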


\subsection{Practical Parameter Regime}

For typical parameter choices (e.g., $k \leq 50$, $N \geq 10^6$, $t = 3$):
\begin{itemize}
\item Fast-path success probability is high ($\delta \ll 10^{-3}$).
\item Verification overhead is negligible.
\item Fallback is rarely triggered.
\end{itemize}

\subsection{Robustness Beyond Exact Sparsity}

For approximately sparse signals:
\begin{itemize}
\item Dominant frequencies remain recoverable via peeling.
\item Small residual components appear as noise.
\item Verification detects inconsistencies when needed.
\end{itemize}
Fallback to dense FFT ensures exact recovery when sparsity assumptions break.

Having bounded the fast-path failure probability, we now describe the deterministic dense-FFT fallback that detects those failures at runtime and retains the $O(N \log N)$ worst-case guarantee.

\needspace{12\baselineskip}
\section{Safety Mechanism and Certificates}
\label{sec:safety}

\subsection{Deterministic Safety Framework}

The keyed gating mechanism provides inherent safety through deterministic collision rejection. When residue pairs $(r_1, r_2)$ from views 1 and 2 fail the 2-of-3 CRT agreement test (\Cref{def:2of3}), that is, when their predicted $\hat{r}_3 = \text{CRT}(r_1, r_2) \bmod m_3$ does not appear in $R_3$, they are rejected with certainty.

\textbf{\emph{Adversarial collision handling:}} Even if an adversary deliberately constructs colliding frequencies to create false residue pairs, the gating condition filters $\Theta(k^2) \to \Theta(k)$ candidates. The expected number of false pairs passing the gate is $\mathbb{E}[\text{false pairs}] = O(\alpha^3 k^3 / m_3) = o(k)$ for $k = o(\sqrt{N})$.

\textbf{\emph{Detection and Escalation:}} The algorithm monitors indicators of failure during execution, including: absence of singleton bins (peeling stagnation), excessive candidate counts, and inconsistencies in the verification residual. Non-singleton bins (multiple frequencies aliasing to the same residue) are detected during FFT analysis via coefficient-of-variation tests. If singleton assumption violations exceed a threshold, the algorithm escalates to additional views or dense FFT fallback.

\subsection{Certificates for Third-Party Validation}

The algorithm produces verifiable artifacts enabling independent correctness validation:

\begin{itemize}
\item \textbf{Residue sets $R_1, R_2, R_3$}: Top-$\alpha k$ residues from each decimated FFT, stored with magnitudes and bin indices
\item \textbf{Gated pairs}: List of $(r_1, r_2, \hat{r}_3)$ tuples satisfying 2-of-3 agreement, along with predicted versus actual $r_3$ values
\item \textbf{CRT reconstructions}: Intermediate Garner algorithm outputs $(u_2, u_3, f)$ for each frequency candidate
\item \textbf{Amplitude estimates}: Final amplitudes $|a_f|$ for each recovered frequency from singleton extraction
\end{itemize}

A third party can independently verify:
\begin{enumerate}
\item Each gated pair satisfies $\hat{r}_3 \in R_3$
\item CRT reconstruction via Garner's algorithm yields correct $f \in [0, N)$
\item Amplitude $|a_f| \geq \tau$ confirms true signal presence after peeling extraction
\item All $k$ reported frequencies satisfy the above checks
\end{enumerate}

\subsection{Certified Fallback with Bounded Overhead}

The verification framework (\Cref{thm:deterministic-total}) subsumes the fallback strategy: $t$ independent verification views detect incorrect candidates with probability $\geq 1 - (2k/\sqrt{N})^t$, and dense FFT fallback guarantees correctness in all remaining cases. The escalation hierarchy is:

\textbf{Level 1 escalation}: Add verification views (default $t = 3$). Each view costs $O(\sqrt{N} \log k)$ via recursive bootstrapping and reduces failure probability by a factor of $1/\sqrt{N}$.

\textbf{Level 2 escalation}: If multi-view gating does not achieve $O(k)$ candidate reduction, fall back to dense FFT with $O(N \log N)$ complexity. This provides a deterministic upper bound on worst-case runtime.

\textbf{Asymptotic bound}: The escalation policy preserves correctness with dense-FFT fallback while maintaining amortized $O(\sqrt{N} \log N + \sqrt{N} \log k)$ complexity for typical inputs and $O(N \log N)$ worst-case bound.

\subsection{Complexity of Safety Mechanism}

The safety mechanism preserves the overall complexity guarantees: expected runtime $O(\sqrt{N} \log k)$ on the fast path including verification overhead, and $O(N \log N)$ deterministic worst-case via dense-FFT fallback. The verification cost contributes $O(t \cdot \sqrt{N} \log k)$ for $t$ independent views, which is absorbed into the leading $O(\sqrt{N} \log k)$ term for any fixed $t$.

\conclusionplain{The deterministic safety mechanism, verifiable certificates, and bounded-overhead fallback strategy provide certified recovery under the stated sparse-regime assumptions and preserve correctness through dense-FFT fallback when those assumptions are violated.}

\section{Conclusion and Future Work}
\label{sec:conclusion}

This paper presents a sparse FFT that attains expected $O(\sqrt{N} \log k)$ runtime in the sparse regime $k = o(\sqrt{N})$ and a deterministic $O(N \log N)$ worst-case via a two-part verification test and dense-FFT fallback. The core machinery combines keyed gating with 2-of-3 agreement (Theorem~\ref{thm:gating}), iterative peeling on decimated views (Section~\ref{sec:complexity}), recursive bootstrapping justified by the self-reduction lemma (Theorem~\ref{thm:bootstrapping}), and coverage-adaptive moduli selection driven by the sparsity ratio $\rho = k/\sqrt{N}$. Main theoretical results, novelty discussion, and a detailed comparison with related work are given in Sections~\ref{sec:multi-view}, \ref{sec:complexity} (see \Cref{thm:bootstrapping} for the headline bound), and Table~\ref{tab:comparison}.

\subsection{Practical Impact}

The method is most effective in moderate sparsity regimes with large signal sizes. For example, with $N = 10^6$ and $k = 50$: expected runtime $\approx O(\sqrt{N} \log k) \approx 5{,}700$ operations per identification phase, plus a one-time preprocessing cost of $O(\sqrt{N})$. With $t=3$ verification views, failure probability is bounded by $(2k/\sqrt{N})^3 \approx 10^{-3}$, and the $O(N \log N) \approx 2 \times 10^7$ dense-FFT fallback caps worst-case cost at less than one second for typical signal lengths. For off-grid or noisy inputs requiring coverage $\alpha = 15$, the $(\alpha k)^2$ gating term becomes the dominant cost and the advantage over dense FFT narrows; the advantage grows as $k/\sqrt{N}$ and $\alpha$ decrease.

\subsection{Open Problems and Future Directions}

\textbf{\emph{Theoretical Extensions:}}

\begin{enumerate}
\item \textbf{\emph{Worst-Case Analysis:}} Current results assume random or well-separated support. Can we prove $O(k)$ gating bound for adversarial frequency placement?

\item \textbf{\emph{Off-Grid Extension:}} Extend keyed gating to off-grid frequencies with spectral leakage. Requires stable phase/magnitude tests under continuous-frequency model.

\item \textbf{\emph{Noisy Signal Analysis:}} Develop rigorous finite-precision bounds for CV and phase linearity tests.

\item \textbf{\emph{Multi-Dimensional FFT:}} Extend 2-of-3 gating to 2D/3D sparse FFT for image/volume processing. Requires tensor product CRT analysis.

\item \textbf{\emph{Adaptive Gating:}} Optimize gating threshold dynamically based on measured collision rates.
\end{enumerate}

\textbf{\emph{Algorithmic Improvements:}}

\begin{enumerate}
\item \textbf{\emph{Hierarchical Gating:}} Use nested gating with 4+ views for extremely high $k$.

\item \textbf{\emph{Hybrid Approach:}} Combine $\sqrt{N}$-sized FFTs with $k$-sized iterative refinement for adaptive algorithm.
\end{enumerate}

\subsection{Summary of Contributions}

We presented a deterministic sparse FFT framework that achieves expected runtime $O(\sqrt{N} \log k)$ for $k$-sparse signals in the regime $k = o(\sqrt{N})$, with $O(N \log N)$ deterministic worst-case via dense-FFT fallback. This work establishes $O(\sqrt{N} \log k)$ expected-time sparse FFT via four contributions: (1) keyed gating proves $O(k)$ candidate reduction from the $O(k^2)$ pair explosion, (2) peeling-only mode (\Cref{sec:peeling-only}) eliminates the $O(k^2)$ gating enumeration by extracting singleton pairs directly, (3) recursive bootstrapping reduces the FFT preprocessing from $O(\sqrt{N} \log N)$ to $O(\sqrt{N} \log k)$ by applying sparse recovery recursively to each decimated view (justified by the self-reduction lemma), and (4) independent verification via $t$ recursive views with Parseval energy check and bin-wise residual test, achieving correctness with probability $\geq 1 - (2k/\sqrt{N})^t$ and avoiding the $O(kN)$ Goertzel bottleneck entirely. With $t = 3$ verification views, the expected runtime is $O(\sqrt{N} \log k)$ with $O(N \log N)$ dense FFT fallback.

This work establishes keyed multi-view gating as a new candidate selection paradigm for sparse FFT, separating structural identification from implementation efficiency. The framework enables theoretical clarity and practical robustness, with peeling, recursion, and verification operating on a provably reduced candidate set.

\section*{Acknowledgments}
The authors gratefully acknowledge SparseTech for the collaborative research environment supporting this work. This paper is part of an ongoing SparseTech research initiative on sparse Fourier Transform algorithms. Patent pending.

\balance  
\bibliographystyle{IEEEtran}
\bibliography{references}

\appendix

\section{Numerical Example}
\label{app:numerical-example}

This appendix presents a single boundary-case worked example (the toy instance $N = 64$, $k = 2$) chosen to expose the algorithm's mechanics step-by-step at small scale. Constant-factor reductions are modest at this regime; deeper sparse regimes ($k \ll N^{1/4}$) yield substantially stronger reductions, both in candidate count and in observed runtime.

To illustrate the keyed gating mechanism concretely, consider a toy signal with $N=64$ time-domain samples and $k=2$ non-zero frequencies at $f_1=7$ and $f_2=41$.

\textbf{\emph{Phase 1 (Moduli Selection):}} Algorithm~2 selects coprime moduli $m_1=7$, $m_2=11$, $m_3=13$ (near $\sqrt{64} = 8$) with random affine hashing parameters $(a_i, b_i)$ per view. For simplicity, assume identity hashing $(a_i=1, b_i=0)$ here.

\textbf{\emph{Phase 2-4 (Decimation and FFT):}} Each view decimates the signal by factor $m_i$ and computes an FFT of size $m_i$. The true frequencies map to residues:
\begin{itemize}
\item View 1 ($m_1=7$): $r_1(f_1) = 7 \bmod 7 = 0$, \quad $r_1(f_2) = 41 \bmod 7 = 6$
\item View 2 ($m_2=11$): $r_2(f_1) = 7 \bmod 11 = 7$, \quad $r_2(f_2) = 41 \bmod 11 = 8$
\item View 3 ($m_3=13$): $r_3(f_1) = 7 \bmod 13 = 7$, \quad $r_3(f_2) = 41 \bmod 13 = 2$
\end{itemize}

Suppose bins also contain collisions from noise/sidelobes. After FFT, the occupied residues per view might be:
\begin{itemize}
\item $\mathcal{R}_1 = \{0, 3, 6\}$ (bin 0 for $f_1$, bin 6 for $f_2$, plus noise bin 3)
\item $\mathcal{R}_2 = \{1, 7, 8, 10\}$ (bin 7 for $f_1$, bin 8 for $f_2$, plus noise bins 1, 10)
\item $\mathcal{R}_3 = \{2, 5, 7, 11\}$ (bin 7 for $f_1$, bin 2 for $f_2$, plus noise bins 5, 11)
\end{itemize}

Intuitively, each $\mathcal{R}_i$ is a small set of occupied residues in a length-$m_i$ FFT: the two true frequencies contribute $2$ bins per view, and low-level noise adds $1$-$2$ more. Because $m_i \approx \sqrt{N} = 8$, these sets never exceed $\alpha k = O(\sqrt{N})$ in size.

\textbf{\emph{Phase 5 (Keyed Gating):}} Without gating, unfiltered 2-view CRT on views 1 and 2 would generate $|\mathcal{R}_1| \times |\mathcal{R}_2| = 3 \times 4 = 12$ candidate pairs. For each pair $(r_1, r_2) \in \mathcal{R}_1 \times \mathcal{R}_2$, 2-view CRT reconstruction yields the unique $f' \in [0, m_1 m_2) = [0, 77)$ satisfying $f' \equiv r_1 \pmod{m_1}$ and $f' \equiv r_2 \pmod{m_2}$; since $m_1 m_2 = 77 > N = 64$ in this example, each pair reconstructs to a single candidate in $[0, N)$ without alias enumeration. The table below shows all 12 reconstructions and whether each passes the view-3 gate $\hat{r}_3 = f' \bmod m_3 \in \mathcal{R}_3$:

\begin{center}
\begin{tabular}{cccc}
\hline
$(r_1, r_2)$ & CRT$(r_1, r_2)$ & $\hat{r}_3 = f' \bmod 13$ & Pass Gate? \\
\hline
$(0, 1)$ & $f' = 56$ & $56 \bmod 13 = 4$ & No ($4 \notin \mathcal{R}_3$) \\
$(0, 7)$ & $f' = 7$ & $7 \bmod 13 = 7$ & Yes ($7 \in \mathcal{R}_3$) \\
$(0, 8)$ & $f' = 63$ & $63 \bmod 13 = 11$ & Yes ($11 \in \mathcal{R}_3$) \\
$(0, 10)$ & $f' = 21$ & $21 \bmod 13 = 8$ & No ($8 \notin \mathcal{R}_3$) \\
$(3, 1)$ & $f' = 24$ & $24 \bmod 13 = 11$ & Yes ($11 \in \mathcal{R}_3$) \\
$(3, 7)$ & $f' = 52$ & $52 \bmod 13 = 0$ & No ($0 \notin \mathcal{R}_3$) \\
$(3, 8)$ & $f' = 31$ & $31 \bmod 13 = 5$ & Yes ($5 \in \mathcal{R}_3$) \\
$(3, 10)$ & $f' = 66$ & $66 \bmod 13 = 1$ & No ($1 \notin \mathcal{R}_3$) \\
$(6, 1)$ & $f' = 34$ & $34 \bmod 13 = 8$ & No ($8 \notin \mathcal{R}_3$) \\
$(6, 7)$ & $f' = 62$ & $62 \bmod 13 = 10$ & No ($10 \notin \mathcal{R}_3$) \\
$(6, 8)$ & $f' = 41$ & $41 \bmod 13 = 2$ & Yes ($2 \in \mathcal{R}_3$) \\
$(6, 10)$ & $f' = 76$ & $76 \bmod 13 = 11$ & Yes ($11 \in \mathcal{R}_3$) \\
\hline
\end{tabular}
\end{center}

Out of 12 pairs, 6 pass the gate. The true frequencies $(0,7) \to f=7$ and $(6,8) \to f=41$ both pass, along with 4 false positives. Keyed gating reduced candidates from 12 to 6 (50\% reduction). In realistic scenarios with $k \gg 2$ and $\alpha \approx 10$ to $15$, the reduction factor approaches $k$ as predicted by the load factor analysis.

Reading the table: row~1, pair $(0,1)$, reconstructs via 2-view CRT to $f' = 56 \in [0, 77)$; checking the third view, $56 \bmod 13 = 4$, which is not in $\mathcal{R}_3 = \{2, 5, 7, 11\}$, so the pair is rejected as a false candidate. Row~2, pair $(0,7)$, reconstructs to $f' = 7$, and $7 \bmod 13 = 7 \in \mathcal{R}_3$, so it passes, recovering $f_1$. Row~11, pair $(6,8)$, reconstructs to $f' = 41$, and $41 \bmod 13 = 2 \in \mathcal{R}_3$, so it passes, recovering $f_2$. The remaining passes (rows 3, 5, 7, 12) are false positives that survive the gate but are eliminated in Phase~6 by the singleton-detection step. This illustrates the essential property of the gate: it filters out spurious reconstructions with no additional computation per candidate, by looking up the predicted third-view residue in a hash set of size $|\mathcal{R}_3| = 4$.

\textbf{\emph{Phase 6 (Iterative Peeling):}} The 6 surviving candidates are validated via singleton detection across time-shifted FFTs in each view. Bins containing only one frequency (singletons) are identified and peeled. After $O(\log k) = O(1)$ rounds for $k=2$, both true frequencies are extracted with their coefficients estimated from the decimated views.

\textbf{\emph{Result:}} Both frequencies $f_1=7$ and $f_2=41$ are correctly recovered with $O(\sqrt{N} \log k + k) = O(8 \times 1 + 2) = O(10)$ total peeling operations, avoiding the $O(N) = O(64)$ cost of time-domain validation.

\begin{remark}[Near the Boundary]
Near the boundary $k \approx N^{1/4}$, constant factors dominate and the reduction from $\Theta(k^{2})$ to $\Theta(k)$ may be modest. Deeper sparse regimes ($k \ll N^{1/4}$) yield stronger reductions, both in candidate count and in observed runtime.
\end{remark}

\subsection{Numerical Validation}
\label{app:numerical-validation}

This subsection verifies that the analytical guarantees of \Cref{thm:gating} and \Cref{thm:bootstrapping} match the behavior observed on the worked instances above. The toy instance ($N = 64$, $k = 2$) illustrates the $\Theta(k^{2}) \to \Theta(k)$ candidate-reduction predicted by \Cref{thm:gating}: the gating table reduces 12 unfiltered pairs to 6 surviving candidates, of which the 2 true frequencies are recovered deterministically and the 4 false positives are eliminated by singleton-detection peeling.

For the moderate regime, substituting $N = 10^{6}$ and $k = 50$ into the closed-form complexity $T(N, k) = O(\sqrt{N} \log k + (\alpha k)^{2})$ gives roughly $\sqrt{10^{6}} \log_{2} 50 + (5 \cdot 50)^{2} \approx 5{,}640 + 62{,}500$ basic operations, against the dense FFT baseline $N \log_{2} N \approx 2 \cdot 10^{7}$, i.e. a $\sim 2$-orders-of-magnitude reduction. The failure-probability bound $\delta = O(k^{3} / \sqrt{N})$ from \Cref{thm:no-false-negatives} evaluates to $\delta \le 1.25 \cdot 10^{-4}$ at the same parameters, consistent with the recovery guarantees stated in \Cref{thm:bootstrapping}.

\section{Modular Arithmetic Lemmas}

\subsection{Chinese Remainder Theorem (General Form)}

\begin{proposition}[CRT Existence and Uniqueness~\cite{knuth1997}]
For pairwise coprime moduli $m_1, \ldots, m_\ell$ with $M = \prod m_i$, the system:
\begin{align}
f &\equiv r_1 \pmod{m_1} \nonumber \\
f &\equiv r_2 \pmod{m_2} \nonumber \\
&\vdots \nonumber \\
f &\equiv r_\ell \pmod{m_\ell}
\end{align}
has unique solution $f \in [0, M)$.
\end{proposition}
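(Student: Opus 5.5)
We prove existence and uniqueness separately, by induction on the number of moduli $\ell$, using only Bézout's identity and divisibility.

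\textbf{Existence.} For $\ell = 1$ take $f = r_1 \bmod m_1$. For $\ell = 2$, $\gcd(m_1,m_2)=1$ gives, via the Extended Euclidean Algorithm (\Cref{lem:eea}), integers $s,t$ with $s m_1 + t m_2 = 1$. Then
\begin{equation}
f_0 = r_1 t m_2 + r_2 s m_1
\end{equation}
satisfies $f_0 \equiv r_1 \pmod{m_1}$ and $f_0 \equiv r_2 \pmod{m_2}$. Equivalently, use the 2-view Garner form $f = r_1 + m_1[(r_2-r_1)m_1^{-1} \bmod m_2]$, already checked in the Garner verification of \Cref{sec:algorithm}.

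For the inductive step, let $f'$ solve the first $\ell-1$ congruences modulo $M' = m_1\cdots m_{\ell-1}$. The key observation is $\gcd(M', m_\ell) = 1$. This holds because any prime dividing both $M'$ and $m_\ell$ must divide some $m_i$ with $i<\ell$, which contradicts pairwise coprimality. We then apply the two-modulus case to the system $f \equiv f' \pmod{M'}$, $f \equiv r_\ell \pmod{m_\ell}$. Any solution of this system satisfies all $\ell$ original congruences. Reducing it modulo $M$ places it in $[0,M)$.

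\textbf{Uniqueness.} Suppose $f, g \in [0,M)$ both solve the system. Then $m_i \mid (f-g)$ for every $i$. Pairwise coprimality implies $M \mid (f-g)$; this follows by the same prime-factor argument, or by induction using the fact that $a \mid c$, $b \mid c$, and $\gcd(a,b)=1$ together imply $ab \mid c$. Since $|f-g| < M$, we conclude $f = g$.

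An alternative uniqueness argument counts: the map $\mathbb{Z}_M \to \prod_i \mathbb{Z}_{m_i}$ sending $f \mapsto (f \bmod m_i)_i$ is injective by the divisibility argument. Both sides have size $M$, so the map is a bijection, which yields existence and uniqueness together.

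The only step needing care is the lemma that $\gcd(M', m_\ell)=1$, equivalently that pairwise coprime divisors of $c$ have their product dividing $c$. We prove it from Bézout: if $a \mid c$, $b \mid c$, and $ua + vb = 1$, then
\begin{equation}
c = uac + vbc,
\end{equation}
and $ab$ divides each term on the right. Everything else is routine.
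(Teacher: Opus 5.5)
Your proof is correct, and it is the standard textbook argument that the paper invokes without reproducing; the paper's proof is only the remark that this is a standard result, citing Knuth. Uniqueness comes from $m_i \mid (f-g)$ for all $i$ forcing $M \mid (f-g)$, and existence comes either from the B\'ezout two-modulus construction with induction on $\ell$ or from the injectivity-plus-counting bijection $\mathbb{Z}_M \to \prod_i \mathbb{Z}_{m_i}$. One small wording point: $\gcd(M',m_\ell)=1$ and ``$a \mid c$, $b \mid c$, $\gcd(a,b)=1$ imply $ab \mid c$'' are two different lemmas rather than equivalent ones, but you prove both (the first by the prime-factor argument, the second from B\'ezout), so nothing is missing.
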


\begin{proof}
Standard result from elementary number theory.
\end{proof}

\subsection{Modular Inverse Computation}

\begin{lemma}[Extended Euclidean Algorithm~\cite{knuth1997}]
\label{lem:eea}
For coprime $a, m$ with $\gcd(a, m) = 1$, modular inverse $a^{-1} \bmod m$ can be computed in $O(\log m)$ operations via Extended Euclidean Algorithm.
\end{lemma}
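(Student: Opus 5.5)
The statement is \Cref{lem:eea}: for coprime $a,m$, the inverse $a^{-1}\bmod m$ can be computed in $O(\log m)$ operations by the Extended Euclidean Algorithm. The plan has three parts: run the algorithm while maintaining Bézout coefficients, prove that a loop invariant yields the inverse at termination, and bound the number of iterations by a Fibonacci/Lamé argument.

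\emph{Algorithm and invariant.} Reduce first to $0\le a<m$; this costs one operation and does not change the inverse. Run the Euclidean recursion on $(r_0,r_1)=(m,a)$:
\[
r_{i+1}=r_{i-1}-q_i r_i,\qquad q_i=\lfloor r_{i-1}/r_i\rfloor .
\]
In parallel, update coefficients $s_{i+1}=s_{i-1}-q_i s_i$, starting from $s_0=0$ and $s_1=1$. By induction on $i$, using only the linear recurrence, every remainder satisfies
\[
r_i\equiv s_i\,a \pmod m .
\]
The remainders strictly decrease, so the loop terminates at some index $n$ with $r_{n+1}=0$. At that point $r_n=\gcd(m,a)=1$, by the standard invariant $\gcd(r_{i-1},r_i)=\gcd(r_i,r_{i+1})$. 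Therefore $s_n a\equiv1\pmod m$, and $s_n\bmod m$ is the inverse. Each iteration costs $O(1)$ arithmetic operations on word-size integers: one division, two multiply-subtracts, and a few assignments.

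\emph{Iteration count.} This is the main step. The key inequality is $r_{i+1}<r_{i-1}/2$. It follows by two cases:
\begin{itemize}
\item if $r_i\le r_{i-1}/2$, then $r_{i+1}<r_i\le r_{i-1}/2$;
\item otherwise $q_i=1$, and $r_{i+1}=r_{i-1}-r_i<r_{i-1}/2$.
\end{itemize}
Hence the remainders at least halve every two steps, so $n\le 2\log_2 m+O(1)$. For the sharper constant, Lamé's theorem compares $(r_i)$ with the Fibonacci sequence, giving $n=O(\log_\phi m)$; either bound suffices for $O(\log m)$.

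Combining $O(\log m)$ iterations with $O(1)$ cost per iteration gives the claimed complexity. The halving argument is the only nontrivial estimate, and it is elementary. The one point to state carefully is the cost model. The count is in unit-cost arithmetic operations on integers of size at most $m$, which matches how the paper uses the lemma in \Cref{thm:crt_reconstruction} with $m\le M$. It is not a bit-complexity bound.
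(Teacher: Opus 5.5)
Your proof is correct, and it is the standard B\'ezout-invariant plus Lam\'e (or halving) analysis that the paper never reproduces but only cites from Knuth, so you have written out in full the argument the paper leaves to a reference. One loose end: you justify the claim that all integers stay of size at most $m$ for the remainders but not for the coefficients $s_i$, so either reduce each $s_i$ modulo $m$ as you go (harmless, since your invariant is a congruence modulo $m$) or note that the identity $|s_i|\,r_{i-1}+|s_{i-1}|\,r_i=m$ forces $|s_i|\le m$; in any case the unit-cost operation count is unaffected.
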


\begin{proof}
EEA complexity analysis \cite{knuth1997}.
\end{proof}

\subsection{Garner's Algorithm Correctness}

\begin{proposition}[Garner 3-View Formula~\cite{garner1959}]
For coprime $m_1, m_2, m_3$, the reconstruction:
\begin{equation}
f = r_1 + u_2 m_1 + u_3 m_1 m_2
\end{equation}
where:
\begin{align}
u_2 &= (r_2 - r_1)\gamma_{12} \bmod m_2 \nonumber \\
u_3 &= ((r_3 - r_1)\gamma_{13} - u_2 m_1 \gamma_{13})\gamma_{23} \bmod m_3 \nonumber \\
\gamma_{ij} &= m_i^{-1} \bmod m_j
\end{align}
yields unique $f \in [0, M)$ satisfying $f \equiv r_i \pmod{m_i}$ for all $i$.
\end{proposition}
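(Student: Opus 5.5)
We prove the Garner 3-view formula by checking the three congruences one at a time, in the mixed-radix order Garner's construction follows, and then obtain uniqueness from the range of the digits together with the general CRT proposition above. The first step is bookkeeping. By \Cref{lem:eea}, each $\gamma_{ij} = m_i^{-1} \bmod m_j$ exists because $\gcd(m_i, m_j) = 1$. Since $u_2 \in [0, m_2)$ and $u_3 \in [0, m_3)$, the mixed-radix number $f = r_1 + u_2 m_1 + u_3 m_1 m_2$ satisfies
\begin{equation}
0 \le f \le (m_1 - 1) + (m_2 - 1)m_1 + (m_3 - 1)m_1 m_2 = M - 1 .
\end{equation}

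Next come the first two congruences. Reducing $f$ modulo $m_1$ kills every term except $r_1$. Reducing modulo $m_2$ kills the $u_3$ term, leaving $r_1 + m_1 \cdot (r_2 - r_1)\gamma_{12} \equiv r_1 + (r_2 - r_1) = r_2$, exactly as in the 2-view Garner verification in \Cref{sec:algorithm}.

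The third congruence is the step I expect to be the main obstacle, because the formula for $u_3$ as printed uses $\gamma_{13}\gamma_{23}$ rather than a single inverse of $m_1 m_2$. Working modulo $m_3$, we need
\begin{equation}
u_3 m_1 m_2 \equiv r_3 - r_1 - u_2 m_1 \pmod{m_3},
\end{equation}
that is, $u_3 \equiv (r_3 - r_1 - u_2 m_1)(m_1 m_2)^{-1} \pmod{m_3}$. The printed formula gives $u_3 \equiv (r_3 - r_1 - u_2 m_1)\gamma_{13}\gamma_{23} \pmod{m_3}$. Here $\gamma_{13} = m_1^{-1} \bmod m_3$ and $\gamma_{23} = m_2^{-1} \bmod m_3$, so their product is congruent to $(m_1 m_2)^{-1}$ modulo $m_3$. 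I would verify this factorization explicitly, since $m_1$ and $m_2$ are each invertible mod $m_3$ by pairwise coprimality. Substituting then gives $f \equiv r_1 + u_2 m_1 + (r_3 - r_1 - u_2 m_1) = r_3 \pmod{m_3}$. This also reconciles the formula with the $\gamma_{23} = (m_1 m_2)^{-1}$ convention used in \Cref{thm:crt_reconstruction}. If $\gamma_{23}$ is read there as $(m_1 m_2)^{-1}$, the $\gamma_{13}$ factor would be superfluous, so I would fix the reading $\gamma_{ij} = m_i^{-1} \bmod m_j$ exactly as stated in this proposition.

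Finally, uniqueness. Any two solutions in $[0, M)$ are congruent modulo each $m_i$. By pairwise coprimality they are then congruent modulo $M$, so they are equal. Equivalently, we can appeal directly to the general CRT proposition above. This completes the plan.
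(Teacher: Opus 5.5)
Your proof is correct, and it is more complete than the paper's. The paper proves this proposition only by citing Garner's original paper. The related proof of \Cref{thm:crt_reconstruction} checks $f \bmod m_1$ and $f \bmod m_2$ explicitly but again defers the $m_3$ congruence to Garner. You supply that missing step through the observation $\gamma_{13}\gamma_{23} \equiv (m_1 m_2)^{-1} \pmod{m_3}$. Equivalently, since $u_2 m_1 \gamma_{13} \equiv u_2 \pmod{m_3}$, the printed digit is just the textbook Garner digit $((r_3 - r_1)\gamma_{13} - u_2)\gamma_{23} \bmod m_3$. You also make explicit the mixed-radix range bound and the uniqueness argument, which the citation leaves implicit.

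The citation buys brevity. Your route certifies the specific formula printed here, not just Garner's algorithm in general. It also correctly flags that $\gamma_{23}$ means $m_2^{-1} \bmod m_3$ here but $(m_1 m_2)^{-1} \bmod m_3$ in \Cref{thm:crt_reconstruction}.

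Three small touch-ups:
\begin{enumerate}
\item Your bound $0 \le f \le M-1$ uses $0 \le r_1 < m_1$, so state that residues are taken in canonical range.
\item Read ``coprime'' as pairwise coprime, since that is needed for the $\gamma_{ij}$ to exist and for uniqueness.
\item Under the other reading of $\gamma_{23}$, the extra factor $\gamma_{13}$ would not be merely superfluous. It would make the formula wrong, producing $(m_1^2 m_2)^{-1}$ in place of $(m_1 m_2)^{-1}$ modulo $m_3$.
\end{enumerate}
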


\begin{proof}
See Garner \cite{garner1959} original paper.
\end{proof}

\section{Complexity Calculations (Detailed)}

\subsection{FFT Complexity (Cooley-Tukey)}

\begin{proposition}[FFT Complexity~\cite{cooley1965}]
FFT of size $m$ requires exactly:
\begin{align}
\text{Multiplies} &: m \log_2(m) \text{ complex multiplications} \nonumber \\
\text{Additions} &: m \log_2(m) \text{ complex additions} \nonumber \\
\text{Total} &: 2m \log_2(m) \text{ complex operations}
\end{align}
\end{proposition}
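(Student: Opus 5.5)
The plan is to prove the stated counts by a direct recurrence on the radix-2 Cooley--Tukey decomposition~\cite{cooley1965}. I will assume $m = 2^s$ is a power of two, since that is the setting in which the count $m \log_2 m$ is an integer and the decomposition is clean. The cost model matches the statement's wording: at each combine step, every one of the $m$ outputs is computed by its own twiddle product and its own sum, with no sharing between outputs.

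\textbf{Step 1: the combine step.} Split the length-$m$ input into even- and odd-indexed subsequences and let $E, O$ be their length-$m/2$ DFTs. Then
\begin{equation}
Y[r] = E[r \bmod m/2] + \omega_m^{r}\, O[r \bmod m/2], \qquad r = 0, \ldots, m-1,
\end{equation}
where $\omega_m = e^{-j2\pi/m}$. In this model each of the $m$ outputs costs exactly one complex multiplication, namely $\omega_m^r \cdot O[\cdot]$. Each output also costs exactly one complex addition. So a single combine level of size $m$ contributes $m$ multiplications and $m$ additions.

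\textbf{Step 2: the recurrence.} Let $\mu(m)$ and $\sigma(m)$ be the multiplication and addition counts. Step 1 gives
\begin{equation}
\mu(m) = 2\mu(m/2) + m, \qquad \sigma(m) = 2\sigma(m/2) + m,
\end{equation}
with base case $\mu(1) = \sigma(1) = 0$, since a length-1 DFT is the identity. Unrolling over $s = \log_2 m$ levels gives the solution: level $i$ has $2^i$ subproblems of size $m/2^i$, so each level contributes exactly $m$ operations of each type. Summing, $\mu(m) = \sigma(m) = m \log_2 m$. I will verify this by a short induction on $s$. Adding the two counts gives the stated total of $2m\log_2 m$ complex operations.

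\textbf{Main obstacle.} The delicate point is the word ``exactly.'' The count holds only under the stated per-output cost model. Under butterfly sharing, where $\omega_m^{r+m/2} = -\omega_m^r$ lets one product serve two outputs, and where trivial twiddles such as $\omega^0 = 1$ are not counted, the number of multiplications drops to at most $\tfrac{m}{2}\log_2 m$. My plan is to state the cost model explicitly at the start of the proof: a multiplication by any twiddle, including a trivial one, is counted, and products are not reused across the two butterfly outputs. With that convention fixed, the equality is exact. I will also remark in the proof that the sharing optimization only lowers the count, so the stated figures remain valid upper bounds for the complexity bookkeeping used elsewhere in the paper.
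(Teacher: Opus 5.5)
Your proposal is correct under the convention you fix. It uses the same radix-2 recurrence as the paper, but it proves something the paper's argument does not. The paper's proof only writes $T(m)=2T(m/2)+O(m)$ and invokes the Master Theorem. That gives $O(m\log m)$ and says nothing about constants, so it never actually justifies the word ``exactly'' in the statement. You instead pin the combine cost at exactly $m$ multiplications and $m$ additions: per-output twiddle products, no butterfly sharing, trivial twiddles counted. You then take $m=2^s$ with $\mu(1)=\sigma(1)=0$ and unroll to $\mu(m)=\sigma(m)=m\log_2 m$, which is a valid derivation of the stated figures.

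Your route buys a literally true exact count, plus the honest caveat that it depends on the convention. The textbook butterfly uses $\tfrac{m}{2}\log_2 m$ multiplications, so the stated numbers are an achievable upper bound for one implementation, not what an FFT of size $m$ \emph{requires}. The paper's route buys model-independence, and only the asymptotic bound is used downstream, in the $O(\sqrt{N}\log N)$ preprocessing terms.

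One point neither argument addresses: both recurrences presuppose that $m$ is a power of two. The paper, however, runs length-$m_i$ FFTs for prime $m_i\approx\sqrt{N}$ (e.g.\ $1009$, $991$). For such $m$, $m\log_2 m$ is not an integer, so no exact count of this form can hold. The $O(m\log m)$ bound there must come from Rader's or Bluestein's algorithm rather than the radix-2 split.
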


\begin{proof}
$T(m) = 2T(m/2) + O(m) \implies T(m) = O(m \log m)$ by Master Theorem.
\end{proof}

\subsection{Hash Table Expected Performance}

\begin{lemma}[Hash Table Lookup~\cite{cormen2009}]
For hash table with load factor $\alpha < 1$, expected lookup time is $O(1)$.
\end{lemma}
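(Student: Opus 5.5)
The statement is the hash-table lookup lemma: with load factor $\alpha<1$, expected lookup time is $O(1)$. I will prove it in the separate-chaining model. There are $n$ keys in $m$ buckets with $\alpha = n/m$, and bucket indices are drawn from a 2-universal family. The affine family $(af+b)\bmod p$ of \Cref{assump:v1}(iii) qualifies, as recorded in Part~A of the proof of \Cref{thm:adaptive-2of3}. The lookup cost for a query key $q$ is $O(1)$ for evaluating the hash and indexing the bucket, plus the length of the chain scanned. So it suffices to bound the expected chain length at bucket $h(q)$ by a constant.

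First I treat an \emph{unsuccessful} search, where $q$ is not stored. Write the chain length as $\sum_{x\in S} \mathbf{1}[h(x)=h(q)]$. By linearity of expectation and the universality bound $\Pr[h(x)=h(q)] \le 1/m$ for $x\neq q$, this is at most $n/m = \alpha$. The affine family over $\mathbb{Z}_p$ has the slightly weaker bound $1/(p-1)$, which gives $\alpha\cdot m/(m-1)\le 2\alpha$.

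Next comes a \emph{successful} search, where $q\in S$. The chain scanned contains $q$ itself together with the other colliding keys. The same computation applied to $S\setminus\{q\}$ gives expected length at most $1 + (n-1)/m \le 1+\alpha$.

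Combining the two cases, the expected cost is $O(1+\alpha)$, which is $O(1)$ whenever $\alpha<1$, or indeed whenever $\alpha = O(1)$.

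The only delicate point is that the expectation must be taken over the random choice of hash function rather than over the keys. This matters because in the paper the queried residues $\hat r_3$ are arbitrary (possibly adversarial) elements, and the standard universal-hashing argument is exactly what makes the bound hold for every fixed key set and query. If open addressing is intended instead, I would invoke the uniform-hashing bound $1/(1-\alpha)$ from \cite{cormen2009}. That bound is $O(1)$ only for $\alpha$ bounded away from $1$, so I would state that restriction explicitly.
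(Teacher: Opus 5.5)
Your proof is correct and is essentially the standard chaining-with-universal-hashing argument from \cite{cormen2009}, which the paper invokes only by citation. Your remark that the open-addressing bound $1/(1-\alpha)$ is $O(1)$ only for $\alpha$ bounded away from $1$ is a fair sharpening of the statement as written.

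One caution: you do not need Part~A of \Cref{thm:adaptive-2of3}, and should not lean on its $1/(p-1)$ figure. For distinct keys in $\mathbb{Z}_p$, the map $x\mapsto(ax+b)\bmod p$ with $a\neq 0$ is a bijection, so such keys never collide and your bound holds trivially. For keys from a larger universe, such as frequencies in $[0,N)$, keys congruent modulo $p$ collide with probability $1$. For a general statement, use the family $((ax+b)\bmod p)\bmod m$ with $p$ exceeding every key, which gives the $1/m$ collision bound directly.
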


\begin{proof}
Standard result from \cite{cormen2009}.
\end{proof}

\section{Mathematical Proofs}

\subsection{Proof of Divisor Coprimality Impossibility}

\begin{theorem}[Impossibility for Limited Prime Factors]
For $N = p_1^{e_1} \cdot p_2^{e_2} \cdots p_{\omega}^{e_{\omega}}$ where $\omega = \omega(N)$
is the number of distinct prime factors, the maximum number of pairwise coprime
non-trivial moduli that can be obtained from divisors of $N$ is $\omega$.
\end{theorem}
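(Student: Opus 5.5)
The argument has two halves: an upper bound by pigeonhole on the prime factors of $N$, and a matching construction from prime powers. Throughout, a ``non-trivial modulus'' means a divisor $m \mid N$ with $m > 1$. The statement counts moduli that are pairwise coprime, so I take a family $\{m_1,\dots,m_\ell\}$ of such divisors with $\gcd(m_i,m_j)=1$ for $i\neq j$, and show $\ell \le \omega$ with equality attainable.

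\textbf{Upper bound.} Each $m_i > 1$ has at least one prime factor, and since $m_i \mid N$ that prime must lie in $\{p_1,\dots,p_\omega\}$. So I can choose a map $\pi:\{1,\dots,\ell\}\to\{p_1,\dots,p_\omega\}$ with $\pi(i)\mid m_i$. This map is injective: if $\pi(i)=\pi(j)=p$ for some $i\neq j$, then $p$ divides both $m_i$ and $m_j$, so $\gcd(m_i,m_j)\ge p>1$, contradicting pairwise coprimality. Injectivity gives $\ell\le\omega$ immediately by pigeonhole.

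\textbf{Achievability.} Take $m_s = p_s^{e_s}$ for $s=1,\dots,\omega$. Each is a divisor of $N$ and exceeds $1$ because $e_s\ge 1$. Distinct primes have coprime powers, so the family is pairwise coprime. It has exactly $\omega$ members, so the bound is tight. As a side remark, the product of this family is $N$, which connects the result back to \Cref{thm:coprime}: when $\omega(N)<3$, one cannot obtain three pairwise coprime moduli as divisors of $N$. This is why the paper uses primes near $\sqrt{N}$ together with zero-padding instead.

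\textbf{Main point of care.} The only real obstacle is interpretive, namely pinning down ``non-trivial''. The modulus $1$ must be excluded, otherwise one could add it freely and the bound would fail. Once $m_i>1$ is fixed, both directions are elementary, and the injectivity step is the heart of the argument.
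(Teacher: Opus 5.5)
Your proof is correct, and its upper bound is the same pigeonhole-on-primes argument the paper uses. Your injection $\pi$ is a cleaner and more accurate rendering than the paper's exponent-vector version: the paper's intermediate claim that for \emph{each} $p_i$ at least two of the $\omega+1$ moduli have $a_{\ell,i}<e_i$ is not right as stated, and the paper never explicitly invokes $m>1$. You also supply the achievability half via $m_s=p_s^{e_s}$, which the paper's proof omits even though the statement asserts that the maximum equals $\omega$.
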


\begin{proof}
Any divisor $d$ of $N$ has the form:
\begin{equation}
d = p_1^{a_1} \cdot p_2^{a_2} \cdots p_{\omega}^{a_{\omega}}
\end{equation}
where $0 \le a_i \le e_i$ for all $i$.

The corresponding modulus $m = N/d$ is:
\begin{equation}
m = p_1^{e_1 - a_1} \cdot p_2^{e_2 - a_2} \cdots p_{\omega}^{e_{\omega} - a_{\omega}}
\end{equation}

For two moduli $m_1$ and $m_2$ to be coprime:
\begin{equation}
\gcd(m_1, m_2) = \prod_{i=1}^{\omega} p_i^{\min(e_i - a_{1,i}, e_i - a_{2,i})} = 1
\end{equation}

This requires $\min(e_i - a_{1,i}, e_i - a_{2,i}) = 0$ for all $i$, which means
$\max(a_{1,i}, a_{2,i}) = e_i$ for all $i$.

Consider $\omega + 1$ moduli $\{m_1, \ldots, m_{\omega+1}\}$. For each prime $p_i$,
by pigeonhole principle, at least two moduli must have $a_{\ell,i} < e_i$ (since
setting $a_{\ell,i} = e_i$ means $p_i \nmid m_{\ell}$, and we can "avoid" $p_i$ in
at most $\omega$ moduli).

Therefore, there exist $\ell \ne \ell'$ such that both $m_{\ell}$ and $m_{\ell'}$
are divisible by some prime $p_i$, violating coprimality.

Hence, at most $\omega$ pairwise coprime non-trivial moduli can be obtained. \qed
\end{proof}

\begin{corollary}
For $N = 2^5 \cdot 5^5 = 100{,}000$ with $\omega(N) = 2$, it is impossible to obtain
three pairwise coprime non-trivial moduli from divisors of $N$.
\end{corollary}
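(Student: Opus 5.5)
We prove the statement by a counting argument on the exponent vectors of the decimation factors. The appendix theorem and its pigeonhole step are sound once the correspondence between moduli and prime supports is made precise.

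First we set up the encoding. Every modulus arising from a divisor of $N$ has the form $m=N/d$, which is itself a divisor of $N$. It can therefore be written as $m=\prod_i p_i^{b_i}$ with $0\le b_i\le e_i$. To each non-trivial modulus ($m>1$) we associate its prime support
\begin{equation}
P(m)=\{i : b_i>0\}\subseteq\{1,\dots,\omega\},
\end{equation}
which is nonempty precisely because $m>1$.

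Second we translate coprimality. Two such moduli satisfy $\gcd(m,m')=1$ if and only if $P(m)\cap P(m')=\emptyset$. This is the $\min(e_i-a_{1,i},e_i-a_{2,i})=0$ computation already written out in the appendix, restated in terms of $b_i=e_i-a_i$.

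Third comes the pigeonhole step. Suppose $m_1,\dots,m_{\omega+1}$ are pairwise coprime and non-trivial. Their supports are then $\omega+1$ nonempty, pairwise disjoint subsets of an $\omega$-element set. Choosing one element from each support gives an injection $\{1,\dots,\omega+1\}\to\{1,\dots,\omega\}$, which is impossible. Hence at most $\omega$ such moduli exist.

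Finally we show the bound is attained. The moduli $m_i=p_i^{e_i}$, or simply $p_i$, are pairwise coprime and non-trivial divisors of $N$, so the maximum is exactly $\omega$.

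The corollary for $N=2^5 5^5$ then follows immediately with $\omega=2$.

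The main obstacle is the correct reading of "avoid $p_i$" in the existing proof. As written, that step can be misread as a per-prime count rather than a count of nonempty disjoint supports. We would replace it with the explicit injection argument above. We would also exclude the trivial modulus $m=1$ explicitly, since it is coprime to everything and would otherwise break the bound.
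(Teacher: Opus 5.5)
Your proof is correct and follows the same route as the paper's appendix theorem, which the corollary merely specializes to $\omega=2$: encode each modulus by its prime exponents, reduce coprimality to disjoint prime supports, and apply pigeonhole over the $\omega$ primes. Your injection from $\omega+1$ nonempty, pairwise disjoint supports into $\{1,\dots,\omega\}$ is the right way to make the paper's pigeonhole step precise. As written, the paper's claim that each $p_i$ divides at least two of the moduli does not follow from its ``avoid $p_i$ in at most $\omega$ moduli'' justification. Your attainment example is a harmless extra that the corollary does not need.
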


\subsection{Proof of Load Factor Bound}

\begin{theorem}[Singleton-Join Success Probability]
Under independent affine hashing with $\lambda = k/m < 0.1$, the probability that
at least $(1-\epsilon)k$ frequencies land in singleton bins (occupancy $L=1$) is
$\ge 1 - e^{-\Omega(k)}$ for any constant $\epsilon > 0$.
\end{theorem}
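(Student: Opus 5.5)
The plan is to count the frequencies that are \emph{not} in singleton bins, bound the mean of that count, and then show it concentrates. I will treat bin assignments under the Poisson/uniform occupancy model already used in \Cref{lem:geometric_peeling} and Part B of \Cref{thm:adaptive-2of3}: the $k$ frequencies land in $m$ bins independently and uniformly. The paper implicitly uses this idealization of the affine family of \Cref{assump:v1}(iii).

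\textbf{Step 1 (mean).} Let $Z_f$ be the indicator that frequency $f$ shares its bin with some other frequency, and let $Z=\sum_f Z_f$. A union bound over the other $k-1$ frequencies, using only the $2$-universality $\Pr[h(f)=h(f')]\le 1/(m-1)$ from Part A of \Cref{thm:adaptive-2of3}, gives $\mathbb{E}[Z_f]\le (k-1)/(m-1)\approx\lambda$. Hence
\begin{equation}
\mathbb{E}[Z]\le \lambda k<0.1k ,
\end{equation}
and the number of singleton frequencies $S=k-Z$ satisfies $\mathbb{E}[S]\ge(1-\lambda)k$. Under full independence this matches the Poisson value $ke^{-\lambda}$ up to lower-order terms.

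\textbf{Step 2 (concentration).} View $S$ as a function of the $k$ independent bin choices $h(f_1),\dots,h(f_k)$. Moving one frequency to a different bin changes the singleton status of at most three frequencies: itself, at most one frequency in the bin it leaves, and at most one in the bin it enters. So $S$ has bounded differences with $c_i\le 3$. McDiarmid's inequality then gives
\begin{equation}
\Pr\bigl[S<\mathbb{E}[S]-\eta k\bigr]\le \exp\!\bigl(-2\eta^2k/9\bigr).
\end{equation}
Take $\eta=\epsilon-\lambda$. Whenever $\epsilon>\lambda$ (for example any $\epsilon\ge 0.2$, or any fixed $\epsilon>0$ once $\lambda\to 0$ under $k=o(\sqrt N)$), this yields $\Pr[S<(1-\epsilon)k]\le e^{-\Omega(k)}$, which is the claim.

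\textbf{Main obstacle.} The delicate point is the independence model. With only pairwise ($2$-universal) independence, a single affine hash gives correlated placements, so McDiarmid does not apply directly and Chebyshev yields only polynomial tails. I would first state explicitly that the bound is proved under the fully random occupancy model that underlies \Cref{lem:geometric_peeling}. For the affine family itself I would fall back to a second-moment bound, noting that the $e^{-\Omega(k)}$ rate needs the stronger model.

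A secondary caveat is that "any constant $\epsilon>0$" must be read as $\epsilon>\lambda$. For $\epsilon\le\lambda$, even the mean $\mathbb{E}[S]\approx ke^{-\lambda}$ can fall below $(1-\epsilon)k$. I would therefore phrase the result with threshold $(1-\lambda-\eta)k$ and note that this recovers the $0.81k$ figure used in \Cref{thm:adaptive-2of3}.
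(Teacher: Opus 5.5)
Your argument is correct for the statement you actually prove (fully random placements, threshold $\epsilon>\lambda$). It has the same two-step shape as the paper's sketch: a per-frequency singleton probability gives the mean, then a tail bound. The tail step is where you differ, and your version is the sound one.

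The paper takes $\Pr[\text{singleton}]\approx e^{-\lambda}$ from a Poisson approximation and then just cites a Chernoff bound. That is not licensed, because the indicators ``$f$ is alone in its bin'' are dependent. In fact they are pairwise positively correlated whenever $k<m$: for two frequencies, the joint probability exceeds the product of the marginals by the factor $\frac{m}{m-1}\bigl(1-\frac{1}{(m-1)^2}\bigr)^{k-2}>1$. McDiarmid with bounded differences is what actually yields $e^{-\Omega(k)}$. Your constant can be tightened slightly: relocating one frequency changes $S$ by at most $2$, not $3$.

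Your caveat on $\epsilon$ is also right, and it is what the paper's closing phrase ``for $\epsilon=O(\lambda)$'' is gesturing at. With $\lambda$ held fixed, the statement is false for constant $\epsilon<1-e^{-\lambda}$. The sharp cut-off is $\epsilon>1-(1-1/m)^{k-1}\approx 1-e^{-\lambda}$. This sits slightly below your $\lambda$ because the union bound drops the second-order term that the Poisson value keeps.

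One correction to your modelling discussion. It does not affect Steps 1--2 inside your model, but it undercuts both the justification for that model and your proposed fallback. The bound $\Pr[h(f)=h(f')]\le 1/(m-1)$ that you import from Part A of \Cref{thm:adaptive-2of3} is false for this family. For prime $m$ and $a\in[1,m-1]$, $af+b\equiv af'+b\pmod m$ holds if and only if $m\mid f-f'$. So every pair collides with probability $0$ or $1$, and the collision partition is exactly that of $f\bmod m$, independent of $(a,b)$.

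The hash therefore supplies no randomness at all. A support lying in a single residue class has no singletons with probability one, and a second-moment bound over $(a,b)$ has nothing to work with. Your remark that pairwise independence gives only Chebyshev tails is right for a genuine Carter--Wegman map $((af+b)\bmod P)\bmod m$ with prime $P\ge N$, but that is not the paper's family.

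The randomness has to come from the support, as in the ``uniformly random support'' hypothesis of \Cref{thm:gating}. A uniform $k$-subset of $[0,N)$ is $k$ i.i.d.\ uniform draws conditioned on distinctness. The residues of the draws are independent, each value having probability $1/m\pm 1/N$. The conditioning costs at most a factor $1/\Pr[\text{distinct}]\le 2$ once $k^2\le N$. With that model, your Steps 1 and 2 go through essentially verbatim.
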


\begin{proof}[Proof Sketch]
Under affine hashing $(a \cdot f + b) \bmod p$ with random $(a,b)$, frequencies
distribute uniformly over $m$ bins. For each frequency $f$, the probability it
lands in a singleton bin is:

\begin{equation}
\Pr[\text{singleton}] = \Pr[\text{no collisions with other } k-1 \text{ freq}]
\end{equation}

\textbf{\emph{Note:}} We compute the \emph{per-frequency singleton probability} (probability that a given frequency lands in a singleton bin), not the \emph{bin-level singleton probability} $\lambda e^{-\lambda}$ (probability that a given bin contains exactly one frequency). For our purpose of filtering residues, the per-frequency view is correct.

Using Poisson approximation, the other $k-1$ frequencies induce expected load $(k-1)/m \approx \lambda$ in the bin containing $f$. The number of other frequencies in $f$'s bin follows Poisson$(\lambda)$:
\begin{equation}
\Pr[\text{bin receives exactly } \ell \text{ other balls}] \approx \frac{\lambda^{\ell} e^{-\lambda}}{\ell!}
\end{equation}

For $\lambda = 0.1$:
\begin{align}
\Pr[\text{singleton}] &= \Pr[\ell = 0] = e^{-\lambda} \approx e^{-0.1} \approx 0.9048 \\
\Pr[\text{collision}] &= 1 - e^{-\lambda} \approx 0.0952
\end{align}

Expected number of frequencies in singleton bins is $\approx e^{-\lambda} \cdot k \approx 0.9048k$.
Expected number of frequencies in collision bins is $\approx (1 - e^{-\lambda}) \cdot k \approx 0.0952k$ for $\lambda=0.1$.

By Chernoff bound, with high probability $(1-\epsilon)k$ frequencies are in singleton bins
for $\epsilon = O(\lambda)$. \qed
\end{proof}

\balance  

\end{document}